\newtheorem{theorem}{Theorem}
\newtheorem{corollary}{Corollary}
\newtheorem{lemma}{Lemma}
\newtheorem{proposition}{Proposition}
\newtheorem{remark}{Remark}
\newcommand\blfootnote[1]{%
  \begingroup
  \renewcommand\thefootnote{}\footnote{#1}%
  \addtocounter{footnote}{-1}%
  \endgroup
}
\begin{document}

\title{Feedback Design for Multi-Antenna $K$-Tier Heterogeneous Downlink Cellular Networks}

\author{Jeonghun~Park, Namyoon~Lee, and Robert~W.~Heath Jr.
\thanks{J. Park and R. W. Heath Jr. are with the Wireless Networking and Communication Group (WNCG), Department of Electrical and Computer Engineering, 
The University of Texas at Austin, TX 78701, USA. (E-mail: $\left\{\right.$jeonghun, rheath$\left\}\right.$@utexas.edu)

N. Lee is with Dept. of Electrical Engineering, POSTECH, 37673, Republic of Korea. (Email:nylee@postech.ac.kr).
}}

\maketitle \setcounter{page}{1} 

\begin{abstract}
We characterize the ergodic spectral efficiency of a non-cooperative and a cooperative type of $K$-tier heterogeneous networks with limited feedback. 
In the non-cooperative case, a multi-antenna base station (BS) serves a single-antenna user using maximum-ratio transmission based on limited feedback. 
In the cooperative case, a BS coordination set is formed by using dynamic clustering across the tiers, wherein the intra-cluster interference is mitigated by using multi-cell zero-forcing also based on limited feedback. 
Modeling the network based on stochastic geometry, we derive analytical expressions for the ergodic spectral efficiency as a function of the system parameters. 
Leveraging the obtained expressions, we formulate feedback partition problems and obtain solutions to improve the ergodic spectral efficiency. Simulations show the spectral efficiency improvement by using the obtained feedback partitions. 
Our major findings are as follows: 1) In the non-cooperative case, the feedback is only useful in a particular tier if the mean interference is small enough. 2) In the cooperative case, allocating more feedback to stronger intra-cluster BSs is efficient. 3) In both cases, the obtained solutions do not change depending on instantaneous signal-to-interference ratio. 
\end{abstract}

\blfootnote{This material is based upon work supported in part by the National Science Foundation under Grant No. NSF-CCF-1319556.}

\blfootnote{A special case of this paper (feedback design in a cooperative single-tier network) appeared in \emph{IEEE Globecom Workshop} 2016 \cite{park:gc:16}.}

\section{Introduction}

\subsection{Motivation}

In heterogeneous networks (HetNets), different types of base stations (BSs) are densely deployed to aggressively reuse the spectrum. 
One bottleneck in achieving the full gains of HetNets is interference. Compared to conventional single-tier cellular networks, a HetNet has various interference sources including intra-tier BSs and also cross-tier BSs. As a result, small network tiers such as femto (whose transmit power is small) are vulnerable to the interference. 
In dealing with interference in multiple antenna systems, channel state information at transmitters (CSIT) is necessary. 
In frequency division duplex (FDD) downlink cellular systems, limited knowledge of CSI is conventionally given to a base station (BS) by using limited feedback, where its accuracy is determined by the amount of feedback to quantize CSI. For this reason, it is important to use appropriate amount of feedack for providing high throughput. 

Determining the right amount of feedback is not easy, though. The main reason is that the feedback performance depends on the signal-to-interference plus noise ratio (SINR), which is intricately determined by the various system parameters of the HetNet. 
For this reason, revealing the relationship between the feedback performance and the system parameters should be preceded prior to determine the amount of feedback.
To clarify this relationship, in this paper, we model a HetNet based on stochastic geometry, allowing us to analyze the rate performance as a function of the key system parameters and the amount of feedback. Leveraging the obtained analytical expressions, we formulate feedback partition problems and propose solutions to maximize the ergodic spectral efficiency. The presented feedback partitions show how the feedback should be allocated depending on the density, the number of antennas, the transmit power, and the biasing factor of each tier in the HetNet.

\subsection{Prior Work}

There has been some prior work on determining appropriate amount of feedback in various network environments. 
In \cite{bhaga:tsp:11, ny:twc:11_adap}, it was assumed that single-tier BSs form a coordination set and use multi-cell zero-forcing to serve multiple users. In this setting, feedback allocating methods were proposed to obtain constant rate-loss compared to the perfect CSIT case. 
Similarly, in \cite{yuan:twc:13}, an adaptive feedback strategy was proposed for maximizing the achievable rate of coordinated beamforming in a HetNet. 
Prior work also studied effects of limited feedback on the achievable spatial degrees of freedom (DoF) \cite{kerret:tit:12, rao:tsp:13,niu:wpmc:14}. 
For example, in \cite{niu:wpmc:14}, the achievable DoF was characterized as a function of the feedback in a HetNet, where a macro BS serves multiple users and pico BSs serve a single user.
A common limitation of the prior work \cite{bhaga:tsp:11, ny:twc:11_adap, yuan:twc:13, kerret:tit:12, rao:tsp:13, niu:wpmc:14} is that a small network model is assumed, where only a few neighboring BSs and users are considered and their locations are deterministic. In this network model, it is hard to evaluate system-level performance obtained by averaging over many user's and BS's locations in a large size cellular network. 


To overcome the small-network limitation of the prior work \cite{bhaga:tsp:11, ny:twc:11_adap, yuan:twc:13, kerret:tit:12, rao:tsp:13, niu:wpmc:14}, a Poisson point process (PPP) has been used to model a large size network. There is also prior work that explored limited feedback in a random network model based on a PPP. 
In \cite{akoum:tsp:13}, a BS cluster is randomly formed and the feedback is adaptively partitioned for this clustering. In \cite{li:tcom:15}, several closest BSs to a user were included in a cluster and the signal-to-interference plus noise ratio (SINR) performance with limited feedback was analyzed. In \cite{jh:twc:16}, the optimal feedback rate was obtained to maximize the net spectral efficiency, defined as the downlink rate normalized by the uplink feedback overheads. 
In \cite{kountouris:twc:12}, assuming an interference-limited ad-hoc network, the achievable rate was analyzed when spatial division multiple access is used with limited feedback. 
Similar to \cite{kountouris:twc:12}, in \cite{park:wcl:16}, an interference-limited device-to-device network was considered and the ergodic spectral efficiency of single-user maximum ratio transmission (MRT) was characterized as a function of the feedback.
The prior work \cite{akoum:tsp:13, li:tcom:15, jh:twc:16, kountouris:twc:12, park:wcl:16}, however, assumed a single-tier network where all the BSs use the same transmit power and the BSs are distributed by the same density. In HetNets, there are multiple tiers whose transmit power and densities are different, and this heterogeneity changes rate coverage expressions as shown in  \cite{hsjo:2012_twc}. For this reason, our case requires a new approach.

\subsection{Contributions} 
In this paper, we characterize the ergodic spectral efficiency of multi-antenna $K$-tier downlink HetNets with limited feedback. We consider non-cooperative and cooperative HetNet operations. In both cases, the locations of each tier's BS are modeled as mutually independent PPPs. 

In our non-cooperative case, each BS obtains limited feedback sent from an associated single-antenna user. Based on the obtained limited CSIT, a BS uses MRT precoding to serve a single user. 
We assume that the same amount of feedback is used in the same network tier's BSs. We note that this assumption is only for the non-cooperative case; our cooperative case allows to use different amount of feedback to the same tier BSs. 
In this particular scenario, we derive the signal-to-interference (SIR) complementary cumulative distribution function (CCDF) and the ergodic spectral efficiency as functions of the system parameters, such as the number of antennas, the biasing factor, the transmit power, the density, and the allocated feedback of each tier. 
Leveraging the obtained expressions, we formulate an optimization problem to determine the amount of feedback used for each tier to maximize a lower bound on the sum ergodic spectral efficiency. 
Subsequently, we propose a solution of the formulated problem. 


In our cooperative case, the BSs form a coordination set by using dynamic clustering, and mitigate the intra-cluster interference by using multi-cell ZF based on the limited feedback. 
Dynamic clustering is applied across the tiers in the HetNet, so that a coordination set can include different tiers' BSs. Unfortunately, analyzing the performance of the considered BS coordination is not straightforward since the performance of the cluster can be different depending on the tiers of the BSs included in the cluster. For example, assuming that a cluster has $L$ BSs in a $K$-tier HetNet, there can be $K^L$ possibilities of the cluster's configuration. 
For this reason, we should consider all the cases to completely characterize the performance of the $L$-size cluster. 
To resolve this complexity, we derive a lemma showing that the intensity measure of received signal power in a HetNet can be transformed to the intensity measure of signal power in a statistically equivalent single-tier network by rescaling each tier's density. 
By exploiting this lemma, we obtain the SIR CCDF and the ergodic spectral efficiency as a function of the relative system parameters such as the cluster size, the transmit power, the biasing factor, the relative signal power of the intra-cluster BSs, and the used feedback. 
Assuming that each intra-cluster BS uses the same number of antennas, we formulate and solve an optimization problem to partition the feedback. 
In addition, to overcome the restricted antenna assumption, we also investigate a general antenna case where each intra-cluster BS uses different number of antennas. As a special case, we study a single-tier cooperative network with limited feedback. 

Numerical results show the spectral efficiency gains obtained by using the proposed feedback partitions compared to the equal feedback partition. Our major findings are summarized as follows: 1) In the non-cooperative case, the feedback is only useful in a particular tier if the mean interference in the corresponding tier is small enough. 2) In the cooperative case, more feedback is allocated to the BSs whose signal powers are larger. 3) If a single-tier network is assumed, the effective cluster size increases as the square root of the total feedback. 4) In both cases, the proposed feedback partitions do not change depending on instantaneous SIR.  

The paper is organized as follows. Section II introduces the system models in non-cooperative and cooperative HetNet operation. In Section III, the performance of a HetNet in the non-cooperative case is characterized and the feedback partitions are obtained based on the ergodic spectral efficiency. In Section IV, the same task is performed in the cooperative HetNet case.
Section V provides numerical results and Section VI concludes the paper.

\section{System Model}

In this section, we introduce the system model assumed in the paper. We first describe the network model using stochastic geometry, and explain how the typical user associated with a BS. Then we illustrate the clustering model that uses dynamic clustering in the considered HetNet. Next, we introduce the feedback model for quantifying the channel quantization error due to limited feedback. Performance metrics are defined in the following subsection.


\subsection{Network and Cell Association Model} 
We consider a $K$-tier downlink HetNet. Focusing on the $k$-th tier for $k \in \CMcal{K} = \{1,2, \cdots K\}$, BSs equipped with $N_k$ antennas are spatially distributed according to a homogeneous PPP, $\Phi_k = \left\{ {\bf{d}}_i^k \in \mathbb{R}^2, i\in \mathbb{N}\right\}$ with density $\lambda_{k}$. All the BSs in the $k$-th tier use the same transmit power $P_k$ and biasing factor $S_k$. 
Equivalently, the $k$-th tier network may be represented as a marked PPP, $\Phi_k^{\rm M} = \{{\bf{d}}_{i}^{k}, P_k, S_k, N_k, i \in \mathbb{N}\} $ with density $\lambda_{k}$ where $P_k$, $S_k$, and $N_k$ are the same marks for all the points in $\Phi_{k}$. Without loss of generality, we assume that $\left\| {\bf{d}}_i^k\right\| \le \left\| {\bf{d}}_j^k \right\|$ if $i < j$; thereby ${\bf{d}}_1^k$ indicates the nearest BS location to the origin in the $k$-th tier. 
Spatial locations of BSs in different tiers are assumed to be mutually independent.
Using the superposition property of independent PPPs, we compactly represent the $K$-tier HetNet as an unified marked PPP $\tilde \Phi^{\rm M} = \sum_{k \in \CMcal{K}} \Phi_k^{\rm M}$. We write $\tilde \Phi^{\rm M} = \{{\bf{d}}_i, \pi(i), P_{\pi(i)}, S_{\pi(i)}, N_{\pi(i)}, i \in \mathbb{N} \}$, where $\pi(i) \in \CMcal{K}$ is an index function indicating the tier of the corresponding point ${\bf{d}}_i$. Assuming that $\left\| {\bf{d}}_i\right\| \le \left\|{\bf{d}}_j \right\|$ if $i<j$, ${\bf{d}}_i$ means the $i$-th nearest BS location to the origin among all the tiers and $\pi(i)$ indicates that the tier of that BS. For example, assuming that the nearest BS to the origin is in the $k$-th tier, i.e., ${\bf{d}}_1 = {\bf{d}}_1^k$, then $\pi(1) = k$. 

Single-antenna users are distributed according to a homogeneous PPP, $\Phi_{\rm U} = \{{\bf{u}}_i, i \in \mathbb{N} \}$, which has density $\lambda_{\rm U} \gg \lambda_{k} $ for $k\in\CMcal{K}$. Since the user density is far larger than the BS density, we assume that there is no empty cell with high probability, so that all the cells are occupied. We note that in HetNets, the BSs can be densely deployed so that empty cells can exist, which is a topic for future work. 
We focus on the typical user located on ${\bf{u}}_1 = {\bf{0}}$ per Slivnyak's theorem \cite{baccelli:inria}.

We consider an open access policy wherein a user is able to communicate with all the BSs in any tier $k$ for $k \in \CMcal{K}$. 
For cell association, the typical user measures the biased average received power and associates with the BS whose the measured power is maximum. For instance, the user associates with the BS located at ${\bf{d}}_{1}^{k}$ if $k = \mathop{\arg \max}_{k' \in \CMcal{K}} P_{k'} S_{k'} \left\| {\bf{d}}_1^{k'}\right\|^{-\beta} $, where $\beta$ is the path-loss exponent. Since we are interested in a HetNet sharing the spectrum among all the tiers, we assume that the path-loss exponent is same in all the tiers. Considering different path-loss exponents in each tier \cite{hsjo:2012_twc} or multi-slope path-loss model \cite{zhang:tcom:15} is future work. 

We note that biasing factor $S_k$ is mainly used for offloading in HetNets \cite{sarabjot:twc:14, hsjo:2012_twc}. For example, as $S_k$ increases, the number of users associated with the $k$-th tier BS also increases, which relieves the number of users associated with the other tiers. This allows other tiers to allocate more resources per one user. Typically, a small network tier such as femto tends to have large biasing factor to save the resources of the macro tier. Jointly considering feedback design and offloading will be interesting future work.

\subsection{Clustering Model}
Dynamic BS coordination is used to form a BS cluster. With the cluster size $L$, the typical user connects to the $L$ BSs that provides $L$ strongest biased average received power. Denoting the BS coordination set $\CMcal{C} = \{i_1,...,i_L\}$, we have
\begin{align} \label{eq:clustering}
P_{\pi(i_1)} S_{\pi(i_1)} \left\|  {\bf{d}}_{i_1}\right\|^{-\beta} \ge P_{\pi(i_2)} S_{\pi(i_2)} \left\|  {\bf{d}}_{i_2}\right\|^{-\beta} \ge ... \ge P_{\pi(i_L)} S_{\pi(i_L)} \left\|  {\bf{d}}_{i_L}\right\|^{-\beta},
\end{align}
where $P_{\pi(i_L)} S_{\pi(i_L)} \left\|  {\bf{d}}_{i_L}\right\|^{-\beta} \ge P_{\pi(j)} S_{\pi(j)} \left\|  {\bf{d}}_{j}\right\|^{-\beta}$ for all $j \in \mathbb{N} \backslash \CMcal{C}$.
According to the association rule, the typical user associates with the BS located at ${\bf{d}}_{i_1}$ and receives the desired signal from it.
We note that $L=1$ indicates the non-cooperative case, and $L\ge 2$ is the cooperative-case. To mitigate the intra-cluster interference by using multi-cell ZF in the cooperative case, we assume $L \le \min_{k \in \CMcal{K}} N_k$. 

Using the described dynamic clustering, a cooperative region is mathematically defined by using the notion of the $L$-th order weighted Voronoi region, which is an extended version of the typical Voronoi region [17], [18]. For example, the weighted Voronoi region corresponding to the coordination set $\CMcal{C}= \{i_1,...,i_L\}$ is defined as 
\begin{align} \label{eq:def_voro}
&{\mathcal{V}}^{\rm w}_L({\bf{d}}_{i_1},...,{\bf{d}}_{i_L}) \nonumber \\
&= \left\{ {\bf{d}} \in \mathbb{R}^2 | \cap_{\ell=1}^{L} \left\{ (P_{\pi(i_\ell)} S_{\pi(i_\ell)})^{-\frac{1}{\beta}} \left\| {\bf{d}} - {\bf{d}}_{i_\ell} \right\| < (P_{\pi (j)} S_{\pi (j)})^{-\frac{1}{\beta}}\left\| {\bf{d}} - {\bf{d}}_j\right\| \right\}, j \notin \{i_1,...,i_L\}  \right\}.
\end{align} 
The users located in ${\mathcal{V}}^{\rm w}_L({\bf{d}}_{i_1},...,{\bf{d}}_{i_L})$ are connected to the coordination set $\CMcal{C}$. Naturally, the typical user is also located in ${\mathcal{V}}^{\rm w}_L({\bf{d}}_{i_1},...,{\bf{d}}_{i_L})$, i.e., ${\bf{o}} \in {\mathcal{V}}^{\rm w}_L({\bf{d}}_{i_1},...,{\bf{d}}_{i_L})$. 
By allocating the orthogonal time-frequency resources to adjoint Voronoi regions, a conflict between any two different clusters can be prevented so that each cluster can serve the connected users simultaneously. Optimizing the resources allocated to each Voronoi region is a challenging yet important problem, and will be interesting future work. We note that in a simple case $K=1$ and $L = 2$, this problem can be solved by using cooperative base station coloring \cite{park:tcom:16}.  

\subsection{Signal Model}
We assume a synchronous narrowband signal model. In this setting, each BS encodes the information symbol separately and sends it to the associated user. When transmitting the symbol, the BS at ${\bf{d}}_i$ uses a linear beamforming vector ${\bf{v}}_{i} \in \mathbb{C}^{N_{\pi(i)}}$ and $\left\| {\bf{v}}_i \right\| = 1$. 
Assuming that the typical user is associated with a BS located at ${\bf{d}}_{i_1} \in \tilde \Phi^{\rm M}$, the received signal at the typical user is
\begin{align}
y_1 = & \sqrt{P_{\pi(i_1)}}\left\|{\bf{d}}_{i_1} \right\|^{-\beta/2} ({\bf{h}}_{1,i_1})^{*} {\bf{v}}_{i_1} s_{i_1} + 
\sum_{i_{\ell} \in \CMcal{C} \backslash i_1} \sqrt{P_{\pi(i_{\ell})}}\left\|{\bf{d}}_{i_{\ell}} \right\|^{-\beta/2} ({\bf{h}}_{1,i_{\ell}})^{*} {\bf{v}}_{i_{\ell}} s_{i_{\ell}}  \nonumber \\
&+ \sum_{j \in \mathbb{N} \backslash \CMcal{C}} \sqrt{P_{\pi(j)}}\left\|{\bf{d}}_{j} \right\|^{-\beta/2} ({\bf{h}}_{1,j})^{*} {\bf{v}}_{j} s_{j}+ z_1,
\end{align}
where ${\bf{h}}_{1, i_{}} \in \mathbb{C}^{N_{\pi(i)}}$ is the downlink channel coefficient vector from the BS at ${\bf{d}}_i \in \tilde \Phi^{\rm M}$ to the typical user, $\beta>2$ is the path-loss exponent, and $z_1 \sim \CMcal{CN}\left(0,\sigma^2\right)$ is additive Gaussian noise.
The symbol energy is normalized as $\mathbb{E}\left[\left|s_i \right|^2 \right] =1$ for $i \in \mathbb{N}$. Each entry of the channel coefficient vector ${\bf{h}}_{1, i}$ is drawn from independent and identically distributed (IID) complex Gaussian random variables $\CMcal{CN}\left(0,1\right)$ indicating Rayleigh fading.
The beamforming vector ${\bf{v}}_i$ is designed based on the CSIT obtained by using limited feedback. 

\subsection{Feedback Model}
We explain the feedback process focusing on the typical user. This process is applied to other users equivalently. 
Let us assume that the typical user feeds back the channel information to a BS located at ${\bf{d}}_i$.  
First, the typical user estimates the channel coefficient vector ${\bf{h}}_{1, i}$. 
To focus on the effect of limited feedback, we assume that the channel estimation is perfect.
Once the typical user learns the channel coefficient vector ${\bf{h}}_{1, i}$, it quantizes 
the channel direction information $\tilde {\bf{ h}}_{1, i} = {\bf{h}}_{1, i}/\left\| {\bf{h}}_{1, i} \right\| $ by using a predefined quantization codebook $\CMcal{Q}$. 
The codebook $\CMcal{Q}$ is shared with the BS at ${\bf{d}}_{i}$ and the typical user. 
Assuming that $B$ bits are used for quantizing $\tilde {\bf{ h}}_{1, i}$, the codebook $\CMcal{Q}$ is constructed as $\CMcal{Q} = \left\{{\bf{w}}_1,...,{\bf{w}}_{2^{B}} \right\}$, where each codeword ${\bf{w}}_j$ is a $N_{\pi(i)}$-dimensional unit norm vector, i.e., $\left\| {\bf{w}}_j \right\| = 1$ for $j \in \left\{1,...,2^{B} \right\}$. 
Then, the codeword that has maximum inner product with $\tilde {\bf{ h}}_{1, i}$ is selected, namely
\begin{align}
j_{\max} = \mathop {\arg \max} \limits_{j = 1,...,2^{B}} \left|  (\tilde {\bf{h}}_{1,i} )^* {\bf{w}}_j \right|.
\end{align}
The chosen index $j_{\max}$ is sent to the BS at ${\bf{d}}_{i}$ and 
the BS recovers the quantized channel direction information from this index. 
We denote the quantized channel direction as $\hat {\bf{h}}_{1,i} = {\bf{w}}_{j_{\max}}$.



For analytical tractability, we adopt the quantization cell approximation \cite{muk:tit:03, shen:twc:05,  jh:twc:16} instead of using a specific limited feedback strategy. This approximates each quantization cell as a Voronoi region of a spherical cap \cite{gers:tit:79}. 
In this technique, assuming that $B$-bits feedback is used, the area of the quantization cell is $2^{-B}$ and this leads to an expression of the CDF of quantization error 
\begin{align} \label{def:q_error_cdf}
F_{{\rm sin^2}\theta_i}\left(x\right) = \left\{\begin{array}{cc}2^{B} x^{N_{\pi(i)}-1}, &  0\le x \le \delta \\ 1, & \delta \le x \end{array}, \right.
\end{align}
where ${\rm sin^2}\theta_i = 1 - \left| (\tilde {\bf{h}}_{1,i})^* \hat{\bf{h}}_{1,i} \right|^2$ and $\delta = 2^{-\frac{B}{N_{\pi(i)}-1}}$. 
In isotropic channel distribution, this approximation technique provides an upper bound of the quantization performance, while the gap to a lower bound provided by random vector quantization is reasonably small \cite{jh:twc:16}. 



\subsection{Performance Metrics}
Since cellular systems are usually interference limited \cite{dhil:jsac:12}, we focus on the SIR. We consider two cases depending on the coordination set size $L$.

{\textbf{Non-cooperative case}}: In this case with $L = 1$, there is no coordination among the BSs. The BS uses single-user MRT precoding based on the quantized channel direction, ${\bf{v}}_{i_1} = \hat {\bf{h}}_{1, i_1}$. 
We assume that all the users in the $k$-th tier use the $B_k$ bits feedback, so that the total feedback used on average in an unit area is $B_{\rm total} = \sum_{k=1}^{K} \lambda_k B_k$. 
The instantaneous SIR of the typical user is written by
\begin{align}
{\rm SIR}_{\rm NC} = \frac{P_{\pi(i_1)}\left\| {\bf{d}}_{i_1} \right\|^{-\beta} \left|\left({\bf{h}}_{1,i_1} \right)^* \hat {\bf{h}}_{1,i_1}\right|^2}{\sum_{j \in \mathbb{N} \backslash i_1 } P_{\pi(j)}\left\|{\bf{d}}_{j} \right\|^{-\beta} \left| ({\bf{h}}_{1,j})^{*} {\bf{v}}_{j} \right|^2 }.
\end{align}
Conditioning on that ${\bf{d}}_{i_1} = {\bf{d}}_1^k$ (or $\pi(i_1) = k$), i.e., the typical user is associated with a BS in the $k$-th tier, we denote the conditioned SIR as ${\rm SIR}_{{\rm NC}|k}$.  The CCDF of the conditioned SIR is defined as
\begin{align} \label{eq:sir_ccdf_het}
&F^{\rm c}_{{\rm SIR}_{\rm NC|k}}\left(\beta, \bar \lambda_K, \bar N_K, \bar B_{{K}}, \bar P_K, \bar S_K;\gamma \right) = \mathbb{P}\left[{\rm SIR}_{\rm NC|k} \ge \gamma \right],
\end{align}
with a set of the densities: $\bar \lambda_K = \{\lambda_1, ..., \lambda_K\}$, a set of the antennas: $\bar N_K = \{N_1, ..., N_K \}$, a set of the feedback: $\bar B_K = \{B_1,..., B_K\}$, a set of the transmit power: $\bar P_K = \{P_1, ..., P_K\}$, and a set of the biasing factor: $\bar S_k \ \{S_1,...,S_K \}$. 
The conditioned ergodic spectral efficiency is defined as
\begin{align} \label{eq:rate_het}
&R_{\rm NC|k}\left(\beta, \bar \lambda_K, \bar N_K, \bar B_K, \bar P_K, \bar S_K \right) = \mathbb{E}\left[\log_2\left(1+ {\rm{SIR}}_{\rm NC|k} \right) \right]. 
\end{align}

{\textbf{Cooperative case}}: In this case with $L \ge 2$, the BSs in $\CMcal{C}$ are coordinated. The beamforming vector is designed as multi-cell ZF to mitigate the intra-cluster interference. Specifically, ${\bf{v}}_{i_1}$ satisfies
\begin{align} \label{eq:zf_criterion}
(\hat {\bf{h}}_{\ell, i_{1}})^*{\bf{v}}_{i_1} = 0, {\ell} \in \CMcal{C} \backslash 1,
\end{align}
where $\bar {\bf{h}}_{\ell, i_1}$ is the quantized channel coefficient vector from the BS at ${\bf{d}}_{i_1}$ to the user $\ell$ associated with the BS ${\bf{d}}_{i_{\ell}}$.
The solution of \eqref{eq:zf_criterion} always exists if $L \le \mathop {\min}_{i_{\ell} \in \CMcal{C}} N_{\pi(i_{\ell})}$. 
We denote that the feedback used for $i_{\ell}$-th BS in the coordination set as $B_{i_{\ell}}$. Since the feedback information is only used for managing the intra-cluster interference, the typical user does not send the feedback to its associated BS, i.e., $B_{i_1} = 0$. The total feedback used in one coordination set is $B_{\rm total} = \sum_{{\ell} = 2}^{L} B_{i_{\ell}}$. 
Note that we slightly abuse the notation of the feedback in the non-cooperative case and the cooperative case. Specifically, in the non-cooperative case, $B_k$ means the feedback used in the $k$-th tier BS. In the cooperative case, $B_{i_{\ell}}$ means the feedback used in the $i_{\ell}$-th BS in the coordination set. 
For analytical simplicity, we assume that all the BSs in the same coordination set $\CMcal{C}$ use only $L$ antennas for multi-cell ZF, so that effectively the typical user has $L$-dimensional channel to each intra-cluster BS. 
Since we only use a part of the antennas, our analysis may indicate a lower bound on the spectral efficiency that can be achieved by using the full antennas. Using the full antennas, the BSs in the coordination set can mitigate the intra-cluster interference and also increase the desired signal power by coordinated beamforming. This case will be further investigated later. 

Due to the inaccurate channel feedback, the intra-cluster interference is not perfectly nullified. Considering the remaining intra-cluster interference, the instantaneous SIR is 
\begin{align} \label{eq:sir_coop}
{\rm SIR}_{\rm C} = \frac{ P_{\pi(i_1)}\left\| {\bf{d}}_{i_1} \right\|^{-\beta} \left| ({\bf{h}}_{1,i_1})^* {\bf{v}}_{i_1}\right|^2}{I_{\rm In} + I_{\rm Out}},
\end{align}
where $I_{\rm In} = \sum_{i_{\ell} \in \CMcal{C} \backslash i_1} P_{\pi(i_{\ell})} \left\|{\bf{d}}_{i_{\ell}} \right\|^{-\beta} \left| ({\bf{h}}_{1,i_{\ell}})^{*} {\bf{v}}_{i_{\ell}} \right|^2$, 
$I_{\rm Out} =\sum_{j \in \mathbb{N} \backslash \CMcal{C}} P_{\pi(j)} \left\|{\bf{d}}_{j} \right\|^{-\beta} \left| ({\bf{h}}_{1,j})^{*} {\bf{v}}_{j}\right|^2 $, each of which indicates the remaining intra-cluster interference and the out-of-cluster interference, respectively.
Similar to the non-cooperative case, we denote ${\rm SIR}_{\rm C|m}$ as the instantaneous SIR conditioned on that the typical user is associated with a BS in the $m$-th tier, i.e., $\pi(i_1) = m$. 
The CCDF of the conditioned SIR is defined as
\begin{align} \label{def:sir_ccdf_coop}
&F^{\rm c}_{{\rm SIR}_{\rm C|m}}\left(\beta, \bar \lambda_K, \bar N_K, \bar B_{{L}}, \bar P_K, \bar S_K;\gamma \right) = \mathbb{P}\left[{\rm SIR}_{\rm C|m} \ge \gamma \right], 
\end{align}
where a set of the feedback $\bar B_{L} = \{B_{i_2}, ..., B_{i_L}\}$. The ergodic spectral efficiency is defined as
\begin{align} \label{def:ergodic_rate_coop}
&R_{\rm C|m}\left(\beta, \bar \lambda_K, \bar N_K, \bar B_{L}, \bar P_K, \bar S_K \right) = \mathbb{E}\left[\log_2\left(1+ {\rm{SIR}}_{\rm C|m} \right) \right]. 
\end{align}

We clarify the difference in the total feedback constraint between the non-cooperative and the cooperative cases. In the non-cooperative case, the total feedback used on average in an unit area is limited, so that $B_{\rm total} = \sum_{k=1}^{K} \lambda_k B_k$. In the cooperative case, the total feedback used in one BS coordination set is fixed, so that $B_{\rm total} = \sum_{{\ell} = 2}^{L} B_{i_{\ell}}$, where $B_{i_{\ell}}$ is the feedback allocated to the BS at  ${\bf{d}}_{i_{\ell}}$. 
The rationale of this difference is as follows. In the non-cooperative case, the typical user sends the feedback to its associated BS only, so that the density directly affects the totally feedback use in an unit area. Specifically, a dense network tier consumes more feedback in an unit area, therefore it is reasonable that the total feedback constraint is a weighted sum of the used feedback in each tier. 
In the cooperative case, the typical user user sends the feedback to all the intra-cluster BSs but not its associated BS. Further, the intra-cluster BSs' tiers can be different depending on the condition of the coordination set. 
For this reason, if a total feedback constraint is defined as a function of the density as in the non-cooperative case, each coordination set has a different total feedback constraint. This makes it difficult to provide a general feedback partition solution, which is applicable in the cooperative case regardless of the cluster members' tiers. As a result, it is more convenient to set the total feedback constraint for each coordination set. 

\section{Non-Cooperative Case}
In this section, we characterize the performance of a non-cooperative HetNet and formulate a feedback partition problem based on the performance characterization. Subsequently, we propose a solution for the problem to determine $B_k$. 

\subsection{Performance Characterization}
In this subsection, we analyze the SIR CCDF and the ergodic spectral efficiency. To this end, we first introduce the following lemma. 

\begin{lemma} \label{lem:associate_pdf_hetnet}
If the typical user is associated with the BS belonging to the $k$-th tier, the PDF of the distance between the typical user and the serving BS, i.e., $\left\| {\bf{d}}_1^{k} \right\|$, is 
\begin{align} \label{eq:pdf_first_touch}
f_{\left\| {\bf{d}}_1^{k}\right\|}(r) = 2\pi \sum_{i=1}^{K} \lambda_i \left(\frac{P_i S_i}{P_kS_k} \right)^{2/\beta} r \exp\left({-\pi \sum_{i=1}^{K} \lambda_i \left(\frac{P_{i}S_i}{P_{k} S_k} \right)^{2/\beta}} r^{2}\right),
\end{align}
\end{lemma}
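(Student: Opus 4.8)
The plan is to obtain the conditional density directly from the maximum-biased-received-power association rule, using the mutual independence of the per-tier point processes. The first step is to translate the association event into exclusion-disk conditions on the BS locations. Within tier $k$ the biased received power $P_k S_k \|\mathbf{d}\|^{-\beta}$ is strictly decreasing in distance, so the only tier-$k$ candidate for the serving BS is its nearest point $\mathbf{d}_1^k$; suppose it lies at distance $r$. For this BS to actually serve the typical user it must dominate every other BS, i.e. $P_k S_k r^{-\beta} \ge P_{\pi(j)} S_{\pi(j)} \|\mathbf{d}_j\|^{-\beta}$ for all $\mathbf{d}_j \in \tilde\Phi^{\mathrm{M}}$. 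Rearranging, a tier-$i$ BS is dominated precisely when its distance exceeds the exclusion radius $R_i(r) = (P_i S_i / P_k S_k)^{1/\beta}\, r$. Hence, conditioned on $\|\mathbf{d}_1^k\| = r$, association with tier $k$ is exactly the event that no tier-$i$ BS lies inside the disk of radius $R_i(r)$, for every $i \in \mathcal{K}$; note that $R_k(r)=r$, so the $i=k$ condition merely re-expresses that $\mathbf{d}_1^k$ is the nearest tier-$k$ point.

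Next I would assemble the joint density of the event that the user associates with tier $k$ and the serving distance equals $r$. Since the $K$ tiers are independent PPPs, this density factors: the nearest tier-$k$ point contributes the planar nearest-neighbor density $2\pi\lambda_k r\, e^{-\pi \lambda_k r^2}$, while each tier $i \ne k$ contributes the void probability of its disk, $\exp(-\pi\lambda_i R_i(r)^2) = \exp(-\pi \lambda_i (P_i S_i / P_k S_k)^{2/\beta} r^2)$. Writing $C_k = \sum_{i=1}^K \lambda_i (P_i S_i / P_k S_k)^{2/\beta}$ and observing that the $i=k$ summand equals $\lambda_k$, the exponential factors merge into a single term and the joint density reduces to $2\pi\lambda_k r\, e^{-\pi C_k r^2}$.

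The last step is normalization. Integrating the joint density over $r \in (0,\infty)$ with the substitution $u = r^2$ gives the tier-$k$ association probability $\mathcal{A}_k = 2\pi\lambda_k \int_0^\infty r\, e^{-\pi C_k r^2}\, dr = \lambda_k / C_k$. Dividing the joint density by $\mathcal{A}_k$ cancels $\lambda_k$ and replaces the prefactor with $C_k$, yielding $f_{\|\mathbf{d}_1^k\|}(r) = 2\pi C_k r\, e^{-\pi C_k r^2}$, which upon expanding $C_k$ is exactly \eqref{eq:pdf_first_touch}.

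I expect the only genuinely delicate part to be the first step: correctly converting the maximum-biased-power rule into the family of exclusion disks and, in particular, recognizing that the serving tier contributes an exclusion radius equal to $r$ itself rather than a separate constraint. Once the radii $R_i(r)$ are identified, the factorization across independent tiers and the elementary Gaussian-type integration are routine. A secondary point worth care is that the statement is a density \emph{conditioned} on tier-$k$ association; this conditioning, via the division by $\mathcal{A}_k = \lambda_k / C_k$, is precisely what turns the bare intensity $\lambda_k$ into the summed intensity $C_k$ appearing in \eqref{eq:pdf_first_touch}.
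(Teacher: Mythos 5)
Your derivation is correct: the exclusion-disk translation of the max-biased-power rule, the factorization over independent tiers, and the normalization by the association probability $\mathcal{A}_k = \lambda_k / C_k$ all check out and reproduce \eqref{eq:pdf_first_touch} exactly. The paper does not prove this lemma itself but simply cites Lemma 3 of \cite{hsjo:2012_twc}, and your argument is essentially the standard derivation given in that reference, so there is nothing further to compare.
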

\begin{proof}
See Lemma 3 in \cite{hsjo:2012_twc}.
\end{proof}
By leveraging Lemma \ref{lem:associate_pdf_hetnet}, we derive the SIR CCDF in the following theorem. 
\begin{theorem} \label{theo:sir_ccdf_hetnet}
When the typical user is associated with the $k$-th tier BS, the SIR CCDF is 
\begin{align} \label{eq:sir_ccdf_hetnet}
&F^{\rm c}_{{\rm SIR}_{\rm NC|k}}\left(\beta, \bar \lambda_K, \bar N_K, \bar B_{{K}},\bar P_K, \bar S_K;\gamma \right) =\sum_{m=0}^{N_{k}-1} \frac{\gamma^m}{m!} (-1)^m \left. \frac{\partial^m \CMcal{L}_{I/\cos^2\theta_1}(s)}{\partial s^m} \right| _{s = {\gamma}},
\end{align}
where $\CMcal{L}_{I/\cos^2\theta}(s)$ is 
\begin{align} \label{eq:laplace_noncoopt_intheorem}
\CMcal{L}_{I/\cos^2\theta_1}(s) =  \int_{0}^{2^{-\frac{B_{k}}{N_{k}-1}}}  \frac{{2^{B_k} (N_k-1)x^{N_k-2}}   \sum_{i=1}^{K} \lambda_i \left(\frac{P_i S_i}{P_k S_k} \right)^{2/\beta} }{\sum_{i=1}^{K}\lambda_i \left(\frac{P_i S_i}{P_k S_k} \right)^{2/\beta} \left[1+\CMcal{D}(\frac{s}{1-x}\left(\frac{S_k}{S_i} \right), \beta) \right]} {\rm d} x,
\end{align}
with
\begin{align} \label{eq:het:dfunc}
\CMcal{D}( x, y) = \frac{2x}{y-2} {}_2F_1\left(1, 1-\frac{2}{y}, 2-\frac{2}{y}, -x \right).
\end{align}
\end{theorem}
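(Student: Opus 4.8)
The plan is to condition on the serving distance $r=\|{\bf d}_1^k\|$ and on the quantization angle $\theta_1$, reduce the event $\{{\rm SIR}_{\rm NC|k}\ge\gamma\}$ to a Gamma tail probability, and then convert that tail into derivatives of the Laplace transform of a normalized interference. First I would decompose the desired-signal gain as $|({\bf h}_{1,i_1})^*\hat{\bf h}_{1,i_1}|^2=\|{\bf h}_{1,i_1}\|^2\cos^2\theta_1$, where $\cos^2\theta_1=1-\sin^2\theta_1$. Because the channel is isotropic, the magnitude $\|{\bf h}_{1,i_1}\|^2\sim{\rm Gamma}(N_k,1)$ depends only on the norm while $\sin^2\theta_1$ depends only on the direction, so the two are independent. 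Normalizing the interference by the desired path-loss-power factor, i.e. setting $I:=\sum_{j\neq i_1}\frac{P_{\pi(j)}\|{\bf d}_j\|^{-\beta}|({\bf h}_{1,j})^*{\bf v}_j|^2}{P_k r^{-\beta}}$, the event becomes $\|{\bf h}_{1,i_1}\|^2\ge\gamma\,I/\cos^2\theta_1$. Applying the standard identity for a ${\rm Gamma}(N_k,1)$ variable $G$ independent of a nonnegative $Y$, namely $\mathbb{P}[G\ge\gamma Y]=\sum_{m=0}^{N_k-1}\frac{\gamma^m}{m!}(-1)^m\frac{\partial^m\CMcal{L}_Y(s)}{\partial s^m}\big|_{s=\gamma}$ with $Y=I/\cos^2\theta_1$, yields the outer structure of \eqref{eq:sir_ccdf_hetnet}; it then remains to compute $\CMcal{L}_{I/\cos^2\theta_1}$.

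For the Laplace transform I would condition on $r$ and on $\sin^2\theta_1=x$. Each interference channel gain $|({\bf h}_{1,j})^*{\bf v}_j|^2$ is unit-mean exponential and independent across interferers, since every ${\bf v}_j$ is designed from its own user's channel and is independent of ${\bf h}_{1,j}$; averaging each exponential produces a factor of the form $(1+\cdot)^{-1}$. The crucial ingredient is the exclusion region induced by the biased-power association rule: conditioned on serving from tier $k$ at distance $r$, every tier-$i$ interferer must satisfy $P_iS_i\|{\bf d}_j\|^{-\beta}\le P_kS_k r^{-\beta}$, i.e. $\|{\bf d}_j\|\ge(P_iS_i/P_kS_k)^{1/\beta}r$. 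Using independence of the per-tier PPPs and the probability generating functional, the contribution of tier $i$ is $\exp(-2\pi\lambda_i\int_{R_i}^\infty(1-(1+\frac{s}{1-x}\frac{P_i}{P_k}r^\beta u^{-\beta})^{-1})u\,du)$ with $R_i=(P_iS_i/P_kS_k)^{1/\beta}r$, where the replacement $s\mapsto s/(1-x)=s/\cos^2\theta_1$ accounts for dividing the interference by $\cos^2\theta_1$. The substitution $v=u/R_i$ collapses the path-loss coefficient to $\frac{s}{1-x}\frac{S_k}{S_i}$, and the resulting integral $\int_1^\infty(1-(1+a v^{-\beta})^{-1})v\,dv=\frac12\CMcal{D}(a,\beta)$ is exactly the closed form in \eqref{eq:het:dfunc}. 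Multiplying over tiers gives $\CMcal{L}_{I/\cos^2\theta_1}(s\mid r,x)=\exp(-\pi r^2\sum_i\lambda_i(\frac{P_iS_i}{P_kS_k})^{2/\beta}\CMcal{D}(\frac{s}{1-x}\frac{S_k}{S_i},\beta))$.

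Finally I would average out $r$ and $x$. The serving distance uses the association PDF from Lemma~\ref{lem:associate_pdf_hetnet}, and since both that density and the conditional Laplace transform are Gaussian in $r$, the integral $\int_0^\infty 2\pi A r\,e^{-\pi r^2(A+C(s,x))}dr=A/(A+C(s,x))$ evaluates in closed form, where $A=\sum_i\lambda_i(\frac{P_iS_i}{P_kS_k})^{2/\beta}$ and $C(s,x)=\sum_i\lambda_i(\frac{P_iS_i}{P_kS_k})^{2/\beta}\CMcal{D}(\frac{s}{1-x}\frac{S_k}{S_i},\beta)$. Integrating over the quantization error with its density $2^{B_k}(N_k-1)x^{N_k-2}$ on $[0,2^{-B_k/(N_k-1)}]$, read off from \eqref{def:q_error_cdf}, reproduces \eqref{eq:laplace_noncoopt_intheorem}. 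I expect the main obstacle to be the bookkeeping of the exclusion region across heterogeneous tiers: specifically, verifying that the change of variables $v=u/R_i$ simultaneously cancels the density weight $R_i^2$ and the per-tier power and bias ratios, so that only the clean argument $\frac{s}{1-x}\frac{S_k}{S_i}$ survives inside $\CMcal{D}$. A secondary subtlety worth checking is the legitimacy of treating $\|{\bf h}_{1,i_1}\|^2$ and $\cos^2\theta_1$ as independent and folding $\cos^2\theta_1$ entirely into the effective interference $I/\cos^2\theta_1$, which is what permits the Gamma-tail identity to be applied at the single evaluation point $s=\gamma$.
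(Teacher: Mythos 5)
Your proposal is correct and follows essentially the same route as the paper's proof in Appendix A: decompose the desired gain as $\|{\bf h}_{1,i_1}\|^2\cos^2\theta_1$, apply the Gamma-tail/Laplace-derivative identity to the normalized interference $I/\cos^2\theta_1$, and evaluate the Laplace transform via the per-tier PGFL with the biased-association exclusion radii $R_i=(P_iS_iP_k^{-1}S_k^{-1})^{1/\beta}r$, then marginalize over $r$ with Lemma~\ref{lem:associate_pdf_hetnet} and over $\sin^2\theta_1$ with the quantization-cell density. The only difference is that you carry out the change of variables yielding $\CMcal{D}$ explicitly where the paper invokes Theorem~1 of \cite{hsjo:2012_twc}.
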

\begin{proof}
See Appendix \ref{appen:theo1}.
\end{proof}

\begin{remark} \normalfont
Although we assume an open access policy in this paper, it is also possible to use a closed access policy. In this remark, we study how the SIR CCDF \eqref{eq:sir_ccdf_hetnet} changes when the $k$-th tier BSs use a closed access policy. Specifically, we assume the probability of a $k$-th tier BS being open access as $p_k < 1$, so that the density of the accessible $k$-th tier BSs  is $p_k \lambda_k$. For convenience of expression, we define $p_1=...=p_{k-1}= p_{k+1}=...=p_K = 1$. Then, the PDF of the distance to the serving BS \eqref{eq:pdf_first_touch} is modified as 
\begin{align} \label{eq:closed_asso_dist}
f_{\left\| {\bf{d}}_1^{k}\right\|, {\rm closed}}(r) = 2\pi \sum_{i=1}^{K} p_i\lambda_i \left(\frac{P_i S_i}{P_kS_k} \right)^{2/\beta} r \exp\left({-\pi \sum_{i=1}^{K} p_i\lambda_i \left(\frac{P_{i}S_i}{P_{k} S_k} \right)^{2/\beta}} r^{2}\right).
\end{align}
Next, we calculate the SIR CCDF in the closed access case. The Laplace transform of the interference $\CMcal{L}_{I/\cos^2(\theta_1)}(s) $ in \eqref{eq:laplace_noncoopt_intheorem} is modified as
\begin{align} \label{eq:laplace_close}
&\CMcal{L}_{I/\cos^2\theta_1, {{\rm closed}}}(s) \nonumber \\
& =  \mathbb{E}_{R, \cos^2\theta_1} \Bigg[  \prod_{i=1}^{K} \exp\left(-\pi p_i \lambda_i \left(\frac{P_i S_i}{P_{k} S_k} \right)^{2/\beta}R^{2} \CMcal{D}\left(\frac{s}{\cos^2\theta_1} \left( \frac{S_k}{S_i}\right) , \beta\right) \right)   \nonumber \\
& \;\;\;\;\;\;\;\;\;\;\;\;\;\;\;\;\;\;\; \cdot \exp\left( -\pi (1-p_k) \lambda_k R^2 \left(\frac{s}{\cos^2(\theta_1)}  \right)^{2/\beta} \frac{2\pi}{\beta} \csc\left(\frac{2\pi}{\beta} \right) \right) \Bigg] \nonumber \\
&= \int_{0}^{2^{-\frac{B_{k}}{N_{k}-1}}}  2^{B_k} (N_k-1)x^{N_k-2}  \int_{0}^{\infty}  f_{\left\| {\bf{d}}_1^{k}\right\|}(r)  \cdot \nonumber \\
& \;\;\;\;\;\;\;\;\;\;\;\;\;\;\;\;\;\; \exp\left(-\pi \sum_{i=1}^{K}p_i \lambda_i \left(\frac{P_i S_i}{P_{k}S_k} \right)^{\frac{2}{\beta}} \!\!\! r^{2} \CMcal{D}\left(\frac{s}{1-x}\left(\frac{S_k}{S_i} \right) , \beta\right) \right)\cdot \nonumber \\
& \;\;\;\;\;\;\;\;\;\;\;\;\;\;\;\;\;\; \exp\left( -\pi (1-p_k) \lambda_k r^2 \left(\frac{s}{1-x}  \right)^{2/\beta} \frac{2\pi}{\beta} \csc\left(\frac{2\pi}{\beta} \right) \right) {\rm d} r  {\rm d} x .
\end{align}
Plugging \eqref{eq:laplace_close} into \eqref{eq:sir_ccdf_hetnet} produces the SIR CCDF in the closed access case.
\end{remark}

Next, we obtain the ergodic spectral efficiency. 
Before proceeding further, we introduce two useful lemmas. The first one is for characterizing the Laplace transform of the desired signal power as a function of the used feedback; the second one is for obtaining the ergodic spectral efficiency by using the Laplace transform of the desired signal and the interference.

\begin{lemma} \label{lem:laplace_mrt}
Assume that the typical user is associated with the $k$-th tier BS. We denote ${\bf{h}}_{1,i}^{k}$ as the channel coefficient between the $k$-tier BS located at ${\bf{d}}_{i}^{k}$ and the typical user. When using MRT with $B_k$ feedback, the Laplace transform of the desired signal power is 
\begin{align}
\mathbb{E}\left[ e^{-s\left| ({\bf{h}}_{1,i}^k)^* \hat {\bf{h}}_{1,i}^k \right|^2}\right] = \left(\frac{1}{s+1} \right) \left(\frac{1}{1+s \left(1 - 2^{-\frac{B_k}{N_k-1}} \right)} \right)^{N_k - 1}.
\end{align}
\end{lemma}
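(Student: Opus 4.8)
The plan is to split the desired signal power into an independent magnitude factor and a beam-alignment factor, and then reduce the Laplace transform to a one-dimensional average over the quantization error, whose law is supplied by \eqref{def:q_error_cdf}.

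First I would use $\|\hat{\bf{h}}_{1,i}^{k}\| = 1$ to write
\[
\left| ({\bf{h}}_{1,i}^{k})^* \hat{\bf{h}}_{1,i}^{k}\right|^2 = \left\| {\bf{h}}_{1,i}^{k}\right\|^2 \left| (\tilde{\bf{h}}_{1,i}^{k})^* \hat{\bf{h}}_{1,i}^{k}\right|^2 = g\cos^2\theta_i,
\]
where $g = \|{\bf{h}}_{1,i}^{k}\|^2$ and $\cos^2\theta_i = 1 - \sin^2\theta_i$. Since the entries of ${\bf{h}}_{1,i}^{k}$ are i.i.d. $\mathcal{CN}(0,1)$, the magnitude $g$ is independent of the direction $\tilde{\bf{h}}_{1,i}^{k}$, with $g\sim\mathrm{Gamma}(N_k,1)$. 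Because the codeword $\hat{\bf{h}}_{1,i}^{k}$, and hence $\cos^2\theta_i$, is a deterministic function of the direction alone, $g$ and $\cos^2\theta_i$ are independent; this structural fact is what makes the computation tractable.

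Next I would condition on $\cos^2\theta_i$ and apply the gamma moment generating function $\mathbb{E}[e^{-tg}] = (1+t)^{-N_k}$ at $t = s\cos^2\theta_i$, obtaining $\mathbb{E}\big[e^{-s g\cos^2\theta_i}\,\big|\,\cos^2\theta_i\big] = (1 + s\cos^2\theta_i)^{-N_k}$. Averaging over the error with the density $f_{\sin^2\theta_i}(u) = 2^{B_k}(N_k-1)u^{N_k-2}$ on $[0,\delta]$, $\delta = 2^{-B_k/(N_k-1)}$, obtained by differentiating \eqref{def:q_error_cdf}, reduces the lemma to evaluating $\int_0^{\delta} 2^{B_k}(N_k-1)u^{N_k-2}(1 + s(1-u))^{-N_k}\,\mathrm{d}u$.

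The main obstacle is this integral, which I would resolve by a beta--gamma decomposition rather than by brute force. Because $\delta^{N_k-1} = 2^{-B_k}$, the density above is exactly that of $\sin^2\theta_i = \delta V$ with $V\sim\mathrm{Beta}(N_k-1,1)$, so that $g\cos^2\theta_i = g(1-\delta V) = g(1-V) + (1-\delta)gV$. Splitting a $\mathrm{Gamma}(N_k,1)$ variable by an independent $\mathrm{Beta}(N_k-1,1)$ fraction produces two \emph{independent} pieces, $g(1-V)\sim\mathrm{Exp}(1)$ and $gV\sim\mathrm{Gamma}(N_k-1,1)$. The Laplace transform therefore factorizes as
\[
\mathbb{E}\big[e^{-s g\cos^2\theta_i}\big] = \mathbb{E}\big[e^{-s g(1-V)}\big]\,\mathbb{E}\big[e^{-s(1-\delta)gV}\big] = \frac{1}{1+s}\left(\frac{1}{1 + s(1-\delta)}\right)^{N_k-1},
\]
which, with $\delta = 2^{-B_k/(N_k-1)}$, is precisely the claimed expression. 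If I preferred to avoid the beta--gamma identity, the same value follows by substituting $w = 1 + s(1-u)$ and integrating by parts, the product form emerging from a short induction on $N_k$ in which the boundary terms telescope.
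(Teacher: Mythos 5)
Your proof is correct. The paper itself offers no in-text derivation for this lemma --- it simply cites Lemma 1 of \cite{jh:twc:16} --- so your argument is a genuinely self-contained alternative. The skeleton (norm/direction independence of a complex Gaussian vector, the Gamma$(N_k,1)$ moment generating function conditioned on the quantization error, and the density $2^{B_k}(N_k-1)u^{N_k-2}$ on $[0,\delta]$ obtained from \eqref{def:q_error_cdf}) is exactly what one needs, and your key step --- recognizing that $\sin^2\theta_i=\delta V$ with $V\sim\mathrm{Beta}(N_k-1,1)$ and invoking the beta--gamma algebra so that $g(1-V)\sim\mathrm{Exp}(1)$ and $gV\sim\mathrm{Gamma}(N_k-1,1)$ are independent --- makes the product form of the answer appear structurally rather than as the output of a brute-force integral. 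I checked the endpoint case $N_k=2$ by direct substitution and it agrees. It is worth noting that the paper uses essentially the same beta--gamma product identity in Appendix B (citing \cite{prodgammabeta}) to reduce the intra-cluster interference fading to $\Gamma(1,\delta)$, so your decomposition is consistent in spirit with the toolkit the authors deploy elsewhere; what your route buys is a one-line explanation of why the Laplace transform factors into an $\mathrm{Exp}(1)$ piece (the unquantizable one-dimensional component) and an $(N_k-1)$-fold attenuated Gamma piece, which the bare citation hides.
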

\begin{proof}
See Lemma 1 in \cite{jh:twc:16}.
\end{proof}

\begin{lemma} \label{lem:useful}
For the non-negative and independent random variables $X$ and $Y$, we have
\begin{align}
&\mathbb{E}\left[\ln \left(1+\frac{X}{Y+1} \right) \right] = \int_{0}^{\infty}\frac{e^{-z}}{z}\left(1-\mathbb{E}\left[e^{-zX} \right] \right) \mathbb{E}\left[e^{-zY} \right]{\rm d}z.
\end{align}
\end{lemma}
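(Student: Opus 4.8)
The plan is to reduce the logarithm to an elementary integral in which the random variables appear only through exponentials, and then push the expectation inside. The key tool is the Frullani-type identity
\begin{equation}
\ln\!\left(\frac{b}{a}\right) = \int_{0}^{\infty} \frac{e^{-az} - e^{-bz}}{z}\,{\rm d}z, \qquad a, b > 0,
\end{equation}
which follows from the standard Frullani integral applied to $f(z) = e^{-z}$ (using $f(0)=1$, $f(\infty)=0$). First I would set $a = Y+1$ and $b = X+Y+1$; since $X, Y \ge 0$ these are strictly positive, and $b/a = 1 + X/(Y+1)$, so conditionally on $X$ and $Y$ the left-hand side becomes exactly $\ln(1 + X/(Y+1))$. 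Factoring $e^{-z}$ out of both exponentials then yields the pointwise representation
\begin{equation}
\ln\!\left(1+\frac{X}{Y+1}\right) = \int_{0}^{\infty} \frac{e^{-z}}{z}\, e^{-zY}\left(1 - e^{-zX}\right)\,{\rm d}z.
\end{equation}

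Next I would take the expectation of both sides and interchange $\mathbb{E}[\cdot]$ with the integral over $z$. Because $X \ge 0$ the factor $1 - e^{-zX}$ is non-negative, and $e^{-zY}$, $e^{-z}/z$ are non-negative as well, so the entire integrand is non-negative and Tonelli's theorem licenses the swap without any integrability hypothesis. After interchanging, I would invoke the independence of $X$ and $Y$ to factor the inner expectation as $\mathbb{E}[e^{-zY}]\,\mathbb{E}[1 - e^{-zX}] = (1 - \mathbb{E}[e^{-zX}])\,\mathbb{E}[e^{-zY}]$, which gives precisely the claimed formula.

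The only genuine subtlety, which I expect to be the main point requiring care, is the behavior of the integrand near $z = 0$, where the $1/z$ factor is singular. This is harmless: as $z \to 0$ one has $1 - \mathbb{E}[e^{-zX}] = O(z)$ (heuristically $z\,\mathbb{E}[X]$ when the mean is finite, but the $O(z)$ bound $1 - e^{-zX} \le zX$ already suffices pathwise), so the singularity is cancelled and the integrand stays bounded near the origin; at infinity the $e^{-z}$ factor guarantees decay. Since I am relying on Tonelli rather than Fubini, I need not separately establish absolute convergence before the interchange—non-negativity does all the work—and the finiteness of the resulting integral is then equivalent to the finiteness of $\mathbb{E}[\ln(1 + X/(Y+1))]$ itself, so no additional assumption on $X$ or $Y$ is needed beyond non-negativity and independence.
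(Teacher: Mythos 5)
Your proof is correct. The paper itself does not prove this lemma---it simply cites Lemma~1 of Hamdi's ``useful lemma'' paper---and your Frullani-based derivation (write $\ln(1+X/(Y+1))=\int_0^\infty z^{-1}e^{-z}e^{-zY}(1-e^{-zX})\,{\rm d}z$ pathwise, swap by Tonelli using non-negativity, then factor by independence) is precisely the standard argument behind that cited result. Your handling of the $z\to 0$ singularity and the observation that Tonelli removes the need for any integrability hypothesis are both sound.
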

\begin{proof}
See Lemma 1 in \cite{hamdi:useful}
\end{proof}

By exploiting Lemma \ref{lem:laplace_mrt} and \ref{lem:useful}, we derive the sum spectral efficiency in the following corollary. 
\begin{corollary} \label{coro:sum_se}
When the typical user is associated with a $k$-th tier BS, the ergodic spectral efficiency is 
\begin{align} \label{eq:rate_het}
&R_{\rm NC|k}\left(\beta, \bar \lambda_K,\bar N_K, \bar B_K, \bar P_K,\bar S_K \right) =  \log_2(e) \int_{0}^{\infty} \frac{1}{z}   \left( 1 - \left(\frac{1}{z+1} \right) \left(\frac{1}{1+z \left(1 - 2^{-\frac{B_k}{N_k-1}} \right)} \right)^{N_k - 1}\right) \cdot\nonumber \\
&\;\;\;\;\;\;\;\;\;\;\;\;\;\;\;\;\;\;\;\;\;\;\;\;\;\;\;\;\;\;\;\;\;\;\;\;\;\;\;\;\;\;    \left(\frac{\sum_{i=1}^{K} \lambda_i \left(\frac{P_i S_i}{P_k S_k} \right)^{2/\beta} }{\sum_{i=1}^{K}\lambda_i \left(\frac{P_i S_i}{P_k S_k} \right)^{2/\beta} \left[1+\CMcal{D}(z\left(\frac{S_k}{S_i} \right), \beta) \right]} \right){\rm d} z,
\end{align}
and $\CMcal{D}(x,y)$ is defined as \eqref{eq:het:dfunc}.
\end{corollary}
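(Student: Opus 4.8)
The plan is to start from the definition $R_{\rm NC|k} = \mathbb{E}[\log_2(1 + {\rm SIR}_{\rm NC|k})]$, write $\log_2(\cdot) = \log_2(e)\ln(\cdot)$, and condition on the serving distance $R = \|{\bf d}_1^k\| = r$. The key preprocessing step is to factor the common large-scale gain $P_k r^{-\beta}$ out of both the numerator and the denominator of ${\rm SIR}_{\rm NC|k}$, rewriting it as $G/\tilde I$, where $G = |({\bf h}_{1,i_1})^* \hat{\bf h}_{1,i_1}|^2$ is the MRT signal gain and $\tilde I = (r^\beta/P_k)\sum_{j \in \mathbb{N}\backslash i_1} P_{\pi(j)}\|{\bf d}_j\|^{-\beta}|({\bf h}_{1,j})^*{\bf v}_j|^2$ is the interference normalized by the desired large-scale gain. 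This normalization is what ultimately removes the distance dependence from both pieces and lets the final expression collapse to a single integral over $z$.

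Next I would apply the integral identity underlying Lemma \ref{lem:useful}. Because the system is interference-limited there is no additive noise, so I would use its noise-free specialization: for independent nonnegative $G$ and $\tilde I$, $\mathbb{E}[\ln(1 + G/\tilde I)] = \int_0^\infty z^{-1}(1 - \mathbb{E}[e^{-zG}])\,\mathbb{E}[e^{-z\tilde I}]\,{\rm d}z$, which follows from the same Frullani-type argument as Lemma \ref{lem:useful} with the noise term set to zero. Conditioned on $r$, the serving channel $G$ is independent of the interferer locations and fading that determine $\tilde I$, so the identity applies. By Lemma \ref{lem:laplace_mrt}, $\mathbb{E}[e^{-zG}] = (z+1)^{-1}(1 + z(1 - 2^{-B_k/(N_k-1)}))^{-(N_k-1)}$, which depends on neither $r$ nor the interference; this supplies the first bracketed factor of the claimed expression verbatim.

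It then remains to evaluate $\CMcal{L}_{\tilde I}(z\mid r) = \mathbb{E}[e^{-z\tilde I}\mid r]$, which I would compute per tier using the PPP Laplace functional. The cell-association rule forces every tier-$i$ interferer to lie outside the ball of radius $r_i = (P_iS_i/(P_kS_k))^{1/\beta} r$, and each effective interferer gain $|({\bf h}_{1,j})^*{\bf v}_j|^2$ is unit-mean exponential by isotropy of ${\bf v}_j$. Carrying out the radial integral $\int_{r_i}^\infty (1 - (1 + z(r^\beta/P_k)P_i v^{-\beta})^{-1}) v\,{\rm d}v$ produces a Gauss hypergeometric function and hence $\CMcal{D}$ as defined in \eqref{eq:het:dfunc}. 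The crucial cancellation is that the argument of $\CMcal{D}$ equals $z(r^\beta/P_k)P_i r_i^{-\beta} = z\,S_k/S_i$, which is free of $r$ and of the transmit powers, while the exponent prefactor collects as $\pi\lambda_i (P_iS_i/(P_kS_k))^{2/\beta} r^2$. Thus $\CMcal{L}_{\tilde I}(z\mid r) = \exp(-\pi r^2 \sum_{i=1}^K \lambda_i (P_iS_i/(P_kS_k))^{2/\beta}\CMcal{D}(z S_k/S_i,\beta))$.

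Finally I would decondition over $r$ with the density $f_{\|{\bf d}_1^k\|}$ of Lemma \ref{lem:associate_pdf_hetnet}. Since $\mathbb{E}[e^{-zG}]$ is $r$-free, Fubini lets the $r$-average act only on $\CMcal{L}_{\tilde I}(z\mid r)$; as both $f_{\|{\bf d}_1^k\|}(r)$ and $\CMcal{L}_{\tilde I}(z\mid r)$ are Gaussian in $r^2$, the integral $\int_0^\infty f_{\|{\bf d}_1^k\|}(r)\CMcal{L}_{\tilde I}(z\mid r)\,{\rm d}r$ reduces to the elementary $\int_0^\infty 2Ar e^{-(A+B)r^2}\,{\rm d}r = A/(A+B)$ with $A = \pi\sum_i \lambda_i(P_iS_i/(P_kS_k))^{2/\beta}$ and $B = \pi\sum_i\lambda_i(P_iS_i/(P_kS_k))^{2/\beta}\CMcal{D}(zS_k/S_i,\beta)$, giving exactly the second bracketed factor of the claimed expression. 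I expect the main obstacle to be the third step: setting up the interference Laplace functional with the correct exclusion region and interferer-gain distribution, and verifying the cancellation that makes the $\CMcal{D}$-argument, and hence the entire $z$-integrand, independent of $r$; everything after that is routine.
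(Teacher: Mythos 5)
Your proposal is correct and follows essentially the same route as the paper's own (much terser) proof: combine the desired-signal Laplace transform from Lemma \ref{lem:laplace_mrt} with the interference Laplace transform already computed in the proof of Theorem \ref{theo:sir_ccdf_hetnet}, and feed both into the integral identity of Lemma \ref{lem:useful}. You are in fact more careful than the paper on one point worth keeping: the corollary's integrand carries $1/z$ rather than $e^{-z}/z$, so it is indeed the noise-free (Frullani) specialization $\mathbb{E}[\ln(1+X/Y)]=\int_0^\infty z^{-1}(1-\mathbb{E}[e^{-zX}])\mathbb{E}[e^{-zY}]\,{\rm d}z$ that is being used, and your verification that the $\mathcal{D}$-argument collapses to $zS_k/S_i$ independently of $r$ is exactly the cancellation that makes the final deconditioning over $r$ elementary.
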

\begin{proof}
In the proof of Theorem \ref{theo:sir_ccdf_hetnet}, we find the Laplace transform of the aggregated interference $\CMcal{L}_I(s)$.
The Laplace transform of the desired signal power is obtained in Lemma \ref{lem:laplace_mrt}.
Plugging the obtained Laplace transforms into Lemma \ref{lem:useful} completes the proof.
\end{proof}


\subsection{Adaptive Feedback Partition in Non-Cooperative HetNets}
We now determine $B_k$, $k \in \CMcal{K}$ to maximize the ergodic sum spectral efficiency defined as 
\begin{align} \label{eq:sum_se}
R_{\Sigma} = \sum_{k=1}^{K} \lambda_k R_{{\rm NC}|k}\left(\beta, \bar \lambda_K,\bar N_K, \bar B_K, \bar P_K,\bar S_K \right).
\end{align}
The sum spectral efficiency measures the average spectral efficiency provided in an unit area. 


Unfortunately, directly maximizing \eqref{eq:sum_se} is not easy since the ergodic spectral efficiency $R_{\rm NC|k}$ in \eqref{eq:rate_het} has a complicated integral form. For this reason, we rather maximize a lower bound on the sum spectral efficiency: 
\begin{align} \label{eq:lower_bound_noncoopt}
R_{\rm NC|k}\left(\beta, \bar \lambda_K,\bar N_K, \bar B_K, \bar P_K,\bar S_K \right) & \ge 
R_{\rm NC, lb|k}\left(\beta,\bar \lambda_K, \bar N_K, \bar B_K, \bar P_K,\bar S_K \right) \nonumber \\
& = \log_2\left(1 + \left(1 - 2^{-\frac{B_k}{N_k - 1}} \right) \frac{\exp(\psi(N))}{\sum_{i=1}^{K}\mathbb{E}[I_i]} \right),
\end{align}
where $\psi(\cdot)$ is the digamma function defined as
\begin{align}
\psi(x) = \int_{0}^{\infty} \frac{e^{-t}}{t} - \frac{e^{-xt}}{1- e^{-t}} {\rm d} t	
\end{align}
and the inequality follows from Corollary 1 of \cite{jh:twc:16}.
The mean interference of the $i$-th iter $\mathbb{E}[I_i]$ is calculated as follows 
\begin{align}
\mathbb{E}\left[I_i \right] 
& \mathop {=}^{(a)} \mathbb{E}_R \left[2 \pi \lambda_i R^{\beta} \frac{P_i}{P_k} \int_{(\frac{P_iS_i}{P_kS_k})^{1/\beta}R}^{\infty} r^{-\beta+1} {\rm d} r\right] \nonumber \\
& = \frac{2\lambda_i \left(\frac{P_iS_i}{P_k S_k} \right)^{2/\beta}\left(\frac{S_k}{S_i} \right)}{(\beta-2)\sum_{i=1}^{K} \lambda_i \left(\frac{P_i S_i}{P_kS_k} \right)^{2/\beta}},
\end{align}
where (a) follows that each interference fading $\left| ({\bf{h}}_{1,j}^{i})^*{\bf{v}}_j^{i} \right|^2$ is an exponential random variable with unit mean.
By summing up from $i=1$ to $K$, the following is obtained as
\begin{align} \label{eq:expect_i}
\sum_{i=1}^{K}\mathbb{E}[I_i] = \frac{2\sum_{i=1}^{K}\lambda_i \left(\frac{P_iS_i}{P_k S_k} \right)^{2/\beta}\left(\frac{S_k}{S_i} \right)}{(\beta-2)\sum_{i=1}^{K} \lambda_i \left(\frac{P_i S_i}{P_kS_k} \right)^{2/\beta}}.
\end{align}
For simplicity, we denote that $\sum_{i=1}^{K}\mathbb{E}[I_i] = I_{\rm mean}$. 
Then a lower bound on the ergodic sum spectral efficiency is
\begin{align} \label{eq:nc_lowerb}
R_{\Sigma, {\rm lb}} = \sum_{k = 1}^{K} \lambda_k R_{\rm NC, lb|k}\left(\beta,\bar \lambda_K, \bar N_K, \bar B_K, \bar P_K,\bar S_K \right).
\end{align}
Leveraging \eqref{eq:nc_lowerb}, a feedback partition is obtained in the following proposition.


\begin{proposition} \label{prop:het}
In the non-cooperative case, the feedback partition that maximizes the lower bound is 
\begin{align} \label{eq:fb_partition_het}
(\tilde B_k^{\star})_{\rm NC} = (N_k-1)\log_2\left(\frac{ \exp(\psi(N) )\left( N_k -1- \frac{1}{\mu}\right)}{ (N_k-1) \left( \exp(\psi(N)) + \frac{2\sum_{i=1}^{K}\lambda_i \left(\frac{P_iS_i}{P_k S_k} \right)^{2/\beta}\left(\frac{S_k}{S_i} \right)}{(\beta-2)\sum_{i=1}^{K} \lambda_i \left(\frac{P_i S_i}{P_kS_k} \right)^{2/\beta}}\right) } \right).
\end{align}
The parameter $-1/\mu$ is the maximum value that satisfies
\begin{align}
\sum_{k=1}^{K} \lambda_k (N_k-1)\log_2\left(\frac{ \exp(\psi(N) )\left( N_k -1- \frac{1}{\mu}\right)}{ (N_k-1) \left( \exp(\psi(N)) + \frac{2\sum_{i=1}^{K}\lambda_i \left(\frac{P_iS_i}{P_k S_k} \right)^{2/\beta}\left(\frac{S_k}{S_i} \right)}{(\beta-2)\sum_{i=1}^{K} \lambda_i \left(\frac{P_i S_i}{P_kS_k} \right)^{2/\beta}}\right) } \right) \le B_{\rm total},
\end{align}
where $B_{\rm total}$ is the total feedback constraint per unit area. 
\end{proposition}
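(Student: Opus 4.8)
The plan is to recognize \eqref{eq:nc_lowerb} as the objective of a convex program and to characterize its maximizer through the Karush--Kuhn--Tucker (KKT) conditions. I would first state the problem explicitly as
\begin{align}
\max_{B_1,\ldots,B_K}\ \sum_{k=1}^{K}\lambda_k \log_2\!\Big(1 + \big(1 - 2^{-\frac{B_k}{N_k-1}}\big)\,c_k\Big) \quad \text{s.t.}\quad \sum_{k=1}^{K}\lambda_k B_k \le B_{\rm total},\ B_k \ge 0,
\end{align}
where I abbreviate $c_k = \exp(\psi(N))/I_{\rm mean}$ with $I_{\rm mean}$ the tier-$k$ mean interference of \eqref{eq:expect_i}. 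The key structural observation is that the objective is \emph{separable}: each summand involves only a single $B_k$, so the variables couple solely through the linear budget, and the feasible set is a polytope.

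Second, I would verify that each summand is concave in $B_k$. Writing $a_k = 2^{-B_k/(N_k-1)}$, differentiation gives
\begin{align}
\frac{\partial}{\partial B_k}\,\lambda_k \log_2\!\big(1 + c_k(1-a_k)\big) = \frac{\lambda_k\, c_k a_k}{(N_k-1)\big(1 + c_k(1-a_k)\big)},
\end{align}
which is strictly positive and monotonically decreasing in $B_k$ (increasing $B_k$ shrinks $a_k$, lowering the numerator and raising the denominator). Hence each term is concave, the separable objective is jointly concave over the polytope, and the KKT conditions are both necessary and sufficient for the global optimum. Positivity of the derivative also shows the rate strictly increases with feedback, so the budget constraint is active at the optimum.

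Third, I would introduce the Lagrangian $\mathcal{L}(\bar B_K,\mu) = R_{\Sigma,{\rm lb}} - \mu\big(\sum_{k}\lambda_k B_k - B_{\rm total}\big)$ and impose stationarity on the tiers with $B_k>0$. Because both $\partial R_{\Sigma,{\rm lb}}/\partial B_k$ and the constraint gradient carry a common factor $\lambda_k$, the density cancels and the stationarity condition reduces to $c_k a_k / \big[(N_k-1)(1+c_k(1-a_k))\big] = \mu$, a single equation that is linear in $a_k$ once the denominator is cleared. Solving for $a_k = 2^{-B_k/(N_k-1)}$, taking $\log_2$, and substituting $c_k = \exp(\psi(N))/I_{\rm mean}$ together with the identity $\exp(\psi(N)) + I_{\rm mean} = I_{\rm mean}(1+c_k)$ rearranges the stationary point into exactly \eqref{eq:fb_partition_het}, the dual variable entering through the water level $-1/\mu$.

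Finally, I would fix $-1/\mu$ from the budget. Since $(\tilde B_k^\star)_{\rm NC}$ in \eqref{eq:fb_partition_het} is monotonically increasing in $-1/\mu$ and the budget is active, the optimal water level is the largest $-1/\mu$ for which $\sum_{k}\lambda_k(\tilde B_k^\star)_{\rm NC}\le B_{\rm total}$ still holds, which is precisely the characterization in the proposition. Complementary slackness handles the boundary: a tier deactivates ($B_k^\star=0$) once its formula would return a negative value, i.e.\ whenever $-1/\mu$ falls below the activation threshold $(N_k-1)I_{\rm mean}/\exp(\psi(N))$; this is the analytic source of finding~1, that feedback helps a tier only when its mean interference is small enough. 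I expect the analytic steps to be routine and the main obstacle to be organizational: arguing that no closed form exists for $-1/\mu$ (so it must be located by a one-dimensional search over the water level) and treating consistently the boundary tiers where nonnegativity binds, so that the single characterization in the proposition simultaneously captures the active and inactive tiers.
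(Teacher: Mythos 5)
Your proposal is correct and follows essentially the same route as the paper: relax the integer constraint, exploit separability of the lower bound $R_{\Sigma,{\rm lb}}$, form the Lagrangian, solve the KKT stationarity condition for $2^{-B_k/(N_k-1)}$, and fix the water level $-1/\mu$ from the active budget constraint. The only differences are that you make explicit what the paper leaves implicit (concavity of each summand, hence sufficiency of KKT, and the complementary-slackness treatment of tiers that deactivate), which strengthens rather than changes the argument.
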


\begin{proof}
Using \eqref{eq:nc_lowerb}, we formulate the feedback partition problem as
\begin{align} \label{opt:hetnet_lb}
&\mathop{\rm maximize}_{B_{k} \in \mathbb{Z}^+ , k\in \{1,...,K\}} \; \sum_{k=1}^{K} \lambda_{k} R_{\rm NC,lb|k}\left(\beta, \bar \lambda_K,\bar N_K, \bar B_K, \bar P_K,\bar S_K \right),  \nonumber \\
&\;\;\;\;\;{\rm subject\;to} \; \sum_{k=1}^{K} \lambda_k B_{k} \le B_{\rm total}.
\end{align}
Avoiding integer programming, we relax the feasible field of a solution to $\mathbb{R}^+$. Then the corresponding Lagrangian function is 
\begin{align}
L(\bar B_K, \mu) = \sum_{k=1}^{K} \lambda_{k} R_{\rm NC, lb|k}\left(\beta, \bar \lambda_K,\bar N_K, \bar B_K, \bar P_K,\bar S_K \right) + \mu\left( \sum_{k=1}^{K} \lambda_k B_{k} - B_{\rm total}\right).
\end{align}
According to the KKT condition, we have
\begin{align}
&\frac{\partial L(\bar B_K, \mu)}{\partial B_k} =   \frac{\lambda_k 2^{-\frac{B_k}{N_k-1}} \exp(\psi(N))}{ (N_k-1) \left(I_{\rm mean}+ \left(1-2^{-\frac{B_k}{N_k-1}} \right)\exp(\psi(N)) \right)} + \lambda_k\mu, \label{eq:kkt1_het_lb} \\
&\frac{\partial L(\bar B_K, \mu)}{\partial \mu} =   \sum_{k=1}^{K} \lambda_k B_{k} - B_{\rm total} = 0.
\end{align}
Solving the first condition $\eqref{eq:kkt1_het_lb}$, we have
\begin{align}
(\tilde B_k^{\star})_{\rm NC} = (N_k-1)\log_2\left(\frac{ \exp(\psi(N) )\left( N_k -1- \frac{1}{\mu}\right)}{ (N_k-1) \left( \exp(\psi(N)) + \frac{2\sum_{i=1}^{K}\lambda_i \left(\frac{P_iS_i}{P_k S_k} \right)^{2/\beta}\left(\frac{S_k}{S_i} \right)}{(\beta-2)\sum_{i=1}^{K} \lambda_i \left(\frac{P_i S_i}{P_kS_k} \right)^{2/\beta}}\right) } \right).
\end{align}
The parameter $-1/\mu$ is determined as the maximum value that satisfies 
\begin{align}
\sum_{k=1}^{K} \lambda_k (N_k-1)\log_2\left(\frac{ \exp(\psi(N) )\left( N_k -1- \frac{1}{\mu}\right)}{ (N_k-1) \left( \exp(\psi(N)) + \frac{2\sum_{i=1}^{K}\lambda_i \left(\frac{P_iS_i}{P_k S_k} \right)^{2/\beta}\left(\frac{S_k}{S_i} \right)}{(\beta-2)\sum_{i=1}^{K} \lambda_i \left(\frac{P_i S_i}{P_kS_k} \right)^{2/\beta}}\right) } \right) \le B_{\rm total}.
\end{align}
This completes the proof.
\end{proof}

\begin{remark} \label{remark:floor} \normalfont
Since the feedback has an positive integer value in practice, we have to perform further processes to the relaxed solution $(\tilde B_k^{\star})_{\rm NC}$. We introduce two possible methods.
First, we can use the round function $\lfloor (\tilde B_k^{\star})_{\rm NC} \rceil$. With the round function, it is possible that $\sum_{k=1}^{K} \lambda_k \lfloor ( \tilde B_k^{\star} )_{\rm NC} \rceil \ge B_{\rm total}$, therefore the manual feedback adjustment is necessary after applying the round function. 
Second, we can iteratively add a feedback bit to each tier. For example, starting with the floored solution, we iteratively find which tier is the best choice for adding a remaining feedback bit by computing the sum ergodic spectral efficiency. Subsequently, we add a feedback bit to the selected tier. We repeat this until the used feedback equals to $B_{\rm total}$. 
More detail processes for supplementing the floor function are described in Remark 3 and 4 of \cite{park:tcom:17}. 
Due to space limitation we do not explore these methods in this paper.
\end{remark}

\begin{remark} \normalfont \label{remark:water}
We observe that Proposition \ref{prop:het} is a water-filling type solution where $-1/\mu$ determines the water level. As in a conventional water-filling method, $-1/\mu$ can be found by an iterative algorithm. Moreover, the iterative algorithm does not have to be performed frequently since Proposition \ref{prop:het} only depends on the system parameters that do not change often, for example the number of antennas $N_k$, the transmit power $P_k$, the biasing factor $S_k$, the density $\lambda_k$, and the path-loss exponent $\beta$. 
In particular, Proposition \ref{prop:het} does not change depending on instantaneous SIR, so that the complexity required for Proposition \ref{prop:het} is low.
\end{remark}

\begin{remark} \normalfont
When there is no biased association, i.e., $S_i = 1$ for $i \in \CMcal{K}$, the proposed strategy becomes 
\begin{align}
(\tilde B_k^{\star})_{\rm NC} = (N_k-1)\log_2\left(\frac{ \exp(\psi(N) )\left( N_k -1- \frac{1}{\mu}\right)}{ (N_k-1) \left( \exp(\psi(N)) + \frac{2}{\beta-2}\right) } \right),
\end{align}
which is a function of the number of antennas $N_k$ and the path-loss exponent $\beta$. 
This result matches with \cite{hsjo:2012_twc}, which showed that when there is no biased association in a single antenna $K$-tier HetNet, the spectral efficiency is only a function of the path-loss exponent $\beta$. 
\end{remark}

\begin{remark} \normalfont
In the proposed feedback partition strategy \eqref{eq:fb_partition_het}, we observe that the feedback for single-user MRT in a particular tier is useful only if the corresponding mean interference $I_{\rm mean}$ is small enough. Specifically, recalling \eqref{eq:expect_i}, $(\tilde B_k^{\star})_{\rm NC} \ge 1$ only if 
\begin{align} \label{eq:cond_nofb}
\exp(\psi(N_k)) \left(\frac{(N_k-1-\frac{1}{\mu})2^{-\frac{1}{N_k-1}}}{N_k - 1} -1 \right)\ge  I_{\rm mean}.
\end{align}
Since $-1/\mu$ is proportional to $B_{\rm total}$, if the amount of mean interference is too large compared to $B_{\rm total}$, it is beneficial not to allocate the feedback to the corresponding tier.
As the mean interference increases, the SIR becomes low and this makes the spectral efficiency improved marginally by using the feedback. For this reason, saving the feedback of the corresponding tier for other tiers is more beneficial. 
\end{remark}

\section{Cooperative Case} 
In the previous section, if a particular tier's user experiences a large amount of interference, single-user MRT based on the limited feedback is not useful to improve the spectral efficiency. 
In this case, we can use the feedback information to mitigate the other BS interference through multi-cell coordination. 
In this section, we analyze the performance of BS coordination in a HetNet and propose a feedback partition by leveraging the derived expressions.

\subsection{Performance Characterization}



In the performance characterization of the cooperative case, a challenging part is obtaining the distribution of the distance to the BS at ${\bf{d}}_{i_L}$ (the PDF of $\left\| {\bf{d}}_{i_L}\right\|$). This indicates the distance of the BS located furthest from the typical user in the coordination set $\CMcal{C}$. It is important because it determines a boundary between the intra-cluster interference and the out-of-cluster interference, which is necessary for the feedback allocation. The main source of the difficulty is that each tier uses a different transmit power and biasing factor, so that the intensity measure of aggregated signal power of each tier has different features. Due to this heterogeneity, ordering the BSs according to their biased power across the tiers is complicated. 
To resolve this, we first derive the following lemma that transforms a $K$-tier HetNet to a statistically equivalent single-tier network.

\begin{lemma} [Transformation lemma] \label{lem:equivalent_transition}
Consider the $\ell$-th tier network for $\ell \in \CMcal{K}$ denoted as $\Phi_\ell^{\rm M} = \{{\bf{d}}_{i}^{\ell}, P_\ell, S_\ell, i \in \mathbb{N}\} $ with density $\lambda_{\ell}$. The intensity measure of biased signal power of $\Phi_\ell^{\rm M}$ received by the typical user, i.e., $P_\ell S_\ell \left\| {\bf{d}}_i^{\ell}\right\|^{-\beta}$, is statistically equivalent to that of $\Phi_{\ell \rightarrow k}^{\rm M} = \{{\bf{d}}_{i}^{\ell \rightarrow k}, P_k, S_k, i \in \mathbb{N}\} $ with density $\tilde \lambda_{\ell}$, provided that the density $\tilde \lambda_{\ell}$ is scaled to
\begin{align}
{\tilde \lambda}_{\ell}=\lambda_\ell \left( \frac{P_\ell S_\ell}{P_k S_k}\right)^{\frac{2}{\beta}}.
\end{align}
\end{lemma}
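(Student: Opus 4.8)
The plan is to exploit the mapping theorem for Poisson point processes: the image of a PPP under a measurable map is again a PPP, whose intensity measure is the pushforward of the original intensity measure. Since two Poisson processes with identical intensity measures are equal in distribution, it suffices to show that the point process of received biased powers generated by $\Phi_\ell^{\rm M}$ and that generated by $\Phi_{\ell \rightarrow k}^{\rm M}$ carry the same intensity measure on the power axis, and then to solve for the density $\tilde\lambda_\ell$ that forces this equality.

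First I would make the phrase ``intensity measure of biased signal power'' precise. Define the map $T_\ell : \mathbb{R}^2 \rightarrow \mathbb{R}_+$ by $T_\ell({\bf{x}}) = P_\ell S_\ell \left\| {\bf{x}} \right\|^{-\beta}$ and consider the image point process $T_\ell(\Phi_\ell) = \{ P_\ell S_\ell \left\| {\bf{d}}_i^{\ell} \right\|^{-\beta} \}$ on $\mathbb{R}_+$. By the mapping theorem this is a PPP, and its law is pinned down by the mean number of points whose received power exceeds any level $t$.

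Next I would evaluate that quantity. A BS at ${\bf{x}}$ delivers power at least $t$ exactly when $\left\| {\bf{x}} \right\| \le (P_\ell S_\ell / t)^{1/\beta}$, so the mean number of such points equals the homogeneous intensity $\lambda_\ell$ times the area of that disc,
\[
\Lambda_\ell\big([t,\infty)\big) = \lambda_\ell\, \pi \left( \frac{P_\ell S_\ell}{t} \right)^{2/\beta} = \pi \lambda_\ell (P_\ell S_\ell)^{2/\beta}\, t^{-2/\beta}.
\]
Running the identical computation for $\Phi_{\ell \rightarrow k}^{\rm M}$, which uses the same map but with marks $P_k, S_k$ and density $\tilde\lambda_\ell$, yields $\tilde\Lambda_\ell\big([t,\infty)\big) = \pi \tilde\lambda_\ell (P_k S_k)^{2/\beta} t^{-2/\beta}$.

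Finally I would equate the two tail measures for every $t>0$. The common factors $\pi$ and $t^{-2/\beta}$ cancel, leaving $\lambda_\ell (P_\ell S_\ell)^{2/\beta} = \tilde\lambda_\ell (P_k S_k)^{2/\beta}$, which rearranges to the claimed scaling $\tilde\lambda_\ell = \lambda_\ell ( P_\ell S_\ell / (P_k S_k) )^{2/\beta}$. Since both image processes are Poisson and, under this choice of $\tilde\lambda_\ell$, share the same intensity measure, they are statistically equivalent, which is exactly the assertion. The one place I would be careful is the foundational invocation — the mapping theorem together with the uniqueness-in-distribution of a Poisson process given its intensity measure — because the map $T_\ell$ is not injective (all BSs at a common radius collapse to a single power level); once it is clear that non-injectivity is harmless for the pushforward, the remainder is a one-line area computation.
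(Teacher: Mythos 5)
Your proposal is correct and follows essentially the same route as the paper: both compute the intensity measure of the point process of biased received powers (the paper parametrizes it by the inverse power $\left\| {\bf{d}}_i^{\ell}\right\|^{\beta}/(P_\ell S_\ell)$ and invokes the displacement/Campbell theorems, you parametrize by the power itself and invoke the mapping theorem, which is the same computation up to the change of variable $t \mapsto 1/t$), and then equate the two measures to solve for $\tilde\lambda_\ell$. Your closing remark that non-injectivity of the map is harmless for the pushforward is a sound extra precaution but does not change the argument.
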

\begin{proof}
By the displacement theorem \cite{baccelli:inria}, the intensity measure of biased signal power of $\Phi_\ell^{\rm M}$ experienced by the typical user is
\begin{align}
\Lambda_{\ell}((0,t]) 
&= \mathbb{E}\left[\sum_{{\bf{d}}_i^{\ell} \in \Phi_\ell^{\rm M}} {\bf{1}}\left( \frac{\left\| {\bf{d}}_i^{\ell} \right\|^{\beta}}{P_{\ell} S_{\ell}}  < t \right) \right] \nonumber \\
& \mathop{=}^{(a)} 2 \pi \lambda_{\ell } \int_{0}^{({P_\ell S_\ell t})^{\frac{1}{\beta}}} r {\rm d} r \nonumber \\
& = \pi \lambda_{\ell} \left( P_{\ell} S_{\ell} \right)^{2/\beta} t^{2/\beta}
\end{align}
where (a) follows Campbell's theorem \cite{baccelli:inria}. Similarly, the intensity measure of biased signal power of $\Phi_{\ell \rightarrow k}^{\rm M}$ is
\begin{align}
\Lambda_{\ell \rightarrow k}((0,t])  = \pi \tilde \lambda_{\ell} \left( P_{k} S_{k} \right)^{2/\beta} t^{2/\beta}.
\end{align}
For this reason, if ${\tilde \lambda}_{\ell}=\lambda_\ell \left( \frac{P_\ell S_\ell}{P_k S_k}\right)^{\frac{2}{\beta}}$, the two biased signal power becomes equivalent. This completes the proof.
\end{proof}

The implication of Lemma \ref{lem:equivalent_transition} is that by rescaling each density as ${\tilde \lambda}_{\ell }=\lambda_\ell \left( \frac{P_\ell S_\ell}{P_k S_k}\right)^{\frac{2}{\beta}}$ for $\ell \in \CMcal{K}$, a $K$-tier HetNet 
can be transformed to a statistically equivalent network where the transmit power and the biasing factor are same as $P_k$ and $S_k$. Leveraging this, we obtain the PDF of $\left\| {\bf{d}}_{i_L}\right\|$ in the following lemma.

\begin{lemma} \label{lem:het_L_closest}
Assume that the furthest BS of the coordination set $\CMcal{C}$ belongs to the $k$-th tier, i.e., $\pi(i_L) = k$. Then the PDF of the distance $\left\| {\bf{d}}_{i_L} \right\|$ is 
\begin{align}
f_{\left\| {\bf{d}}_{i_L} \right\|}(r) = \frac{2\left( \pi \sum_{i=1}^{K} \lambda_i \left(\frac{P_i  S_i} {P_k S_k} \right)^{2/\beta}r^2 \right)^L}{r \Gamma(L)} \exp\left( { -\pi \sum_{i=1}^{K} \lambda_i \left(\frac{P_i  S_i} {P_k S_k} \right)^{2/\beta} r^2} \right).
\end{align}
\end{lemma}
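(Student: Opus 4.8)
The plan is to reduce the $K$-tier ordering problem to a single-tier nearest-neighbor distance computation via the transformation lemma, and then invoke the standard $L$-th contact distance distribution of a homogeneous PPP. First I would apply Lemma \ref{lem:equivalent_transition} to every tier $\ell \in \CMcal{K}$, rescaling its density to $\tilde\lambda_\ell = \lambda_\ell\left(\frac{P_\ell S_\ell}{P_k S_k}\right)^{2/\beta}$. Since the superposition of independent homogeneous PPPs is again a homogeneous PPP, the biased received powers of the whole HetNet are then statistically equivalent to those generated by a single-tier PPP $\tilde\Phi$ with common power $P_k S_k$ and total density $\tilde\lambda = \sum_{i=1}^{K} \lambda_i\left(\frac{P_i S_i}{P_k S_k}\right)^{2/\beta}$. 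In $\tilde\Phi$ all points carry the same power, so ordering by biased received power coincides with ordering by distance; consequently the furthest BS of the coordination set $\CMcal{C}$, i.e.\ the $L$-th strongest biased received power, corresponds exactly to the $L$-th nearest point of $\tilde\Phi$.

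Next I would handle the conditioning $\pi(i_L)=k$. The transformation acts as the identity on tier $k$, since its power $P_k S_k$ and density $\lambda_k$ are unchanged, so the physical distance of a tier-$k$ BS equals its effective distance in $\tilde\Phi$. Moreover, because $\tilde\Phi$ arises as a superposition of independent homogeneous PPPs, it is an independently marked PPP in which a point is labelled tier $i$ with probability $\tilde\lambda_i/\tilde\lambda$, and these marks are independent of the point locations. Hence conditioning on the $L$-th nearest point being of tier $k$ does not alter the distribution of its distance, and $\left\|{\bf{d}}_{i_L}\right\|$ given $\pi(i_L)=k$ has the same law as the distance $R_L$ to the $L$-th nearest point of a homogeneous PPP of density $\tilde\lambda$.

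Finally I would compute the $L$-th contact distance distribution. Since the number of points of $\tilde\Phi$ inside a disk of radius $r$ is Poisson with mean $\pi\tilde\lambda r^2$, the event $\{R_L > r\}$ is the event that this disk contains at most $L-1$ points, giving the CCDF $\mathbb{P}\left[R_L > r\right] = \sum_{m=0}^{L-1} \frac{(\pi\tilde\lambda r^2)^m}{m!} e^{-\pi\tilde\lambda r^2}$. Differentiating in $r$ and using that the resulting sum telescopes, I obtain $f_{R_L}(r) = \frac{2(\pi\tilde\lambda r^2)^L}{r\,\Gamma(L)} e^{-\pi\tilde\lambda r^2}$, which is precisely the claimed expression once $\tilde\lambda$ is substituted. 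I expect the main obstacle to be the rigorous justification of the conditioning step: one must argue carefully that the tier labels are independent of the spatial configuration, so that restricting to $\pi(i_L)=k$ neither biases the distance of the $L$-th nearest point nor disturbs the effective-to-physical distance identification that holds only for tier $k$; the remaining contact-distance calculation is then routine.
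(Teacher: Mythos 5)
Your proposal is correct and follows essentially the same route as the paper: apply the transformation lemma (Lemma \ref{lem:equivalent_transition}) to rescale each tier's density, superpose into a single homogeneous PPP of density $\sum_{i=1}^{K}\lambda_i\left(\frac{P_iS_i}{P_kS_k}\right)^{2/\beta}$, and read off the $L$-th nearest-point distance distribution. You go slightly further than the paper in two welcome ways — deriving the $L$-th contact distance from the Poisson count rather than citing it, and explicitly justifying via mark--location independence why conditioning on $\pi(i_L)=k$ does not bias the distance (a step the paper's proof passes over silently) — but the underlying argument is the same.
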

\begin{proof}
We first transform a $K$-tier HetNet to a single-tier network whose transmit power and biasing factor are equal to $P_k$ and $S_k$. By exploiting Lemma \ref{lem:equivalent_transition}, we rescale the density as $\lambda_\ell \left(\left(P_\ell S_\ell\right) / \left( P_k S_k \right)\right)^{2/\beta}$. By doing this, we transform the $\ell$-th tier network to $\Phi_{\ell \rightarrow k}^{\rm M} = \{{\bf{d}}_{i}^{\ell \rightarrow k}, P_k, S_k, i \in \mathbb{N}\} $ with density $\lambda_\ell \left(\left(P_\ell S_\ell\right) / \left( P_k S_k \right)\right)^{2/\beta}$. Note that the original $\ell$-th tier network $\Phi_{\ell}^{\rm M}$ and the transformed $\ell$-th tier network $\Phi_{\ell \rightarrow k}^{\rm M}$ are statistically equivalent as shown in Lemma \ref{lem:equivalent_transition}. Then, by the superposition theorem \cite{baccelli:inria}, the aggregated network $\sum_{\ell \in \CMcal{K}} \Phi_{\ell \rightarrow k}^{\rm M}$ is a homogeneous network with transmit power $P_k$, biasing factor $S_k$, and density $\sum_{i=1}^{K} \lambda_i \left(\frac{P_i  S_i} {P_k S_k} \right)^{2/\beta}$.
Since $\sum_{\ell \in \CMcal{K}} \Phi_{\ell \rightarrow k}^{\rm M}$ is a homogeneous network, we can use the conventional PDF of the distance presented in \cite{haenggi:tit:05}. In a homogeneous PPP with density $\lambda$, the PDF of the $L$-th closest point to the origin is
\begin{align} \label{dist:kthpoint_PPP}
f(r) = \frac{2(\lambda\pi r^2)^L}{r \Gamma\left( L \right)}e^{-\lambda \pi r^2}.
\end{align}
Plugging $\sum_{i=1}^{K} \lambda_i \left(\frac{P_i  S_i} {P_k S_k} \right)^{2/\beta}$ into $\lambda$ completes the proof.
\end{proof}

\begin{remark} \normalfont
When $L = 1$, i.e., the non-cooperative case, the obtained PDF in Lemma \ref{lem:het_L_closest} boils down to \eqref{eq:pdf_first_touch}, which describes the PDF of the closest BS conditioned on that the typical user is associated with the $k$-th tier BS. This implies that our transformation lemma can be applied to characterize a general distance distribution in a $K$-tier HetNet. 
\end{remark}

Next, we define the intra-cluster BS geometry parameter $\delta_{1, \ell}$, $\ell \in \{2,...,L\}$ to characterize the relative intra-cluster interference power.
We define the geometric parameter $\delta_{1, \ell}$ as the ratio between the path-loss of the home BS and the $\ell$-th closest BS for $\ell \in\{2,...,L\}$, i.e., $\delta_{1, \ell} =  \left( P_{\pi(i_{\ell})}\left\| {\bf{d}}_{i_{\ell}} \right\|^{-\beta} \right)/ \left( P_{\pi(i_1)}\left\| {\bf{d}}_{i_1} \right\|^{-\beta} \right)$.
 We note that the geometric parameter $\delta_{1, \ell}$ is originally introduced in \cite{lee:twc:15}, and is generalized for HetNets in our work. As explained in \cite{lee:twc:15}, $\delta_{1, \ell}$ measures the relative intra-cluster interference power coming from ${\bf{d}}_{i_\ell}$, so that a large value of $\delta_{1, \ell}$ means large amount of intra-cluster interference. 
When each biasing factor is same, i.e., $S_1 = ... = S_K$, $\delta_{1, \ell_1} > \delta_{1, \ell_2}$ if $\ell_1 < \ell_2$ by the definition. For general biasing factors, however, this is not necessarily guaranteed. 
We denote a set of the geometric parameters as $\bar \delta_{1,L} = \{\delta_{1,2}, ..., \delta_{1, L}\}$, and analyze the performance of the cooperative case under the assumption that the relative intra-cluster interference power is fixed, while out-of-cluster interference is random as in \cite{lee:twc:15}.

By using Lemma \ref{lem:het_L_closest} and the intra-cluster BS geometry, we derive the following theorem that presents the SIR CCDF of the cooperative case.

\begin{theorem} \label{theo:sir_ccdf_coop}
Assume that $\bar \delta_{1, L}$, $\bar B_{L}$ is given, and also $\pi(i_1) = m$, $\pi(i_L) = k$. Then, the conditioned SIR CCDF of a K-tier HetNet in the cooperative case is
\begin{align} \label{eq:sir_ccdf_coop}
&F^{\rm c}_{{\rm SIR}_{\rm C}^{m}}\left(\beta, \bar \lambda_K,\bar N_K, \bar B_{{L}}, \bar P_K, \bar S_K, \bar \delta_{1,L};\gamma \right) = \nonumber \\
&  \prod_{i_\ell \in \CMcal{C}\backslash i_1} \left(\frac{1}{1+\gamma \delta_{1,\ell}  2^{-\frac{B_{i_\ell}}{L-1}} } \right) 
\left( \frac{\sum_{i=1}^{K} \lambda_i \left(\frac{P_i S_i}{P_k S_k} \right)^{2/\beta} }{\sum_{i=1}^{K} \lambda_i \left(\frac{P_i S_i}{P_k S_k} \right)^{2/\beta}\left[1+ \CMcal{D}\left(\gamma \delta_{1, L} \cdot \left(\frac{S_k}{S_i} \right), \beta \right) \right]} \right)^L,
\end{align}
where 
\begin{align} \label{eq:coop:dfunc}
\CMcal{D}( x, y) = \frac{2x}{y-2} {}_2F_1\left(1, 1-\frac{2}{y}, 2-\frac{2}{y}, -x \right),
\end{align}
with ${}_2F_1\left(\cdot,\cdot,\cdot,\cdot\right)$ is the Gaussian hypergeometric function.
\end{theorem}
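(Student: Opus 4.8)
The plan is to compute the CCDF $\mathbb{P}[{\rm SIR}_{\rm C|m} \ge \gamma]$ by exploiting the fact that, conditioned on $\bar\delta_{1,L}$, the desired signal fades and the two interference terms $I_{\rm In}$ and $I_{\rm Out}$ decouple into a product of independent contributions. First I would observe that under multi-cell ZF with only $L$ effective antennas, the desired channel gain $|({\bf{h}}_{1,i_1})^*{\bf{v}}_{i_1}|^2$ is a chi-squared random variable with one degree of freedom (equivalently, unit-mean exponential), since ${\bf{v}}_{i_1}$ lies in the one-dimensional null space orthogonal to the $L-1$ quantized intra-cluster channels. This reduces the conditional CCDF to $\mathbb{P}[g \ge \gamma (I_{\rm In}+I_{\rm Out}) / (P_{\pi(i_1)}\|{\bf{d}}_{i_1}\|^{-\beta})]$ for an exponential $g$, which by the standard tail formula equals $\mathbb{E}\bigl[\exp(-\gamma (I_{\rm In}+I_{\rm Out})/(P_m\|{\bf{d}}_{i_1}\|^{-\beta}))\bigr]$, i.e.\ a Laplace transform of the total interference normalized by the desired path-loss.

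Next I would split this Laplace transform into the intra-cluster and out-of-cluster factors using their conditional independence. For $I_{\rm In}$, each term $i_\ell \in \CMcal{C}\backslash i_1$ carries residual interference whose strength is governed by the quantization error. Using the geometric parameter $\delta_{1,\ell} = (P_{\pi(i_\ell)}\|{\bf{d}}_{i_\ell}\|^{-\beta})/(P_{\pi(i_1)}\|{\bf{d}}_{i_1}\|^{-\beta})$ to absorb the relative path-loss, and invoking the quantization-cell CDF \eqref{def:q_error_cdf} with $N_{\pi(i_\ell)}=L$ so that the mean residual quantization error scales as $2^{-B_{i_\ell}/(L-1)}$, the per-BS Laplace factor should evaluate to $1/(1+\gamma\delta_{1,\ell}2^{-B_{i_\ell}/(L-1)})$. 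Taking the product over $i_\ell \in \CMcal{C}\backslash i_1$ yields the first factor in \eqref{eq:sir_ccdf_coop}. For $I_{\rm Out}$, the interfering BSs lie outside the $L$-th order weighted Voronoi region, so their nearest admissible distance is set by $\|{\bf{d}}_{i_L}\|$; here I would invoke the Transformation Lemma \ref{lem:equivalent_transition} to rescale every tier's density to the common $(P_k,S_k)$ scale and apply Lemma \ref{lem:het_L_closest} for the PDF of $\|{\bf{d}}_{i_L}\|$, then carry out the standard PPP Laplace-functional (probability generating functional) computation over the exterior region. The resulting exponent produces the function $\CMcal{D}(\cdot,\beta)$ via the Gauss hypergeometric integral, and averaging the exponential over $\|{\bf{d}}_{i_L}\|$ against its Gamma-type density collapses the radial integral into the closed-form $L$-th power ratio appearing as the second factor.

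The main obstacle I anticipate is handling the heterogeneity consistently in the out-of-cluster term: because each tier has a distinct transmit power and bias, the interference summands are not identically distributed across tiers, and the out-of-cluster exclusion region is the weighted Voronoi cell rather than a simple disk. The Transformation Lemma is precisely the tool that flattens this into a single homogeneous process at the $(P_k,S_k)$ scale, but I would need to verify carefully that after rescaling, the conditioning on $\pi(i_L)=k$ correctly fixes the reference scale, that the biasing ratios $S_k/S_i$ are retained inside the argument of $\CMcal{D}$ (reflecting that interference power and association boundary scale differently in $S$), and that the exponent from the probability generating functional factors cleanly through the density sum $\sum_{i=1}^K \lambda_i (P_iS_i/P_kS_k)^{2/\beta}$ so that the radial average yields exactly the $L$-th power. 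Reconciling these scaling factors so that the final expression matches \eqref{eq:sir_ccdf_coop} is where the bookkeeping will be most delicate.
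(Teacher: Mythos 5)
Your roadmap is essentially the paper's own proof: exploit the unit-mean exponential distribution of the desired gain to turn the CCDF into a product of Laplace transforms, evaluate the intra-cluster factor per BS using the quantization model with $N_{\pi(i_\ell)}=L$, and handle the out-of-cluster term by flattening the tiers with the Transformation Lemma, applying the probability generating functional to get $\CMcal{D}(\cdot,\beta)$, and averaging over the Gamma-type density of $\left\|{\bf{d}}_{i_L}\right\|$ from Lemma \ref{lem:het_L_closest} to produce the $L$-th power ratio. The one place your argument is incomplete is the intra-cluster factor: you justify $\mathbb{E}\bigl[e^{-s|({\bf{h}}_{1,i_\ell})^*{\bf{v}}_{i_\ell}|^2}\bigr]=1/(1+s\,2^{-B_{i_\ell}/(L-1)})$ by saying the \emph{mean} residual quantization error scales as $2^{-B_{i_\ell}/(L-1)}$, but knowing the mean does not pin down the Laplace transform — you need the residual fading to be exactly exponential with that mean. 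The paper gets this by writing $|({\bf{h}}_{1,i_\ell})^*{\bf{v}}_{i_\ell}|^2 = \left\|{\bf{h}}_{1,i_\ell}\right\|^2\sin^2\theta_{i_\ell}\,{\rm Beta}(1,L-2)$ (the Yoo--Goldsmith decomposition) and then invoking the Gamma--Beta product identity, under the quantization-cell approximation \eqref{def:q_error_cdf}, to show this product collapses to a $\Gamma(1,2^{-B_{i_\ell}/(L-1)})$ random variable; without that step (or an equivalent distributional argument) the first factor of \eqref{eq:sir_ccdf_coop} is asserted rather than proved. Also, a minor point: the desired gain $|({\bf{h}}_{1,i_1})^*{\bf{v}}_{i_1}|^2$ is exponential because ${\bf{v}}_{i_1}$ is a unit vector independent of ${\bf{h}}_{1,i_1}$ so that $({\bf{h}}_{1,i_1})^*{\bf{v}}_{i_1}\sim\CMcal{CN}(0,1)$; calling it chi-squared with one degree of freedom is the real-valued convention and slightly off, though your conclusion is the right one.
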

\begin{proof}
See Appendix \ref{appen:theo2}.
\end{proof}


We summarize the conditions presented in Theorem \ref{theo:sir_ccdf_coop}. The SIR CCDF is derived under the conditions that (i) $\pi(i_1)$ is fixed as $m$, (ii) $\pi(i_L)$ is fixed as $k$, and (iii) $\bar \delta_{1,L}$ is fixed, so that the relative intra-cluster interference power is given. 
Now the ergodic spectral efficiency is derived as an integral form in Corollary \ref{coro:rate_integral_coop}.
\begin{corollary} \label{coro:rate_integral_coop}
Assume that $\bar \delta_{1, L}$, $\bar B_{L}$ is given and $\pi(i_L) = k$. 
Then, the ergodic spectral efficiency of a K-tier HetNet in the cooperative case is
\begin{align} \label{eq:rate_integral}
& R_{\rm C}^{m} \left(\beta, \bar \lambda_K,\bar N_K, \bar B_{{L}}, \bar P_K, \bar S_K, \bar \delta_{1,L} \right) = \nonumber \\
& \log_2(e) \int_{0}^{\infty} \!\!\! \frac{1}{1+z}\!\!\! \prod_{i_\ell \in \CMcal{C}\backslash i_1} \left(\frac{1}{1+z \delta_{1,\ell}  2^{-\frac{B_{i_\ell}}{L-1}} } \right) \!\!\! \left( \frac{\sum_{i=1}^{K} \lambda_i \left(\frac{P_i S_i}{P_k S_k} \right)^{2/\beta} }{\sum_{i=1}^{K} \lambda_i  \left(\frac{P_i S_i}{P_k S_k} \right)^{2/\beta}\left[1+ \CMcal{D}\left(z \delta_{1, L} \cdot \left(\frac{S_k}{S_i} \right), \beta \right) \right]} \right)^L  \!\! {\rm d} z,
\end{align}
where $\CMcal{D}(x,y)$ is defined as \eqref{eq:coop:dfunc}.
\end{corollary}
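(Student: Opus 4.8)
The plan is to derive the ergodic spectral efficiency by exploiting the structure of the SIR CCDF already established in Theorem~\ref{theo:sir_ccdf_coop}. The key observation is that the ergodic rate admits the standard integral representation
\begin{align}
R_{\rm C}^{m} = \mathbb{E}\left[\log_2\left(1+{\rm SIR}_{\rm C|m} \right) \right] = \log_2(e)\int_0^{\infty} \frac{1}{1+z}\, \mathbb{P}\left[{\rm SIR}_{\rm C|m} \ge z \right] {\rm d} z,
\end{align}
which follows from writing $\ln(1+x) = \int_0^{x} \frac{1}{1+z}\,{\rm d} z$, applying Fubini's theorem to exchange expectation and integration, and recognizing $\mathbb{P}[{\rm SIR}_{\rm C|m}\ge z]$ as the CCDF. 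First I would invoke this identity so that the problem reduces entirely to substituting the closed-form CCDF into the integrand.

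\medskip

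Next I would substitute the expression for $F^{\rm c}_{{\rm SIR}_{\rm C}^{m}}(\cdots;z)$ from \eqref{eq:sir_ccdf_coop} directly in place of $\mathbb{P}[{\rm SIR}_{\rm C|m}\ge z]$, with the dummy SIR threshold $\gamma$ replaced by the integration variable $z$. Because the CCDF in Theorem~\ref{theo:sir_ccdf_coop} is conditioned on fixed $\bar\delta_{1,L}$, $\bar B_L$, $\pi(i_1)=m$, and $\pi(i_L)=k$, the same conditioning carries over verbatim to the ergodic rate, so no additional averaging over these quantities is required at this stage. The integrand then becomes the product of $\frac{1}{1+z}$, the intra-cluster factor $\prod_{i_\ell \in \CMcal{C}\backslash i_1}\bigl(1+z\delta_{1,\ell}2^{-B_{i_\ell}/(L-1)}\bigr)^{-1}$, and the out-of-cluster factor involving $\CMcal{D}(z\delta_{1,L}(S_k/S_i),\beta)$ raised to the $L$-th power, yielding exactly \eqref{eq:rate_integral}.

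\medskip

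The main subtlety, rather than an outright obstacle, is justifying the interchange of the expectation (over the random out-of-cluster interference and fading) with the integral over $z$; this is handled by Tonelli's theorem since the integrand is non-negative, so no dominated-convergence argument is needed. A secondary point worth verifying is that the CCDF in Theorem~\ref{theo:sir_ccdf_coop} is genuinely a valid complementary distribution function of ${\rm SIR}_{\rm C|m}$ under the stated conditioning, which is already guaranteed by the proof referenced in Appendix~\ref{appen:theo2}. Given that Theorem~\ref{theo:sir_ccdf_coop} supplies the full CCDF in closed form, the derivation of Corollary~\ref{coro:rate_integral_coop} is essentially a direct substitution, and I expect the entire argument to be short.
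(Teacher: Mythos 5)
Your proof is correct, but it reaches \eqref{eq:rate_integral} by a different identity than the paper uses. The paper's proof of Corollary~\ref{coro:rate_integral_coop} does not integrate the CCDF at all: it extracts the Laplace transforms of the intra-cluster and out-of-cluster interference from the proof of Theorem~\ref{theo:sir_ccdf_coop} and plugs them, together with the unit-mean exponential Laplace transform $\mathbb{E}[e^{-zX}]=1/(1+z)$ of the desired signal power, into Lemma~\ref{lem:useful} (Hamdi's lemma), which produces the factor $\frac{1}{z}\left(1-\frac{1}{1+z}\right)=\frac{1}{1+z}$ times the product of interference Laplace transforms. You instead use the layer-cake representation $\mathbb{E}[\ln(1+X)]=\int_0^\infty \frac{\mathbb{P}[X\ge z]}{1+z}\,{\rm d}z$ and substitute the closed-form CCDF. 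The two routes give literally the same integrand here only because the desired-signal fading is exponential, so the CCDF at threshold $z$ coincides with the Laplace transform of the normalized aggregate interference evaluated at $z$; your Tonelli justification is sound and your approach is the more elementary one, requiring no independence decomposition of signal and interference. The trade-off is that the paper's Lemma~\ref{lem:useful} route generalizes to non-exponential signal power (as in Corollary~\ref{coro:sum_se}, where the signal is chi-squared and the CCDF is a sum of Laplace-transform derivatives, making direct CCDF integration much less convenient), whereas your route is the cleaner derivation for this specific corollary.
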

\begin{proof}
From Theorem \ref{theo:sir_ccdf_coop}, we get the Laplace transform of the intra-cluster and the out-of-cluster interference. Plugging them into Lemma \ref{lem:useful} completes the proof. 
\end{proof}

\subsection{Adaptive Feedback Partition in Cooperative HetNets}
Now we determine $B_{i_{\ell}}$ for $\ell \in \{2, ..., L\}$ to maximize the ergodic spectral efficiency  \eqref{eq:rate_integral}.

\begin{proposition} \label{prop:scheme1}
In the cooperative case, the feedback partition that maximizes the ergodic spectral efficiency $R_{\rm C}^{m} \left(\beta,\bar \lambda_K, \bar N_K, \bar B_{{L}}, \bar P_K, \bar S_K, \bar \delta_{1,L} \right)$ is 
\begin{align} \label{scheme1_coop}
&(B_{i_{\ell}}^{\star})_{\rm C} =   \frac{ B_{\rm total}}{L-1}  + (L-1)\log_2\left( \frac{\delta_{1,\ell}}{\left( \prod_{\ell=2}^{L}\delta_{1,\ell}\right)^{\frac{1}{L-1}} }\right).
\end{align}
We refer Remark \ref{remark:floor} to make $(B_{i_{\ell}}^{\star})_{\rm C}$ be a positive integer.
\end{proposition}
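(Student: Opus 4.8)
The plan is to exploit the structure of the ergodic spectral efficiency in Corollary \ref{coro:rate_integral_coop}: in the integrand of \eqref{eq:rate_integral}, the only factor that depends on the feedback allocation $\bar B_L$ is the intra-cluster product $\prod_{i_\ell \in \CMcal{C}\backslash i_1} \left(1+z \delta_{1,\ell} 2^{-B_{i_\ell}/(L-1)} \right)^{-1}$. The desired-signal factor $1/(1+z)$ and the out-of-cluster factor (which involves $\delta_{1,L}$ and the fixed system parameters but not $\bar B_L$) are constants with respect to $\bar B_L$ at each $z$. Since $\log_2(e)$ and these remaining factors are positive, I would reduce the problem of maximizing $R_{\rm C}^{m}$ to maximizing, for each $z>0$, the product $\prod_{\ell=2}^{L} \left(1+z \delta_{1,\ell} 2^{-B_{i_\ell}/(L-1)}\right)^{-1}$ subject to $\sum_{\ell=2}^{L} B_{i_\ell} = B_{\rm total}$. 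As in Proposition \ref{prop:het}, I would relax the integer constraint to $B_{i_\ell}\in\mathbb{R}$ and defer rounding to Remark \ref{remark:floor}.

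First I would change variables to $a_\ell = 2^{-B_{i_\ell}/(L-1)}$, under which the total-feedback constraint $\sum_{\ell=2}^L B_{i_\ell}=B_{\rm total}$ becomes the product constraint $\prod_{\ell=2}^L a_\ell = 2^{-B_{\rm total}/(L-1)}$. Writing $c_\ell = \delta_{1,\ell}a_\ell$, the quantity to maximize is $\prod_{\ell=2}^L (1+z c_\ell)^{-1}$, so I must minimize $\sum_{\ell=2}^L \log(1+z c_\ell)$ subject to $\prod_{\ell=2}^L c_\ell$ being fixed (the $\delta_{1,\ell}$ are fixed by the conditioning). Setting $u_\ell = \log c_\ell$, the map $u \mapsto \log(1+z e^{u})$ has second derivative $z e^u/(1+z e^u)^2>0$ and is therefore convex; by Jensen's inequality, $\sum_\ell \log(1+z c_\ell)$ under the fixed linear constraint $\sum_\ell u_\ell = \text{const}$ is minimized exactly when all $u_\ell$, hence all $c_\ell$, are equal.

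The crucial feature I would emphasize is that this equalizing condition, $\delta_{1,\ell}\,2^{-B_{i_\ell}/(L-1)}=\text{const}$ for all $\ell$, does not depend on $z$: combined with $\prod_\ell c_\ell$ fixed it forces $c_\ell = \left(\prod_{\ell'=2}^L c_{\ell'}\right)^{1/(L-1)}$ for every $z>0$. Consequently a single feedback allocation maximizes the integrand pointwise in $z$, and since the integrand is nonnegative, that same allocation maximizes the integral \eqref{eq:rate_integral} and hence $R_{\rm C}^{m}$. I expect this pointwise-in-$z$ argument to be the main conceptual step, since it is what lets me avoid a $z$-dependent allocation and deliver a single closed-form partition valid for the whole integral.

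Finally I would solve the equalizing condition together with the constraint. From $\delta_{1,\ell}2^{-B_{i_\ell}/(L-1)}=c$ I obtain $B_{i_\ell}=(L-1)\log_2(\delta_{1,\ell}/c)$; substituting into $\sum_{\ell=2}^L B_{i_\ell}=B_{\rm total}$ and solving for $c$ gives $\log_2 c = \frac{1}{L-1}\log_2\!\left(\prod_{\ell=2}^L \delta_{1,\ell}\right)-\frac{B_{\rm total}}{(L-1)^2}$. Back-substituting yields $(B_{i_\ell}^\star)_{\rm C}=\frac{B_{\rm total}}{L-1}+(L-1)\log_2\!\left(\delta_{1,\ell}\big/\left(\prod_{\ell=2}^L\delta_{1,\ell}\right)^{1/(L-1)}\right)$, which is exactly \eqref{scheme1_coop}, completing the proof. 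As a consistency check, this allocation increases with $\delta_{1,\ell}$, matching the stated finding that more feedback is assigned to the intra-cluster BSs with stronger relative signal power.
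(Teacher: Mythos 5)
Your proposal is correct and follows essentially the same route as the paper: both reduce the problem to minimizing $\sum_{\ell=2}^{L}\ln\bigl(1+\gamma\,\delta_{1,\ell}\,2^{-B_{i_\ell}/(L-1)}\bigr)$ at a fixed threshold (your integration variable $z$ plays the role of the paper's $\gamma$), and both rest on the key observation that the optimizer is independent of that threshold, hence a single allocation is optimal pointwise and therefore for the whole rate integral. The only difference is in the inner step: the paper solves the relaxed convex problem via the Lagrangian and KKT conditions, whereas you use the change of variables $c_\ell=\delta_{1,\ell}2^{-B_{i_\ell}/(L-1)}$ and Jensen's inequality to obtain the same equalizing condition $\delta_{1,\ell}2^{-B_{i_\ell}/(L-1)}=\mathrm{const}$ and the same closed form (and, like the paper, you defer the nonnegativity/integrality of $B_{i_\ell}$ to the rounding remark).
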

\begin{proof}
We first formulate the optimization problem for maximizing the SIR CCDF \eqref{eq:sir_ccdf_coop}. Since the Laplace transform of the out-of-cluster interference is independent to the feedback, we can treat this as a constant and omit it in the problem. Then the problem is 
\begin{align} \label{opt:prod}
&\mathop{\rm maximize}_{B_{i_{\ell}} \in \mathbb{Z}^+ , \ell\in \{2,...,L\}}\; \prod_{i_\ell \in \CMcal{C}\backslash i_1} \left(\frac{1}{1+\gamma \delta_{1,\ell}  2^{-\frac{B_{i_\ell}}{L-1}} } \right),  \nonumber \\
&\;\;\;\;\;{\rm subject\;to} \; \sum_{\ell=2}^{L} B_{i_{\ell}} \le B_{\rm total}.
\end{align}
Since \eqref{opt:prod} is integer programming which is hard to solve, we first relax the feasible field of $B_{\ell}$ to $\mathbb{R}^+$ and apply the floor function to the solution later. 
We rewrite \eqref{opt:prod} as
\begin{align} \label{opt:log}
& \mathop{\rm minimize}_{B_{i_\ell} \in \mathbb{R}^+, \ell\in \{2,...,L\}}\; \sum_{\ell=2}^{L} \ln \left({1+\gamma  \delta_{1,\ell} 2^{-\frac{B_{i_\ell}}{L-1}} }  \right),  \nonumber \\
&\;\;\;\;\;{\rm subject\;to} \; \sum_{\ell=2}^{L} B_{i_{\ell}} \le B_{\rm total}.
\end{align}
Since the function $f(B) = \ln (1+C2^{-\frac{B}{L-1}})$ is monotonically increasing function and convex for any positive $C$, we apply a convex optimization technique to solve \eqref{opt:log}. At first, the corresponding Lagrangian function of the objective function in \eqref{opt:log} is
\begin{align} \label{eq:lagrangian}
&L(\bar B_{L}, \mu) = \sum_{\ell=2}^{L} \ln \left({1+\gamma \delta_{1,\ell} 2^{-\frac{B_{i_\ell}}{L-1}} }  \right) + \mu \left(\sum_{\ell=2}^{L} B_{i_\ell}  -  B_{\rm total}\right),
\end{align}
where $\mu$ denotes the Lagrangian multiplier.
%
Solving the KKT conditions for \eqref{eq:lagrangian} leads to 
\begin{align} \label{scheme:bit_allo_inproof}
&(B_{i_\ell}^{\star})_{\rm C} =  \frac{ B_{\rm total}}{L-1}  + (L-1)\log_2\left( \frac{\delta_{1,\ell}}{\left( \prod_{\ell=2}^{L}\delta_{1,\ell}\right)^{\frac{1}{L-1}} }\right).
\end{align}
Since the obtained feedback partition \eqref{scheme:bit_allo_inproof} is not a function of a specific threshold $\gamma$, this is optimal for any threshold, which means it is optimal for maximizing the ergodic spectral efficiency $R_{\rm C}^{m} \left(\beta, \bar \lambda_K,\bar N_K, \bar B_{{L}}, \bar P_K, \bar S_K, \bar \delta_{1,L} \right)$. This completes the proof.
\end{proof}


\begin{remark} \normalfont
Proposition \ref{prop:scheme1} implies that the feedback is allocated proportional to the intra-cluster interference power, i.e., $B_{i_\ell} \propto  \delta_{1,\ell}$. Note that this is similar to the previous results \cite{ny:twc:11_adap, akoum:tsp:13}, in which adaptive feedback allocation is proposed in a homogeneous cooperative network for minimizing the rate gap to perfect CSIT case. To use the prior work \cite{ny:twc:11_adap, akoum:tsp:13}, however, not only the relative intra-cluster BS power but also the exact instantaneous SINR should be obtained. On the contrary, Proposition \ref{prop:scheme1} only depends on relative intra-cluster BS power while it does not change depending on instantaneous SIR. 
\end{remark}

\subsection{A General Number of Antennas Case}
In this subsection, we study feedback allocation in a general number of antennas case, where different tier BSs use different number of antennas. The intra-cluster BSs are equipped with $N_{\pi(i_1)}, ..., N_{\pi(i_L)}$ antennas, where $N_{\pi(i_{\ell})} \ge L$, $\ell \in \{1,...,L\}$. 
Using the additional antennas, we use coordinated beamforming, which mitigates the intra-cluster interference and also increases the desired signal power. 
Specifically, the beamforming vector ${\bf{v}}_{i_1}$ ($\left\| {\bf{v}}_{i_{1}} \right\| = 1$) used in the BS ${\bf{d}}_{i_1}$ is designed by solving the following optimization problem. We denote that ${\bf{h}}_{\ell, i_1} \in \mathbb{C}^{N_{\pi (i_1)}}$ is a channel vector from the BS at ${\bf{d}}_{i_1}$ to the intra-cluster user associated with the BS at ${\bf{d}}_{i_\ell}$.
\begin{align} \label{eq:opt_prob_general}
&{\rm maximize}\; \left|({\bf{h}}_{1, i_{1}})^* {\bf{v}}_{i_1} \right|^2, \nonumber \\
&{\rm subject\;to} \; ({\bf{h}}_{\ell, i_1})^* {\bf{v}}_{i_1} = 0, \; \ell \in \mathcal{C}\backslash 1.
\end{align}
The solution of \eqref{eq:opt_prob_general} always exists when $N_{\pi(i_{1})} \ge L$. The other beamforming vector ${\bf{v}}_{i_{\ell}}$ where $\ell \in \{2,...,L\}$ can be designed in the similar way. 

Due to limited feedback, the quantized channel $\tilde {\bf{h}}_{\ell, i_{1}}$ is used in \eqref{eq:opt_prob_general} instead of perfect channel ${\bf{h}}_{\ell, i_{1}}$, whose accuracy is determined by the feedback amount. 
Now we propose a heuristic feedback design method applicable in the general number of antennas case. Note that $B_{i_1} > 0$ in the general number of antennas case while we suppose $B_{i_1} = 0$ in our previous assumption. 
First, we assume that  $\tilde B_{\rm total} = B_{\rm total} - B_{i_1} = \sum_{\ell=2}^{L} B_{i_\ell}$ is given. 
Then, the feedback $B_{i_2},..., B_{i_L}$ can be determined by solving the following problem, which is modified from the optimization problem \eqref{opt:log}
\begin{align} \label{opt:log_general}
& \mathop{\rm minimize}_{B_{i_{\ell}} \in \mathbb{R}^+, \ell\in \{2,...,L\}}\; \sum_{\ell=2}^{L} \ln \left({1+\gamma  \delta_{1,\ell} 2^{-\frac{B_{i_\ell}}{N_{\pi(i_\ell)}-1}} }  \right),  \nonumber \\
&\;\;\;\;\;{\rm subject\;to} \; \sum_{\ell=2}^{L} B_{i_{\ell}} \le \tilde B_{\rm total}.
\end{align}
The corresponding Lagrangian function is
\begin{align}
L(\bar B_{\rm other}, \mu) = \sum_{\ell=2}^{L} \ln \left( 1 + \gamma \delta_{1, \ell}.  2^{-\frac{B_{i_{\ell}}}{N_{\pi(i_{\ell})} - 1}}\right) + \mu\left(\sum_{\ell=2}^{L}B_{i_{\ell}} - \tilde B_{\rm total} \right).
\end{align}
By the KKT condition, we have
\begin{align} \label{eq:KKT_general}
\frac{\gamma \delta_{1,\ell} \ln (2)}{\left(\gamma \delta_{1, \ell} + 2^{\frac{B_{i_\ell}}{N_{\pi(i_\ell)} - 1}}\right)(N_{\pi(i_\ell)}-1)} = \mu
\end{align}
and $\sum_{\ell = 2}^{L} B_{i_{\ell}} = \tilde B_{\rm total}$. Solving \eqref{eq:KKT_general}, we have
\begin{align} \label{eq:sol_general}
\left( B_{i_{\ell}}^\star \right)_{\rm C,gen} = (N_{\pi(i_\ell)}-1) \log_2\left(\gamma \delta_{1,\ell} \right) + (N_{\pi(i_{\ell})}-1) \log_2\left(\frac{\ln(2)}{\mu (N_{\pi(i_{\ell})}-1)} -1\right).
\end{align}
The parameter $\mu$ is determined as the minimum value that satisfies $\sum_{\ell=2}^{L} B_{i_{\ell}}^\star \le \tilde B_{\rm total}$. 
Note from \eqref{eq:sol_general} is that the SIR threshold $\gamma$ remains in the solution. In the previous case, the parameter $\gamma$ vanishes during the optimization process, so that the solution \eqref{scheme:bit_allo_inproof} is optimal for all the SIR threshold. This implies that the solution \eqref{scheme:bit_allo_inproof} is also optimal for the ergodic spectral efficiency. 
On the contrary, in the general number of antennas case, the obtained solution \eqref{eq:sol_general} is optimal for a particular SIR threshold, not for the ergodic spectral efficiency. 
It might be more straightforward to directly optimize the ergodic spectral efficiency \eqref{eq:rate_integral}. Unfortunately, this is infeasible since the ergodic spectral efficiency is an complicated integral form as shown in \eqref{eq:rate_integral}. As an alternative, we select an appropriate value of $\gamma$ by using simulations. For example, with \eqref{eq:sol_general}, we examine various values of $\gamma$ and then select a proper value of $\gamma$ that provides the maximum ergodic spectral efficiency. 

The next step is determining $B_{i_1}$. We use a line search method relying on simulations. Specifically, we first assume $B_{i_1} = 0$. Then, we allocate feedback $B_{i_2}, ..., B_{i_L}$ using \eqref{eq:sol_general}. Then we iterate this process by increasing $B_{i_1}$. After searching over a whole region, i.e., $0 \le B_{i_1} \le B_{\rm total}$, we select $B_{i_1}^{\star}$ that provides the maximum ergodic spectral efficiency. We summarize the whole feedback design procedure in the following proposition.
\begin{proposition} \label{prop:scheme_general} 
Assume the general number of antennas case, where the intra-cluster BSs have different number of antennas $N_{\pi(i_1)},...,N_{\pi(i_L)}$. Then, a heuristic way to allocate feedback is as follows. 
\begin{enumerate}
	\item Assume $B_{i_1} = 0$.
	\item With $\tilde B_{\rm total} = B_{\rm total} - B_{i_1}$, allocate feedback by using \eqref{eq:sol_general}. The parameter $\gamma$ is selected so as to provide the maximum ergodic spectral efficiency. 
	\item Iterate 1) and 2) by increasing $B_{i_1} $ until $B_{i_1} \le B_{\rm total}$.
	\item Select $B_{i_1}^{\star},...,B_{i_L}^{\star}$ that provides the maximum ergodic spectral efficiency. 
\end{enumerate}
\end{proposition}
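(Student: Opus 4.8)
The plan is to justify the four-step heuristic by decomposing the joint feedback-allocation problem into a tractable inner subproblem that admits a closed form, wrapped inside two numerical searches that absorb the quantities which resist analytic treatment. First I would fix the self-feedback $B_{i_1}$, which leaves budget $\tilde B_{\rm total} = B_{\rm total} - B_{i_1}$ for the interference-nulling BSs. Conditioned on this budget, the allocation of $B_{i_2},\ldots,B_{i_L}$ is exactly problem \eqref{opt:log_general}, whose objective is the logarithm of the feedback-dependent factor of the SIR CCDF of Theorem \ref{theo:sir_ccdf_coop}, now carrying the heterogeneous exponents $N_{\pi(i_\ell)}-1$ in place of $L-1$. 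Since each summand $\ln(1+\gamma\delta_{1,\ell}2^{-B/(N_{\pi(i_\ell)}-1)})$ is monotone and convex in $B$ for positive $\gamma\delta_{1,\ell}$, the KKT system \eqref{eq:KKT_general} characterizes the minimizer, and solving it yields the closed form \eqref{eq:sol_general}. This establishes Step 2 for every fixed $B_{i_1}$ and every fixed threshold $\gamma$.

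The second point I would address is why $\gamma$ must be chosen numerically. In the equal-antenna case of Proposition \ref{prop:scheme1}, the common exponent $L-1$ causes $\gamma$ to cancel in the stationarity condition, so the allocation is simultaneously optimal for every threshold and hence for the ergodic rate \eqref{eq:rate_integral}. Here the heterogeneous exponents leave $\gamma$ explicitly in \eqref{eq:sol_general}, so no single allocation is optimal across all thresholds, and the ergodic spectral efficiency — an integral over $z$ of the product form in \eqref{eq:rate_integral} — has no matching closed-form optimizer. I would therefore treat $\gamma$ as a tuning parameter: for each candidate value compute \eqref{eq:sol_general}, evaluate the resulting ergodic rate numerically, and retain the $\gamma$ that maximizes it. This justifies the clause in Step 2 that $\gamma$ is selected to maximize the ergodic spectral efficiency.

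The remaining and hardest part is the outer optimization over $B_{i_1}$, and I expect this to be the main obstacle. Once the full antennas are used for coordinated beamforming via \eqref{eq:opt_prob_general}, the desired signal power $|({\bf{h}}_{1,i_1})^*{\bf{v}}_{i_1}|^2$ depends on $B_{i_1}$ through the accuracy of the quantized constraint channels, and this dependence is simply not captured by Theorem \ref{theo:sir_ccdf_coop}, which was derived under $B_{i_1}=0$ with only $L$ dimensions reserved for zero-forcing. Consequently there is no analytic expression for the ergodic rate as a joint function of $B_{i_1}$ and the coordinated-beamforming gain, and a closed-form outer solution is out of reach. The plan is to resolve this numerically with a line search: sweep $B_{i_1}$ over $[0,B_{\rm total}]$, re-solve the inner allocation \eqref{eq:sol_general} at each value with the corresponding $\tilde B_{\rm total}$, and select the configuration $(B_{i_1}^\star,\ldots,B_{i_L}^\star)$ achieving the largest simulated ergodic spectral efficiency. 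Assembling the closed-form inner allocation, the $\gamma$-tuning, and the $B_{i_1}$ line search reproduces exactly the four-step procedure; because Steps 2--4 rest on simulation rather than provable optimality, the result is correctly stated as a heuristic rather than an optimal partition.
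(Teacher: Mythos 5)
Your proposal is correct and follows essentially the same route as the paper: the inner allocation is obtained by relaxing and solving \eqref{opt:log_general} via the Lagrangian/KKT conditions \eqref{eq:KKT_general} to arrive at \eqref{eq:sol_general}, the threshold $\gamma$ survives because the heterogeneous exponents $N_{\pi(i_\ell)}-1$ prevent the cancellation that occurs in Proposition \ref{prop:scheme1} and is therefore tuned by simulation, and $B_{i_1}$ is found by an outer line search over $[0, B_{\rm total}]$. Your observation that the coordinated-beamforming gain's dependence on $B_{i_1}$ is not captured by Theorem \ref{theo:sir_ccdf_coop}, which is exactly why the outer step must be numerical and the result is only a heuristic, matches the paper's reasoning.
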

We will evaluate Proposition \ref{prop:scheme_general} in the later section. 

\subsection{A Single-Tier Case}
In this subsection, we consider a case where $\left| \CMcal{K} \right| = 1$, i.e., when coordination is applied in a single-tier network. 
For ease of understanding, we enumerate the features of the single-tier case as follows. 
First, the coordination set $\CMcal{C} = \{i_1, ..., i_L\}$ boils down to $\CMcal{C} = \{1,... , L\}$. Specifically, the typical user is connected to the $L$ closest BSs located at ${\bf{d}}_1, ..., {\bf{d}}_L$ since in a single-tier network, the $L$ BSs whose biased signal powers are strongest is the same as the $L$ closest BSs to the typical user. 
Second, $\delta_{1, \ell}$ simplifies to $\left\| {\bf{d}}_{\ell}\right\|^{-\beta} / \left\| {\bf{d}}_1 \right\|^{-\beta}$ since all the BSs use the same transmit power.
Third, the biasing factor is neglected since it is only useful in a HetNet scenario. 
We note that Proposition \ref{prop:scheme1} is general for the number of the tiers $\left| \CMcal{K}\right|$, whereby it is applied without any modification for a case of a single-tier network, i.e., $\left| \CMcal{K} \right| = 1$. 
In this single-tier network, we present an approximate feedback partition that does not need $\bar \delta_{1, L}$. 
\begin{proposition}  \label{prop:scheme2}
Assume $\left| \CMcal{K} \right| = 1$ and the intra-cluster geometry $\bar \delta_{1,L}$ is unknown. In that case, an approximate feedback partition as an alternative of \eqref{scheme1_coop} is 
\begin{align} \label{eq:scheme_allo2}
(\tilde B_{{\ell}}^{\star})_{\rm C} \mathop{=}^{} \frac{ B_{\rm total}}{L-1}  - \frac{\beta (L-1)}{2\ln 2} H_{\ell-1}
+ \frac{\beta}{2\ln 2} \sum_{\ell=2}^{L} H_{\ell-1},
\end{align}
where $H_{\ell}$ is the $\ell$-th harmonic number defined as $H_\ell = \sum_{i=1}^{\ell}\frac{1}{i}$. 
\end{proposition}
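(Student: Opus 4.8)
The plan is to obtain the approximate partition by starting from the exact optimal partition of Proposition \ref{prop:scheme1} and replacing the unknown, geometry-dependent term by its expectation over the point process. First I would expand the exact solution using the identity $\log_2(a/b^{1/(L-1)}) = \log_2 a - \frac{1}{L-1}\log_2 b$ applied to \eqref{scheme1_coop}, which gives
\begin{align}
(B_{i_\ell}^{\star})_{\rm C} = \frac{B_{\rm total}}{L-1} + (L-1)\log_2\delta_{1,\ell} - \sum_{\ell'=2}^{L}\log_2\delta_{1,\ell'}.
\end{align}
Since $\bar \delta_{1,L}$ is assumed unknown, the idea is to substitute each $\log_2\delta_{1,\ell}$ by its mean $\mathbb{E}[\log_2\delta_{1,\ell}]$ taken over the random BS locations; the remaining task is then to evaluate this expectation in the single-tier setting.

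To compute $\mathbb{E}[\log_2\delta_{1,\ell}]$ I would use the single-tier simplification $\delta_{1,\ell} = \|{\bf{d}}_\ell\|^{-\beta}/\|{\bf{d}}_1\|^{-\beta}$, so that $\ln\delta_{1,\ell} = -\beta(\ln\|{\bf{d}}_\ell\| - \ln\|{\bf{d}}_1\|)$. The key step is the standard mapping for a homogeneous PPP of density $\lambda$ in $\mathbb{R}^2$: under the change of variable $u = \pi\lambda r^2$ the ordered squared distances transform into the arrival epochs of a unit-rate one-dimensional Poisson process, so that $U_\ell = \pi\lambda\|{\bf{d}}_\ell\|^2$ is $\mathrm{Gamma}(\ell,1)$ distributed. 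Writing $\ln\|{\bf{d}}_\ell\| = \frac{1}{2}(\ln U_\ell - \ln(\pi\lambda))$, the density-dependent constant cancels in the difference, and I obtain
\begin{align}
\mathbb{E}[\ln\delta_{1,\ell}] = -\frac{\beta}{2}\left(\mathbb{E}[\ln U_\ell] - \mathbb{E}[\ln U_1]\right) = -\frac{\beta}{2}(\psi(\ell) - \psi(1)),
\end{align}
using $\mathbb{E}[\ln X] = \psi(\ell)$ for $X\sim\mathrm{Gamma}(\ell,1)$. The digamma identity $\psi(\ell) - \psi(1) = H_{\ell-1}$ then converts this into $\mathbb{E}[\log_2\delta_{1,\ell}] = -\frac{\beta}{2\ln 2}H_{\ell-1}$, which explains the appearance of the harmonic numbers.

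Substituting this expectation for each $\log_2\delta_{1,\ell}$ in the expanded exact solution yields
\begin{align}
(\tilde B_{\ell}^{\star})_{\rm C} = \frac{B_{\rm total}}{L-1} - \frac{\beta(L-1)}{2\ln 2}H_{\ell-1} + \frac{\beta}{2\ln 2}\sum_{\ell'=2}^{L}H_{\ell'-1},
\end{align}
which is exactly \eqref{eq:scheme_allo2}. I expect the main obstacle to be establishing the distributional fact cleanly — namely that $\pi\lambda\|{\bf{d}}_\ell\|^2$ is $\mathrm{Gamma}(\ell,1)$ and that the expected logarithm of a $\mathrm{Gamma}(\ell,1)$ variable equals $\psi(\ell)$ — since everything else is algebraic cancellation that makes the density $\lambda$ drop out. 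A secondary point worth flagging in the write-up is that \eqref{eq:scheme_allo2} is genuinely an \emph{approximation}: replacing $\log_2\delta_{1,\ell}$ by its mean amounts to optimizing an averaged surrogate of the objective in \eqref{opt:log} rather than its instantaneous value, so I would state explicitly that the partition is the mean-substituted version of \eqref{scheme1_coop} and, crucially, no longer requires knowledge of $\bar \delta_{1,L}$.
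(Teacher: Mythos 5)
Your proposal is correct and follows essentially the same strategy as the paper: expand the exact partition of Proposition \ref{prop:scheme1} and replace each $\log_2\delta_{1,\ell}$ by its expectation over the intra-cluster geometry, which evaluates to $-\tfrac{\beta}{2\ln 2}H_{\ell-1}$. The only (cosmetic) difference is how that expectation is computed --- you use the $\mathrm{Gamma}(\ell,1)$ law of $\pi\lambda\|{\bf d}_\ell\|^2$, linearity, and $\psi(\ell)-\psi(1)=H_{\ell-1}$, whereas the paper integrates against the ratio PDF $f_{\|{\bf d}_1\|/\|{\bf d}_\ell\|}(x)=2(\ell-1)x(1-x^2)^{\ell-2}$ from \cite{lee:twc:15}; both yield the same value, and your route in fact confirms the $H_{\ell-1}$ index appearing in \eqref{eq:scheme_allo2} (the paper's prose justification of step (a) contains an off-by-one slip, stating $H_\ell$).
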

\begin{proof}
If $\bar \delta_{1,L}$ is unknown, a possible alternative is taking the expectation to \eqref{scheme1_coop} with regard to $\bar \delta_{1, L}$. To calculate this, we use the probability density function (PDF) of $\frac{\left\| {\bf{d}}_1 \right\|}{ \left\| {\bf{d}}_{\ell} \right\|}$, presented in Lemma 1 of \cite{lee:twc:15} as
\begin{align} \label{eq:pdf_ratio}
f_{\frac{\left\| {\bf{d}}_1 \right\|}{ \left\| {\bf{d}}_{\ell} \right\|}}(x) = 2(\ell -1)x(1-x^2)^{\ell-2}.
\end{align}
Since $\delta_{1,\ell} = \left(\frac{\left\|{\bf{d}}_1 \right\|}{ \left\| {\bf{d}}_{\ell}\right\| }\right)^{\beta}$, we compute the following by exploiting \eqref{eq:pdf_ratio}.
\begin{align} \label{eq:not_fix_coop}
\mathbb{E}_{}\left[(B_{{\ell}}^{\star})_{\rm C} \right] &= (\tilde B_{{\ell}}^{\star})_{\rm C}  \nonumber \\
&= \frac{ B_{\rm total}}{L-1}  + \beta (L-1) \mathbb{E}\left[ \log_2\left( {\delta_{1,\ell}}\right)^{\frac{1}{\beta}} \right]- \beta \sum_{\ell=2}^{L} \mathbb{E} \left[ \log_2\left(\delta_{1,\ell}\right)^{\frac{1}{\beta}} \right] \nonumber \\
&\mathop{=}^{(a)} \frac{ B_{\rm total}}{L-1}  - \frac{\beta (L-1)}{2\ln 2} H_{\ell-1}
+ \frac{\beta}{2\ln 2} \sum_{\ell=2}^{L} H_{\ell-1},
\end{align}
where (a) follows that $\mathbb{E}\left[\log_2\left(\frac{\left\| {\bf{d}}_1\right\|}{\left\| {\bf{d}}_{\ell} \right\|} \right) \right] = -\frac{1}{2 \ln 2} H_{\ell}$ with $H_{\ell} = \frac{1}{1} + \frac{1}{2} + \cdots + \frac{1}{\ell}$. Since $\sum_{\ell=2}^{L} (\tilde B_{{\ell}}^{\star})_{\rm C} = B_{\rm total}$, \eqref{eq:scheme_allo2} is a feasible feedback partition. This completes the proof.
\end{proof} 
\begin{remark} \normalfont
The feedback partition in Proposition \ref{prop:scheme2} is only a function of the path-loss exponent $\beta$ and the index of the intra-cluster BS ${\ell}$, so that no instantaneous SIR is needed to be measured. 
Similar to Proposition \ref{prop:scheme1}, Proposition \ref{prop:scheme2} also implies that allocating more feedback to closer intra-cluster BSs is beneficial since $(\tilde B_{\ell_1}^{\star} )_{\rm C} \ge (\tilde B_{\ell_2}^{\star})_{\rm C} $ if $\ell_1 < \ell_2$. 
\end{remark}

Now we investigate the relationship between the effective BS coordination set size and the total feedback $B_{\rm total}$.  
This is important because if the cluster size is too large compared to the total feedback $B_{\rm total}$, some intra-cluster BSs (particularly far BSs) are not allocated enough feedback. Those BSs only increase overheads associated with channel estimation, and are not helpful in mitigating the interference due to the lack of the feedback. 
Unfortunately, this is hard to be analyzed because the existing feedback partition is a function of particular intra-cluster BS power, so that a general relationship cannot be extracted. To resolve this, we use Proposition \ref{prop:scheme2}, which is independent to particular intra-cluster BS power. 
As a stepping stone to reveal the relationship, we first define the effective cluster size $L_{\rm eff}$ as the BS's index that satisfies 
\begin{align} \label{eq:effectivesize}
( \tilde B_{L_{\rm eff}}^{\star} )_{\rm C} \le 1,
\end{align}
which means that the BS's index whose allocated feedback is less than $1$. 
Recalling that Proposition \ref{prop:scheme2} gives fewer feedback to a further BS, the $\ell$-th closest BS for $\ell > L_{\rm eff}$ also has less than $1$-bit feedback. For this reason, increasing the cluster size over $L_{\rm eff}$ does not provide spectral efficiency gain. 
In the following corollary, with Proposition \ref{prop:scheme2}, we investigate the scaling behavior of $L_{\rm eff}$ depending on the total feedback $B_{\rm total}$ and the path-loss exponent $\beta$ 


\begin{corollary} \label{coro:effectivesize}
Assume that Proposition \ref{prop:scheme2} is used. Under this assumption, if the total feedback $B_{\rm total}$ is large enough, the effective cluster size $L_{\rm eff}$ in Proposition \ref{prop:scheme2} is
\begin{align}
L_{\rm eff} &\ge \frac{ \ln 2}{\beta} \sqrt{\frac{2 \beta}{\ln 2} B_{\rm total} + 1} - \frac{\ln 2 }{\beta} + 1 \nonumber \\
& \gtrsim \frac{\sqrt{B_{\rm total}}}{\sqrt{\beta}}.
\end{align}
\end{corollary}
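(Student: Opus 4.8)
The plan is to substitute the furthest intra-cluster index $\ell = L$ into the feedback partition \eqref{eq:scheme_allo2} of Proposition \ref{prop:scheme2}, obtain a closed form for $(\tilde B_{L}^{\star})_{\rm C}$ as a function of $L$ and $B_{\rm total}$, and then invert the defining inequality \eqref{eq:effectivesize} to bound $L_{\rm eff}$. The first task is to collapse the harmonic sum. Setting $\ell = L$ in \eqref{eq:scheme_allo2} gives
\begin{align}
(\tilde B_{L}^{\star})_{\rm C} = \frac{ B_{\rm total}}{L-1}  - \frac{\beta (L-1)}{2\ln 2} H_{L-1} + \frac{\beta}{2\ln 2} \sum_{\ell=2}^{L} H_{\ell-1}.
\end{align}
I would reindex the last sum as $\sum_{j=1}^{L-1} H_j$ and apply the identity $\sum_{j=1}^{n} H_j = (n+1)H_n - n$, which follows from a standard double-sum interchange. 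After the $H_{L-1}$ terms partially cancel, the expression collapses to the compact form
\begin{align}
(\tilde B_{L}^{\star})_{\rm C} = \frac{ B_{\rm total}}{L-1}  + \frac{\beta}{2\ln 2}\left( H_{L-1} - (L-1)\right).
\end{align}

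Next I would establish monotonicity: each term is decreasing in $L$ (the first is $\propto 1/(L-1)$, and $H_{L-1}-(L-1)$ decreases because the harmonic growth is logarithmic while the subtracted term is linear), so $(\tilde B_{L}^{\star})_{\rm C}$ is decreasing in $L$. Consequently, to prove $L_{\rm eff} \ge L_0$ it suffices to exhibit an $L_0$ at which the allocated feedback is still at least one; by monotonicity the threshold in \eqref{eq:effectivesize} can only be crossed at a larger index. To obtain a clean bound I would discard the nonnegative harmonic term, using $H_{L-1} \ge 0$, yielding
\begin{align}
(\tilde B_{L}^{\star})_{\rm C} \ge \frac{ B_{\rm total}}{L-1}  - \frac{\beta}{2\ln 2}(L-1).
\end{align}
Requiring the right-hand side to be at least one and writing $u = L-1$ produces, after clearing denominators, the quadratic inequality $\tfrac{\beta}{2\ln 2}u^2 + u - B_{\rm total} \le 0$.

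Finally I would solve this quadratic: its positive root is $u^\star = \tfrac{\ln 2}{\beta}\bigl(\sqrt{1 + \tfrac{2\beta}{\ln 2}B_{\rm total}} - 1\bigr)$, and since the parabola opens upward the inequality holds for all $u \le u^\star$. Translating back through $L = u + 1$ gives exactly the first displayed bound,
\begin{align}
L_{\rm eff} \ge \frac{\ln 2}{\beta}\sqrt{\frac{2\beta}{\ln 2}B_{\rm total} + 1} - \frac{\ln 2}{\beta} + 1,
\end{align}
and the asymptotic statement follows by retaining the dominant term of the square root for large $B_{\rm total}$, giving $L_{\rm eff} \gtrsim \sqrt{\tfrac{2\ln 2}{\beta}B_{\rm total}} \ge \sqrt{B_{\rm total}/\beta}$ because $2\ln 2 > 1$. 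The main obstacle I anticipate is the harmonic-sum simplification: the cancellation of the $H_{L-1}$ terms is what turns an unwieldy expression into a tractable quadratic, so getting the identity and the reindexing right is essential. The remaining monotonicity and quadratic steps are routine, and the only subtlety is tracking the direction of each inequality so that dropping $H_{L-1}\ge 0$ yields a genuine lower bound on $L_{\rm eff}$ rather than an upper bound.
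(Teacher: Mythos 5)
Your proposal is correct and reaches the paper's bound along the same overall skeleton (evaluate the allocation at the last index, reduce to a quadratic in $L_{\rm eff}-1$, take the positive root), but the key technical step is handled differently, and arguably better. The paper collapses the harmonic terms by approximating $H_{\ell}\simeq\gamma_{\rm Euler}+\ln\ell$, rewriting the condition as $\frac{B_{\rm total}}{u}-1\le\frac{\beta}{2\ln 2}\ln\bigl(u^{u}/u!\bigr)$, and then invoking Stirling's formula to replace $\ln\bigl(u^u/u!\bigr)$ by $u$; both approximations are only tight for large $B_{\rm total}$, which is why that hypothesis appears in the statement. You instead use the exact identity $\sum_{j=1}^{n}H_j=(n+1)H_n-n$, which collapses the allocation to $\frac{B_{\rm total}}{L-1}+\frac{\beta}{2\ln 2}\bigl(H_{L-1}-(L-1)\bigr)$ exactly, and then drop the nonnegative term $H_{L-1}$; since this only weakens the lower bound on the allocated feedback, the resulting bound on $L_{\rm eff}$ is a genuine, non-asymptotic inequality rather than an approximation. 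Your monotonicity observation (that $H_{L-1}-(L-1)$ is decreasing because its increment is $\frac{1}{L}-1<0$) also supplies the justification, left implicit in the paper, for why certifying at least one bit at index $L_0$ forces $L_{\rm eff}\ge L_0$. The two routes land on the identical quadratic $\frac{\beta}{2\ln 2}u^{2}+u-B_{\rm total}\le 0$ and the identical final expressions, including the constant $\sqrt{2\ln 2}>1$ in the asymptotic step; the only price of your exact treatment is that the bound is slightly loose by the discarded $\frac{\beta}{2\ln 2}H_{L-1}=O(\ln L)$ term, which is the same order as the error the paper silently absorbs into its Stirling approximation.
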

\begin{proof}
Since the effective cluster size satisfies \eqref{eq:effectivesize}, we have
\begin{align} \label{eq:effective_1}
&( \tilde B_{L_{\rm eff}}^{\star} )_{\rm C} \le 1 \nonumber \\
\Leftrightarrow &\frac{B_{\rm total}}{L_{\rm eff}-1} - \frac{\beta(L_{\rm eff}-1)}{2\ln 2} H_{L_{\rm eff}-1} + \frac{\beta}{2 \ln 2} \sum_{\ell = 2}^{L_{\rm eff}} H_{\ell-1} \le 1.
\end{align}
The harmonic number $H_{\ell}$ is tightly approximated as $H_{\ell} \simeq \gamma_{\rm Euler} + \ln \ell$, where $\gamma_{\rm Euler}$ is the Euler-Mascheroni constant. With this approximation, \eqref{eq:effective_1} is
\begin{align} \label{eq:effective_2}
\frac{B_{\rm total}}{L_{\rm eff}-1} - \frac{\beta (L_{\rm eff}-1)}{2 \ln 2} (\gamma_{\rm Euler} + \ln (L_{\rm eff}-1)) + \frac{\beta}{2 \ln 2} \sum_{\ell = 2}^{L_{\rm eff}} (\gamma_{\rm Euler} + \ln (\ell-1)) \le 1 
\end{align}
\begin{align}
\Leftrightarrow & \frac{B_{\rm total}}{L_{\rm eff}-1} - 1 \le \frac{\beta}{2 \ln 2} \ln \left( \frac{(L_{\rm eff}-1)^{(L_{\rm eff}-1)}}{(L_{\rm eff}-1) !}\right).
\end{align}
By using the Stirling's approximation, we have
\begin{align}
\ln (L_{\rm eff}-1)^{(L_{\rm eff}-1)} - \ln((L_{\rm eff}-1)!) + O(\ln (L_{\rm eff}-1))
= L_{\rm eff}-1,
 \end{align}
where $O(\cdot)$ is defined as follows: $f(x) = O(g(x))$ as $x \rightarrow a$ if and only of $\lim \sup_{x \rightarrow a} \left| f(x)/g(x)\right| < \infty$. 
 If $B_{\rm total}$ is large enough, the approximation is tight. Then \eqref{eq:effective_2} is 
\begin{align}
0 \le \frac{\beta}{2 \ln 2} (L_{\rm eff}-1)^2 + (L_{\rm eff}-1) - B_{\rm total}.
\end{align}
Solving the quadratic equation, $L_{\rm eff}$ is
\begin{align} \label{eq:prop3_approx}
L_{\rm eff} \ge \frac{ \ln 2}{\beta} \sqrt{\frac{2 \beta}{\ln 2} B_{\rm total} + 1} - \frac{\ln 2 }{\beta} + 1.
\end{align}
For large enough $B_{\rm total}$, \eqref{eq:prop3_approx} is further approximated as
\begin{align}
L_{\rm eff} &
 \gtrsim { \sqrt{2 \ln 2}} \frac{\sqrt{B_{\rm total}}}{\sqrt{\beta}}  \ge \frac{\sqrt{B_{\rm total}}}{\sqrt{\beta}}.
\end{align}
This completes the proof. 
\end{proof}

\begin{remark} \normalfont
A major finding in Corollary \ref{coro:effectivesize} is that the effective cluster size $L_{\rm eff}$ scales with the square root of the total feedback $B_{\rm total}$, and inversely with the square root of the path-loss exponent $\beta$. 
This provides a system guideline on how to determine $L_{\rm eff}$ depending on $B_{\rm total}$ and $\beta$. The intuition of the former relationship ($L_{\rm eff} \propto \sqrt{B_{\rm total}}$) is natural since the larger effective cluster size is supported as the total feedback increases.
The rationale of the latter relationship ($L_{\rm eff} \propto \frac{1}{\sqrt{\beta}}$) is as follows. When the path-loss exponent increases, the interference power coming from far BSs decays fast, so that there is no need to allocate feedback to those BSs.  This shrinks the effective cluster size. 
\end{remark}



\section{Simulation Results}
In this section, we provide simulation results to show the spectral efficiency improvement by using our feedback partitions presented in Proposition \ref{prop:het}, \ref{prop:scheme1}, and \ref{prop:scheme2}. 

\begin{figure}[t]
\centering
$\begin{array}{cc}
{\resizebox{0.46\columnwidth}{!}
{\includegraphics{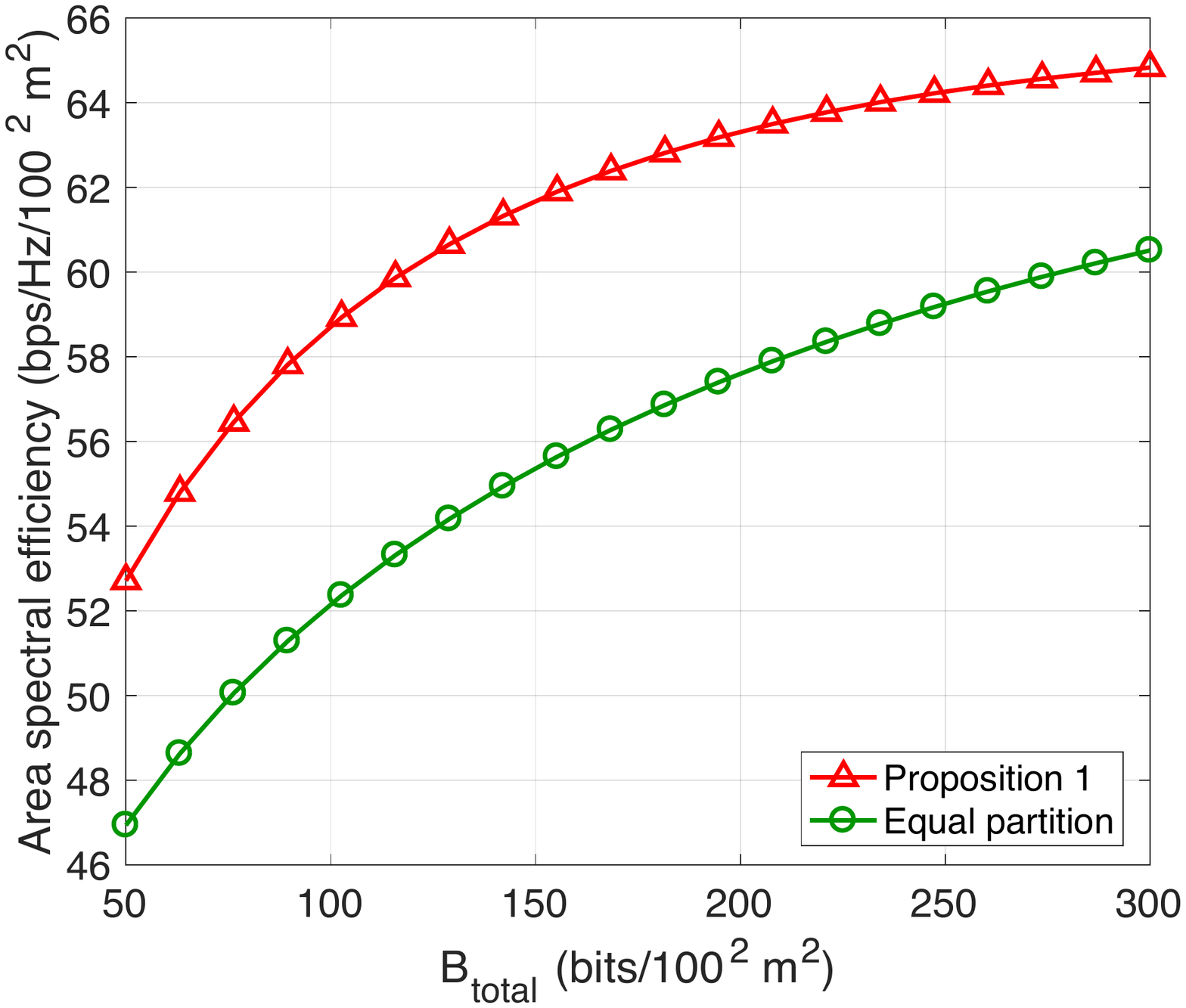}}}  &
{\resizebox{0.48\columnwidth}{!}
{\includegraphics{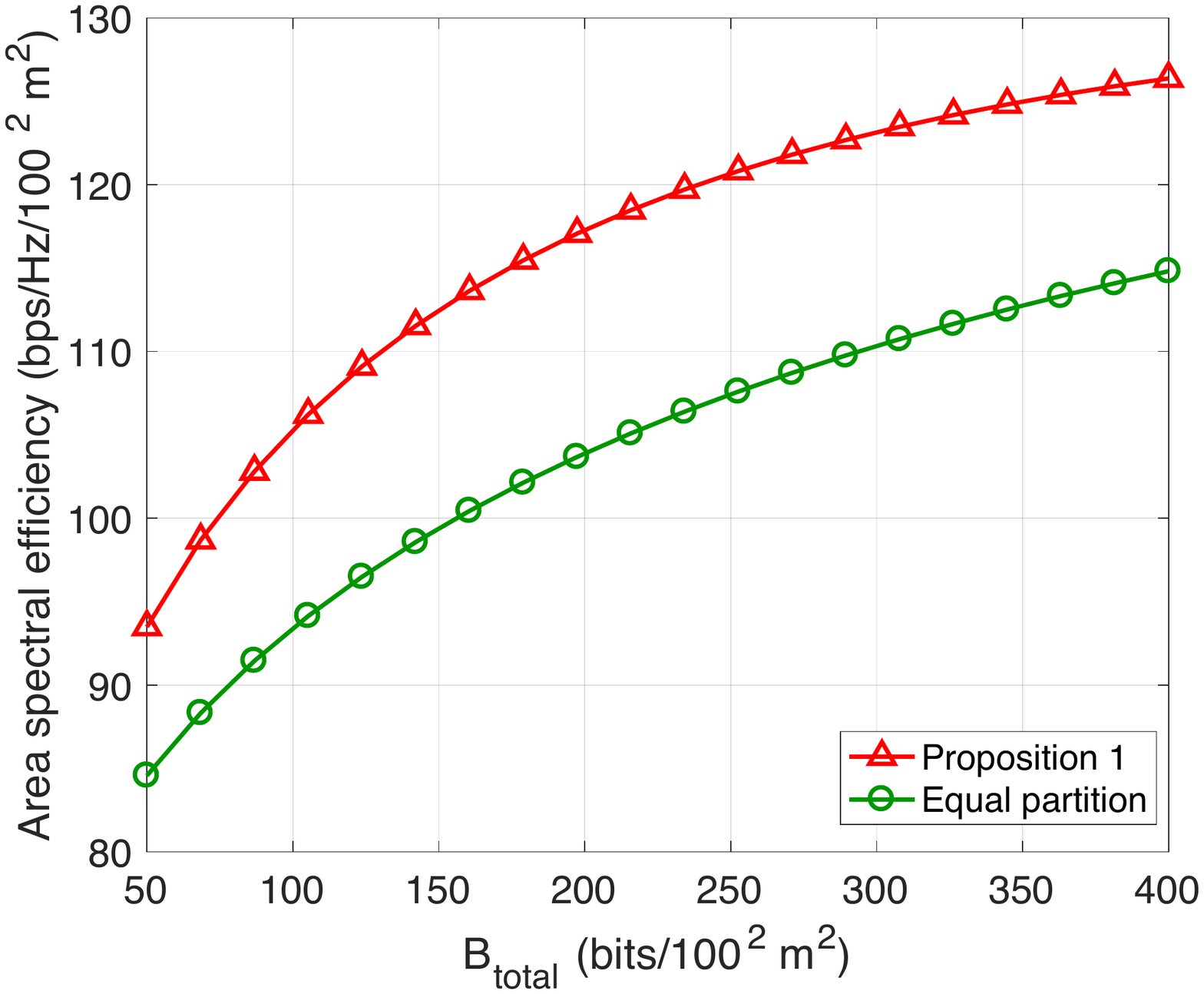}}}  \\ 
\mbox{(a)} &
\mbox{(b)} 
\end{array}$
\caption{An area spectral efficiency comparison in a non-cooperative HetNet. A $3$-tier HetNet is assumed. The simulation parameters are as follows. In (a), $\bar N_K = \{8,6,6\}$, $\bar \lambda_K = \{0.5\lambda_{\rm ref} , 5\lambda_{\rm ref} , 40\lambda_{\rm ref} \}$ where $\lambda_{\rm ref} = 10^{-4}/\pi$, $\bar P_K = \{20 , 15, 10 \}{\rm dBm}$, $\bar S_K = \{0, 3, 5\}{\rm dB}$, and $\beta = 4$. In (b), the other parameters are same with (a) except $\bar \lambda_K = \{0.5\lambda_{\rm ref} , 10\lambda_{\rm ref} , 80\lambda_{\rm ref} \}$. 
}
   \label{fig:noncoopt1}
\end{figure} 

First, we assume the non-cooperative case. The area spectral efficiency comparison between Proposition \ref{prop:het} and the baseline method is depicted in Fig.~\ref{fig:noncoopt1}. We note that the area spectral efficiency in Fig.~\ref{fig:noncoopt1} is generated by using an exact expression \eqref{eq:rate_het}, not a lower bound \eqref{eq:lower_bound_noncoopt}.
The baseline method is the per-tier equal partition, where the feedback of the $k$-tier is determined as $B_k = B_{\rm total}/K/\lambda_k$, so that the feedback consumed in each tier is same each other, i.e., $\lambda_1 B_1 = ... =  \lambda_K B_K = B_{\rm total}/K$. 
The other system parameters assumed in the simulations are described in the caption of Fig.~\ref{fig:noncoopt1}. The main difference between Fig.~\ref{fig:noncoopt1}-(a) and (b) is the densities, where Fig.~\ref{fig:noncoopt1}-(b) assumes more dense HetNets.  
As shown in Fig.~\ref{fig:noncoopt1}, Proposition \ref{prop:het} increases the area spectral efficiency by $11.3\%$ in Fig.~\ref{fig:noncoopt1}-(a) and by $12\%$ in Fig.~\ref{fig:noncoopt1}-(b).
In both cases, we observe that Proposition \ref{prop:het} provides the meaningful gains compared to the equal partition. We expect that more gains can be obtained when a HetNet becomes dense, i.e., if $\lambda_3 \gg \lambda_1$. This is because, assuming that $\lambda_3$ is large, only a few amount of feedback is allocated to the third tier in the equal partition. This causes that the BSs in the third tier only provides marginal spectral efficiency due to lack of accurate CSIT. On the contrary, in Proposition \ref{prop:het}, the appropriate amount of feedback can be allocated to the third tier even when $\lambda_3$ is large, leading to the spectral efficiency gain. 
As described in Remark \ref{remark:water}, Proposition \ref{prop:het} does not depend on instantaneous SIR, so that there is no need to change the allocated feedback frequently depending on instantaneous SIR. For this reason, the complexity of Proposition \ref{prop:het} is almost equivalent to that of the equal partition. 

\begin{figure}[t]
\centering
$\begin{array}{cc}
{\resizebox{0.49\columnwidth}{!}
{\includegraphics{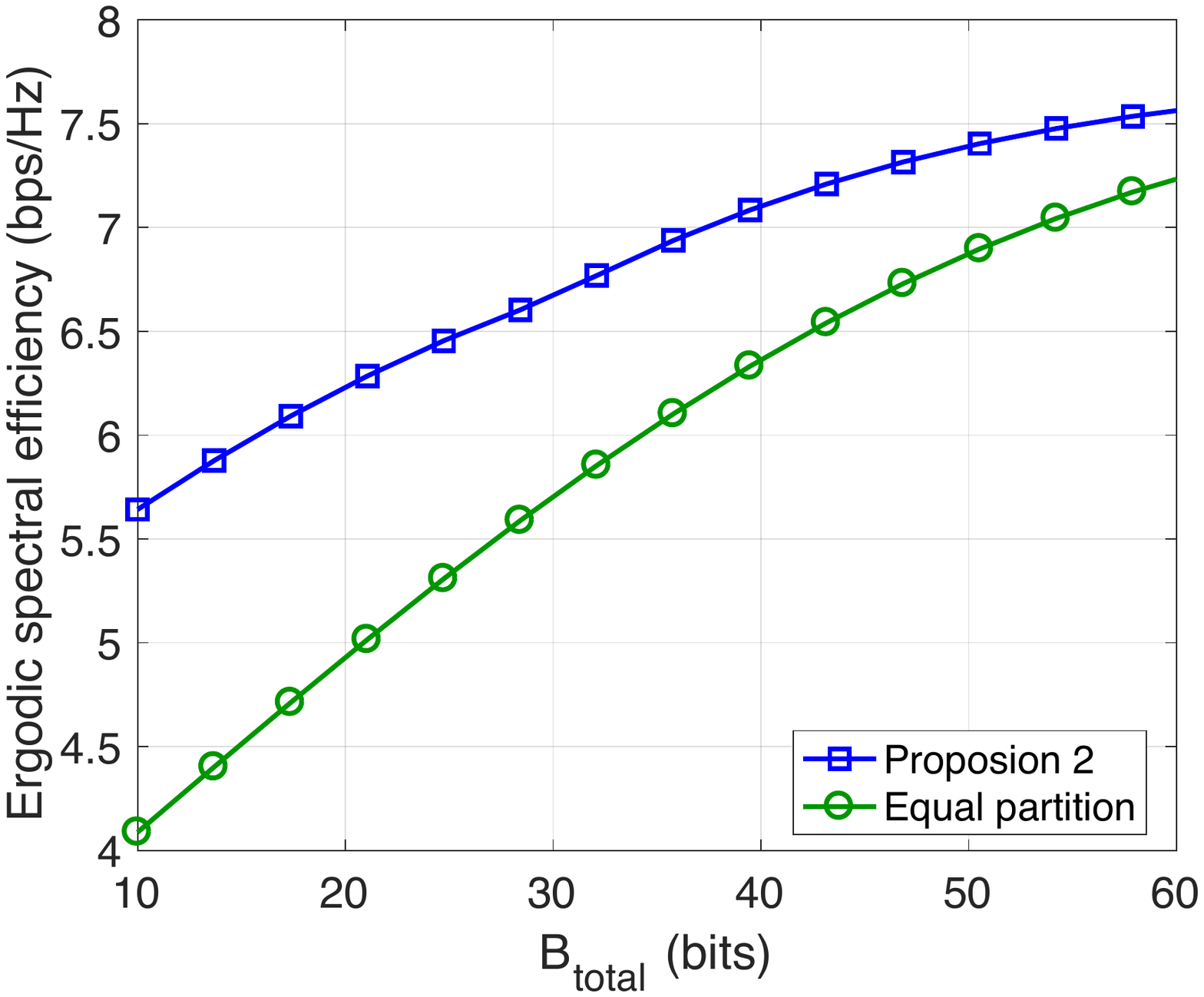}}}  &
{\resizebox{0.47\columnwidth}{!}
{\includegraphics{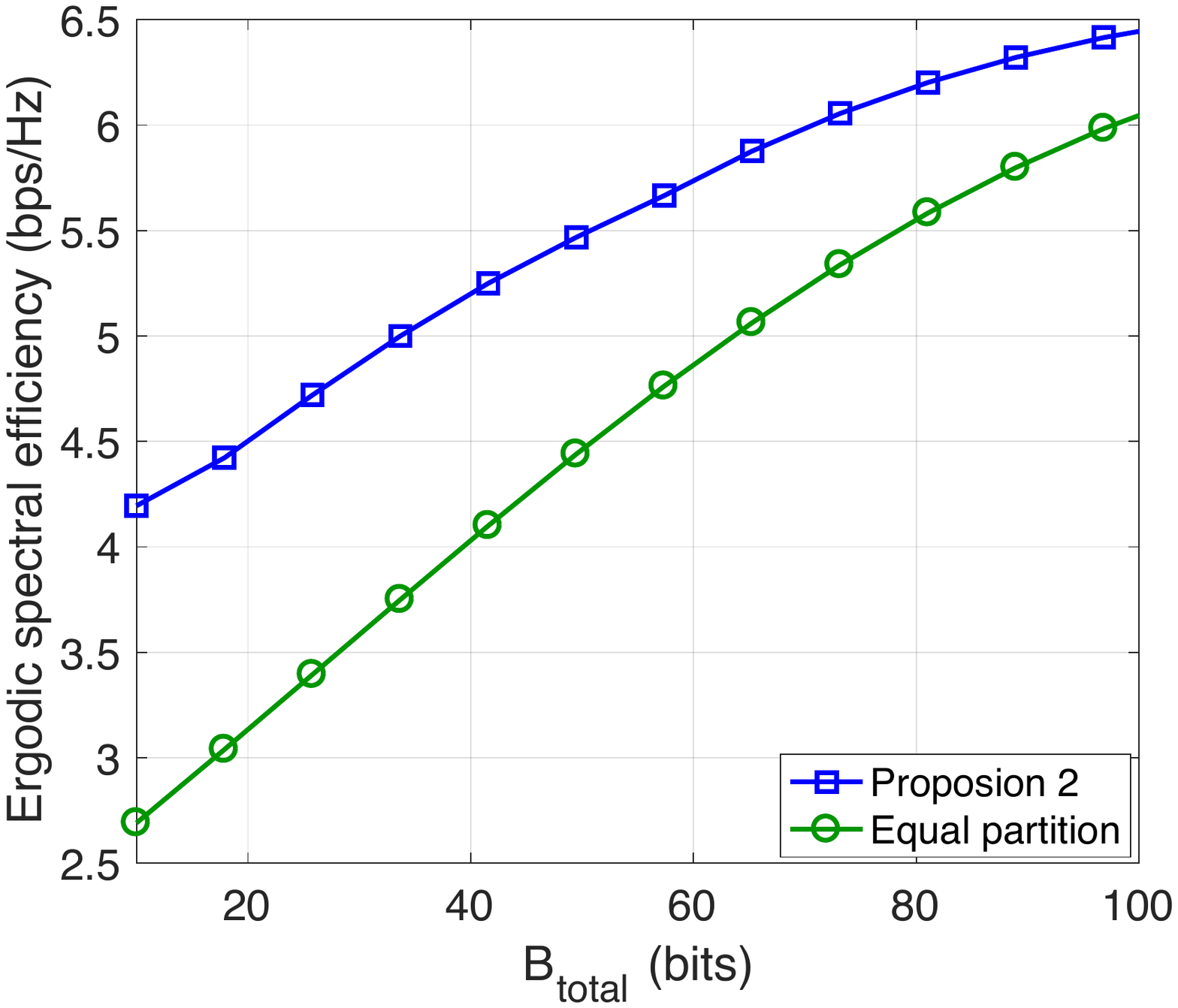}}}  \\ 
\mbox{(a)} &
\mbox{(b)} 
\end{array}$
\caption{The ergodic spectral efficiency comparison in a cooperative HetNet. A $3$-tier HetNet is assumed. The simulation parameters are as follows: In (a), $\bar N_K = \{L,L,L\}$, $L = 4$, $\bar \lambda_K = \{1\lambda_{\rm ref} , 10\lambda_{\rm ref} , 20\lambda_{\rm ref} \}$ where $\lambda_{\rm ref} = 10^{-4}/\pi$, $\bar P_K = \{20 , 15, 10 \}{\rm dBm}$, $\bar S_K = \{0, 3, 5\}{\rm dB}$, $\bar\delta_{1,L} = \{0.1, 0.01, 0.001\}$, $\pi(i_1)=1$, $\pi(i_L)=2$, and $\beta = 4$. In (b), the other parameters are same except that $L = 5$ and $\bar \delta_{1, L} = \{0.2, 0.04, 0.008, 0.0016\}$.
}
   \label{fig:coopt}
\end{figure} 

Next, we assume the cooperative case. We compare the ergodic spectral efficiency of Proposition \ref{prop:scheme1} and the baseline method in Fig.~\ref{fig:coopt}, whose caption includes the simulation setting. Similar to the non-cooperative case, the baseline method is the equal partition, where the total feedback is equally partitioned to each of intra-cluster BS, i.e., $B_2=..=B_L = B_{\rm total}/(L-1)$. 
In Fig.~\ref{fig:coopt}-(a), we have $38.2\%$ spectral efficiency gain by using Proposition \ref{prop:scheme1} at $B_{\rm total} = 10$, and in Fig.~\ref{fig:coopt}-(b), we have $56.2\%$ gain at $B_{\rm total} = 10$. We observe that Proposition \ref{prop:scheme1} provides more gains when 1) $L$ increases or 2) $B_{\rm total}$ decreases. This is because, when $L$ increases or $B_{\rm total}$ decreases, the equal partition allocates smaller amount of feedback to the strong BSs whose $\delta_{1, \ell}$ is large. Then, due to lack of sufficient feedback, the interference from those strong BSs is not mitigated well, resulting in significant spectral efficient loss. 
On the contrary, by using Proposition \ref{prop:scheme1}, the appropriate amount of feedback is allocated to each BS proportional to $\delta_{1, \ell}$, so that considerable spectral efficiency gain is obtained even when $L$ increases or $B_{\rm total}$ decreases.

\begin{figure}[t]
\centering
$\begin{array}{cc}
{\resizebox{0.48\columnwidth}{!}
{\includegraphics{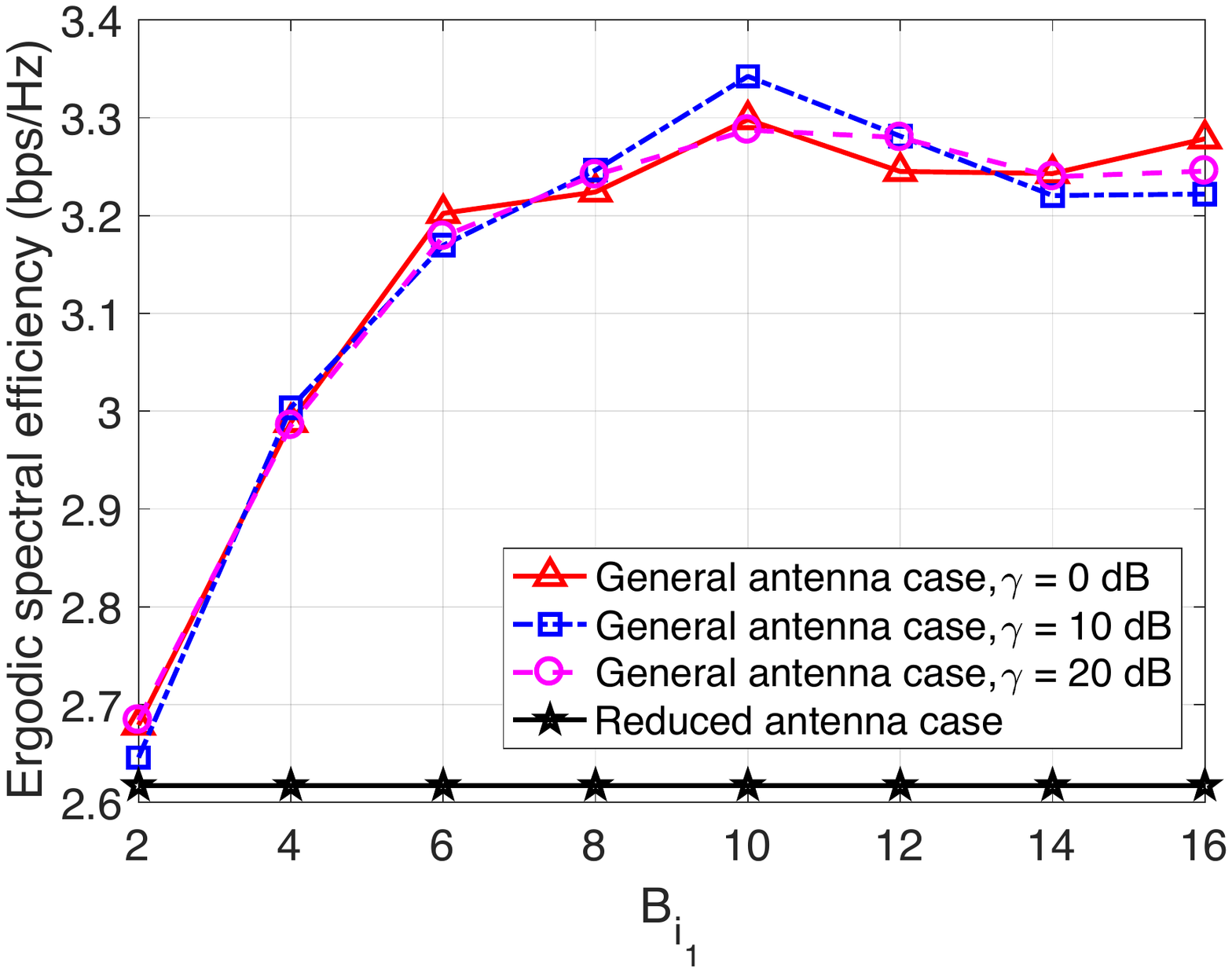}}}  &
{\resizebox{0.48\columnwidth}{!}
{\includegraphics{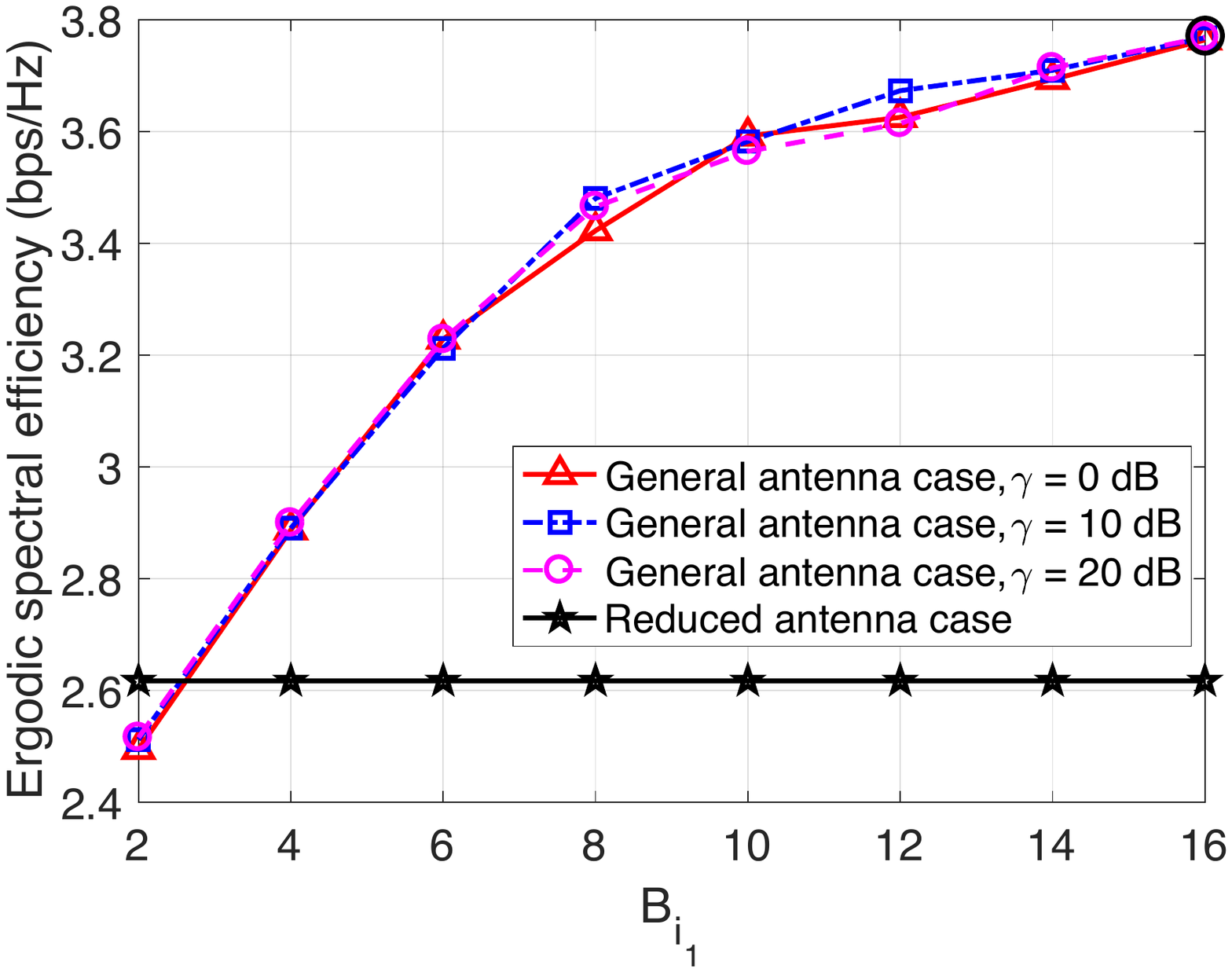}}}  \\ 
\mbox{(a)} &
\mbox{(b)} 
\end{array}$
\caption{The ergodic spectral efficiency comparison in the general number of antennas case. A $3$-tier HetNet is assumed. The simulation parameters are as follows: In (a), $\bar N_K = \{8, 6, 4\}$, $L = 4$, $\bar \lambda_K = \{1\lambda_{\rm ref} , 5\lambda_{\rm ref} , 20\lambda_{\rm ref} \}$ where $\lambda_{\rm ref} = 10^{-4}/\pi$, $ \bar P_K = \{20,15,10\}{\rm dBm}$, $\bar S_K = \{0, 3, 5\}{\rm dB}$, $\bar\delta_{1,L} = \{0.1, 0.01, 0.001\}$, $\pi(i_1) = 2$,  $\pi(i_2) = 3$, $\pi(i_3) = 2$, $\pi(i_4) = 1$ and $\beta = 4$. The total feedback is $B_{\rm total} = 16$. In (b), the other parameters are same except that $\pi(i_1) = 1$, $\pi(i_2) = 1$, $\pi(i_3) = 2$, $\pi(i_4) = 3$. In the reduced number of antennas case, we assume $\bar N_K = \{L, L, L\}$ and use Proposition \ref{prop:scheme1}.
}
   \label{fig:coopt_general}
\end{figure} 

Subsequently, we consider the general antenna case. The result is depicted in Fig.~\ref{fig:coopt_general} and the parameter setting is described in its caption. Since it is hard to obtain an analytical expression for the general antenna case, we rely on numerical simulation to produce results. In the simulation, we assume that the intra-cluster conditions are fixed, so that $\bar \delta_{1,L}$ and $N_{\pi(i_1)},...,N_{\pi(i_L)}$ are given. 
As shown in Fig.~\ref{fig:coopt_general}-(a), the ergodic spectral efficiency is maximized at $B_{i_1}^{\star} = 10$ with $\gamma = 10{\rm dB}$. The feedback for the other intra-clutser BSs is $\{B_{i_2}^{\star}, B_{i_3}^{\star}, B_{i_4}^{\star}\} = \{6,0,0\}$. Compared to the reduced number of antennas case, the general antenna case with Proposition \ref{prop:scheme_general} increases the ergodic spectral efficiency by $25\%$. 
In Fig.~\ref{fig:coopt_general}-(b), the ergodic spectral efficiency is maximized at $B_{i_1}^{\star} = 16$ and there is no observable difference in $\gamma$. Since $B_{i_1}^{\star} = B_{\rm total}$, no feedback is used for the other intra-cluster BSs. The performance gains of the general antenna case compared to the reduced number of antennas case is $41.2\%$. 
We point out that the main difference between Fig.~\ref{fig:coopt_general}-(a) and (b) is the number of antennas of the associated BS. Specifically, in Fig.~\ref{fig:coopt_general}-(a), the number of the associated BS is $6$, and $8$ in Fig.~\ref{fig:coopt_general}-(b). For this reason, the potential desired channel gain is large in (b), leading to allocate more feedback to the associated BS, i.e., $B_{i_1}$ increases. 

\begin{figure}[t]
\centering
$\begin{array}{cc}
{\resizebox{0.49\columnwidth}{!}
{\includegraphics{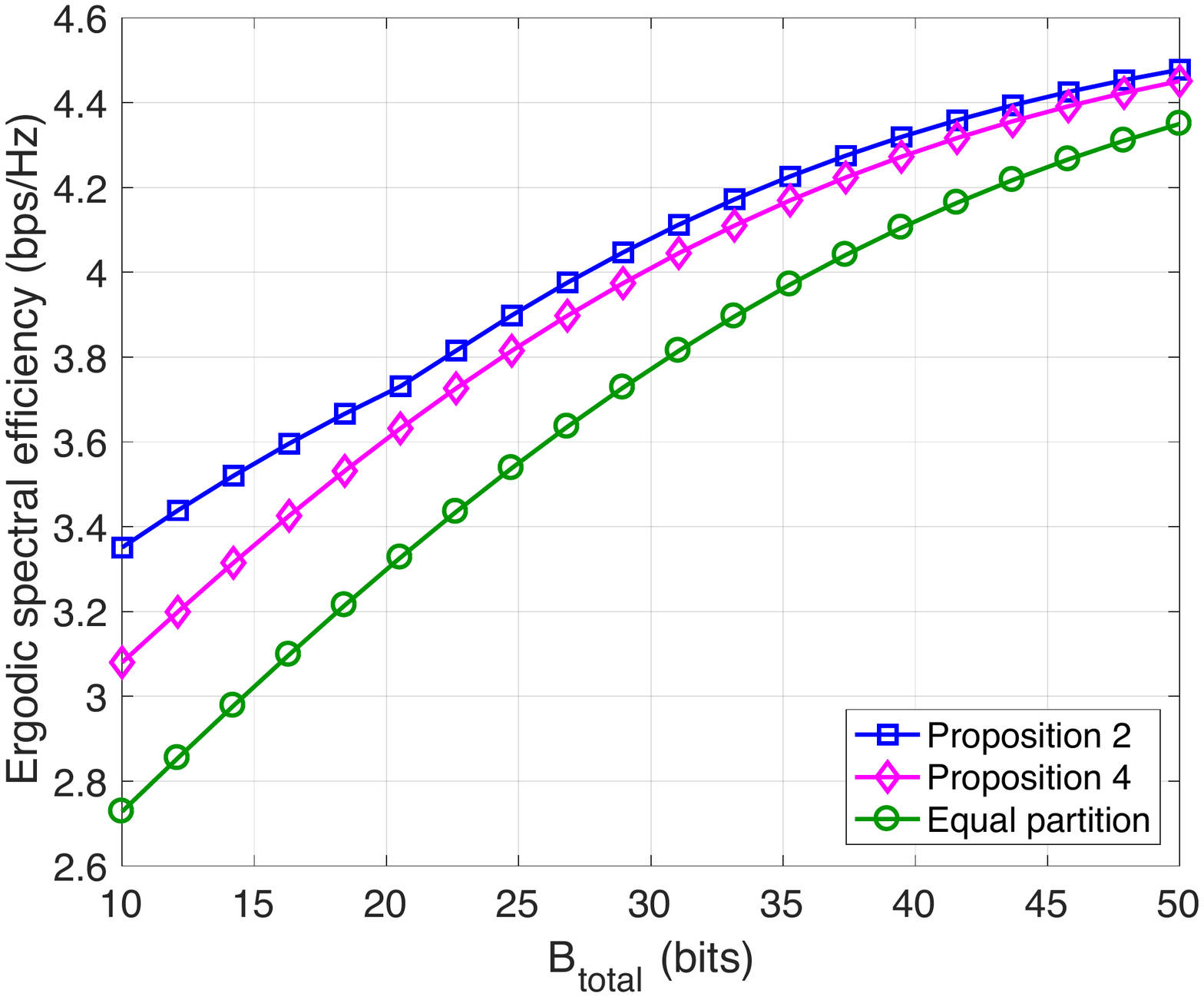}}}  &
{\resizebox{0.47\columnwidth}{!}
{\includegraphics{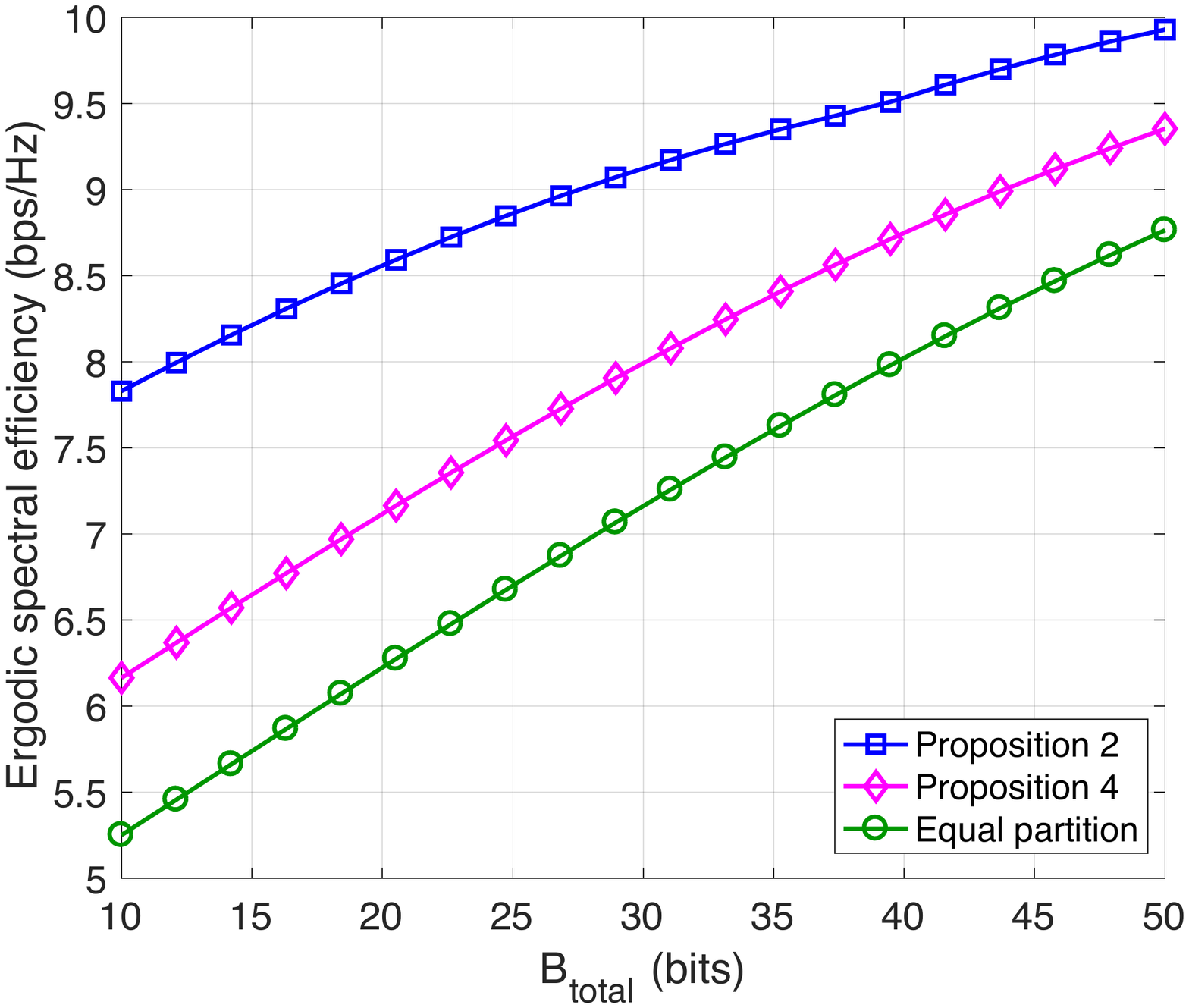}}}  \\ 
\mbox{(a)} &
\mbox{(b)} 
\end{array}$
\caption{The ergodic spectral efficiency comparison in a single-tier cooperative scenario. A $3$-tier HetNet is assumed. The simulation parameters are as follows: In (a), $N=L = 4$, $\lambda = 10^{-4}/\pi$, $ P = 20{\rm dBm}$, $\bar\delta_{1,L} = \{0.2, 0.04, 0.008\}$, and $\beta = 4$. In (b), the other parameters are same except that $\bar \delta_{1, L} = \{0.05,  0.0025, 0.0001\}$.
}
   \label{fig:coopt_special}
\end{figure}

Finally, we consider the single-tier case. We compare the ergodic spectral efficiency of Proposition \ref{prop:scheme1}, Proposition \ref{prop:scheme2}, and the equal partition in a single-tier network. We depict the results in Fig.~\ref{fig:coopt_special}, whose caption includes the simulation setting. 
As shown in Fig.~\ref{fig:coopt_special}, at $B_{\rm total} = 10$, Proposition \ref{prop:scheme1} provides $23.1\%$ spectral efficiency gain in Fig.~\ref{fig:coopt_special}-(a) and $49.1\%$ gain in Fig.~\ref{fig:coopt_special}-(b). Proposition \ref{prop:scheme2} provides smaller gain than Proposition \ref{prop:scheme1}, specifically $13.2\%$ gain in Fig.~\ref{fig:coopt_special}-(a) and $17.1\%$ gain in Fig.~\ref{fig:coopt_special}-(b). 
The main reason of this performance gap is that Proposition \ref{prop:scheme1} allocates feedback more dynamically than Proposition \ref{prop:scheme2}. Specifically, since Proposition \ref{prop:scheme1} allocates feedback depending on intra-cluster geometry, two different clusters have different feedback allocation unless they have the same intra-cluster BSs' conditions. On the contrary, Proposition \ref{prop:scheme2} allocates feedback only depending on intra-cluster BSs' indices, therefore two different clusters have the same feedback allocation even if their intra-cluster geometries are different. 
 Despite of the decreased gain, Proposition \ref{prop:scheme2} can be useful since it uses fixed amount of feedback to each BS independent to the intra-cluster BS power. Since Proposition \ref{prop:scheme2} is a function of $\beta$ and $B_{\rm total}$, we do not have to modify the allocated feedback unless $\beta$ or $B_{\rm total}$ change. 

\section{Conclusions}
In this paper, we studied adaptive feedback partition problems in $K$-tier HetNets. We considered the non-cooperative and cooperative HetNet operations. Using stochastic geometry, we characterized the SIR CCDF and the ergodic spectral efficiency mainly as functions of the feedback and other relevant system parameters. 
Leveraging the obtained expressions, we formulated the feedback partition problems and proposed solutions. 
The simulation results showed that the proposed feedback partitions bring some gains in the spectral efficiency compared to the equal partition. 
Our major findings are summarized as follows. In the non-cooperative case, the feedback is useful only if the corresponding mean interference is small enough. In the cooperative case, allocating the feedback proportional to the intra-cluster BS power is efficient. 
We also showed that the proposed feedback allocation in the cooperative case is also useful for the general antenna case. 
Further, assuming a single-tier network as a special case, the effective cluster size increases with the square root of the total feedback.

There are several possible directions for future work. One is to consider the channel quality information (CQI) feedback. Specifically, the obtained CQI can be exploited to select a preferred user, so that scheduling is also involved \cite{khoshnevis:jsac:13}. Interpreting this in a random network model is promising. 
In another direction, different sources of CSIT inaccuracy can be considered. While we only focus on the limited feedback, feedback delay \cite{hao:tcom:16} or the channel estimation error \cite{zaib:tcom:16} also degrades the CSIT accuracy, so that it is interesting to incorporate their effects into the performance characterization.  

\appendices
\section{Proof of Theorem 1} \label{appen:theo1}
Defining $I_i = \sum_{{\bf{d}}_j^i \in \Phi_i \backslash {\bf{d}}_1^{k}}   \frac{P_{i}}{P_k} \left\|{\bf{d}}_1^k \right\|^{\beta} \left\| {\bf{d}}_j^{i} \right\|^{-\beta} \left| ({\bf{h}}_{1,j}^{i})^*{\bf{v}}_j^{i} \right|^2 $, we rewrite the SIR CCDF  \eqref{eq:sir_ccdf_het} as 
\begin{align}
&\mathbb{P} \left[ \left|\left({\bf{h}}_{1,1}^{k} \right)^* \hat {\bf{h}}_{1,1}^{k}\right|^2 > {\gamma} {\sum_{i=1}^{K}I_i}  \right] 
= \mathbb{P}\left[\left\|{\bf{h}}_{1,1}^{k} \right\|^2 > \frac{\gamma}{\cos^2\theta_1 } {\sum_{i=1}^{K}I_i} \right]
\nonumber \\
& \mathop {=} \limits^{(a)} \mathbb{E}\left[ \mathbb{E}\left[\left.   \sum_{m=0}^{N_{k}-1} \frac{\gamma^m}{m!}  \frac{( \sum_{i=1}^{K}I_i) ^m }{\cos^{2m}\theta_1 }  \exp\left( -\frac{\gamma\sum_{i=1}^{K} I_i }{\cos^2 \theta_1} \right) \right| \cos^2 \theta_1 , \{I_i\}_{i=1,...,K} \right] \right] \nonumber \\
&\mathop = \limits^{(b)} \sum_{m=0}^{N_{k}-1} \frac{\gamma^m}{m!} (-1)^m \left. \frac{\partial^m \CMcal{L}_{I/\cos^2\theta_1}(s)}{\partial s^m} \right| _{s = {\gamma}}, 
\end{align}
where (a) follows that $\left\| {\bf{h}}_{1,1}^{k} \right\|^2$ follows the Chi-squared distribution with $2N_{k}$ degrees of freedom and (b) follows the derivative property of the Laplace transform, which is
$\mathbb{E}\left[ X^m e^{-sX}\right] = (-1)^m \partial^m \CMcal{L}_X (s) / \partial s^m$ and $I = \sum_{i=1}^{K}I_i$. Now we obtain $\CMcal{L}_{I/\cos^2\theta_1}(s)$. 
\begin{align} \label{eq:laplace_hetnet}
\CMcal{L}_{I/\cos^2\theta_1}(s) 
&= \mathbb{E}\left[e^{-\frac{s}{\cos^2 \theta_1}{\left\| {\bf{d}}_{1}^{k} \right\|^{\beta}   { \sum_{i=1}^{K} \sum_{{\bf{d}}_j^i \in \Phi_i \backslash {\bf{d}}_1^{k}}  \frac{ P_{i}}{P_{k}} \left\| {\bf{d}}_j^{i} \right\|^{-\beta} \left| ({\bf{h}}_{1,j}^{i})^*{\bf{v}}_j^{i} \right|^2 }}{}} \right], {\rm Denoting }\; z = \frac{s}{\cos^2\theta_1},\nonumber \\
& \mathop {=} \limits^{(a)} \prod_{i=1}^{K} \mathbb{E}_{R, \cos^2\theta_1}\left[ \mathbb{E}_{\Phi_i }\left[\left. \prod_{{\bf{d}}_j^i \in \Phi_i \backslash {\bf{d}}_1^k}
\frac{1}{1+z\frac{P_i}{P_{k}}R^{\beta}\left\|{\bf{d}}_j^{i}  \right\|^{-\beta}}
\right| \left\| {\bf{d}}_1^{k} \right\| = R, \cos^2\theta_1 \right] \right]  \nonumber \\
&= \prod_{i=1}^{K}\mathbb{E}_{R, \cos^2\theta_1}\left[\exp\left(-2\pi \lambda_k \int_{R_i}^{\infty} \left(1 - \frac{1}{1+z\frac{P_i}{P_{k}} R^{\beta} r^{-\beta}} \right) r {\rm d} r\right) \right],
\end{align}
where $R_i = \left( \frac{P_i S_i}{P_{k} S_k}\right)^{1/\beta} R$ and (a) follows the independence between each tier. By leveraging the proof of Theorem 1 in \cite{hsjo:2012_twc}, \eqref{eq:laplace_hetnet} is calculated as
\begin{align}
&\CMcal{L}_{I/\cos^2\theta_1}(s) = \prod_{i=1}^{K} \mathbb{E}_{R, \cos^2\theta_1}\left[ \exp\left(-\pi \lambda_i \left(\frac{P_i S_i}{P_{k} S_k} \right)^{2/\beta}R^{2} \CMcal{D}\left(\frac{s}{\cos^2\theta_1} \left( \frac{S_k}{S_i}\right) , \beta\right) \right) \right] \nonumber \\
&= \int_{0}^{2^{-\frac{B_{k}}{N_{k}-1}}} {2^{B_k} (N_k-1)x^{N_k-2}}\cdot \frac{\sum_{i=1}^{K} \lambda_i \left(\frac{P_i S_i}{P_k S_k} \right)^{2/\beta} }{\sum_{i=1}^{K}\lambda_i \left(\frac{P_i S_i}{P_k S_k} \right)^{2/\beta} \left[1+\CMcal{D}(\frac{s}{1-x}\left(\frac{S_k}{S_i} \right), \beta) \right]} {\rm d} x.
\end{align}
This completes the proof. $\hfill \blacksquare$

\section{Proof of Theorem 2} \label{appen:theo2}
For the desired channel, since ${\bf{v}}_{i_1}$ is independent to ${\bf{h}}_{1,i_1}^*$ and isotropic, $\left|{\bf{h}}_{1,i_1}^* {\bf{v}}_{i_1} \right|^2$ follows the exponential distribution with unit mean. For the intra-cluster interference $I_{\rm In }$ ($2 \le \ell \le L$), $\left| {\bf{h}}_{1,i_{\ell}}^* {\bf{v}}_{i_\ell}\right|^2$ is equivalent to $\left\| {\bf{h}}_{1,i_{\ell}}\right\|^2 \sin^2\theta_{i_\ell} \beta\left( 1,N_{\pi(i_{\ell})}-2\right)$, where $\beta\left(1,N_{\pi(i_{\ell})}-2 \right)$ is a Beta random variable that follows ${\rm Beta}\left(1,N_{\pi(i_{\ell})}-2 \right)$ and $\sin^2\theta_{i_\ell}$ follows \eqref{def:q_error_cdf}. We note that this is from the derivation in \cite{yoo:jsac:07}.
By the distribution of a product of a Gamma random variable and a Beta random variable \cite{prodgammabeta},
$\left\| {\bf{h}}_{1,{i_\ell}}\right\|^2 \sin^2\theta_{i_\ell} \beta\left( 1,N_{\pi(i_{\ell})}-2\right)$ boils down to $\Gamma\left(1, \delta \right)$ with $\delta = 2^{-\frac{B_{i_\ell}}{N_{\pi(i_{\ell})}-1}}$. Since we assume that the intra-cluster BS only uses $L$ antennas, $N_{\pi(i_{\ell})} = L$ for $i_{\ell} \in \CMcal{C}$. Accordingly, the Laplace transform of the intra-cluster interference fading $\left|{\bf{h}}_{1, i_{\ell}}^* {\bf{v}}_{i_{\ell}} \right|^2$ is
\begin{align} \label{eq:laplace_int_coop}
\mathbb{E}\left[e^{-s\left| {\bf{h}}_{1, i_{\ell}}^* {\bf{v}}_{i_{\ell}} \right|^2} \right] = \frac{1}{1+ s2^{-\frac{B_{i_{\ell}}}{L-1}}}.
\end{align}
Finally, for the out-of-cluster interference links $I_{\rm Out}$ ($L \le \ell $), $\left| {\bf{h}}_{1,{i_{\ell}}}^* {\bf{v}}_{i_{\ell}} \right|^2$ is an exponential random variable with unit mean due to the random beamforming effect.
By leveraging these results, the SIR CCDF \eqref{def:sir_ccdf_coop} is written as follows
\begin{align} 
&\mathbb{P}\left[\frac{ P_{\pi(i_1)}\left\| {\bf{d}}_{i_1} \right\|^{-\beta}  \left| {\bf{h}}_{1,i_1}^* {\bf{v}}_{i_1}\right|^2}{I_{\rm In} + I_{\rm Out}} \ge \gamma \right] \nonumber \\
&\mathop {=}^{(a)} \mathbb{E}\left[\prod_{i_{\ell} \in \CMcal{C}\backslash i_1} e^{-\gamma  \delta_{1,\ell}\left|{\bf{h}}_{1,i_{\ell}}^* {\bf{v}}_{i_\ell}  \right|^2} \right] \cdot \mathbb{E} \left[\prod_{j \in \mathbb{N}\backslash \CMcal{C}} e^{-\gamma \frac{P_{\pi(j)}}{P_{\pi(i_1)}} \left\| {\bf{d}}_{i_1}\right\|^{\beta} \left\| {\bf{d}}_j \right\|^{-\beta}\left|  {\bf{h}}_{1,j}^* {\bf{v}}_j \right|^2   } \right] \nonumber \\
&\mathop {=}^{(b)} \underbrace{\prod_{i_\ell \in \CMcal{C}\backslash i_1} \left(\frac{1}{1+\gamma \delta_{1,\ell}  2^{-\frac{B_{i_\ell}}{L-1}} } \right)}_{(c)}  \!\cdot\!   \underbrace{\prod_{i = 1}^K \mathbb{E}_R\left[\mathbb{E} \left[\left.
\prod_{{\bf{d}}_j^i \in \Phi_i \backslash \CMcal{B}\left(0, R_i \right)} \left(\frac{1}{1+ \gamma \delta_{1,L} \frac{P_i}{P_{i_L}} R^{\beta} \left\| {\bf{d}}_i\right\|^{-\beta} }\right) \right|  \left\|{\bf{d}}_{i_L}  \right\| = R \right] \right] }_{(d)}
\end{align}
where $R_i = \left( \frac{P_i S_i}{P_{k} S_k}\right)^{1/\beta} R$ with $\left\| {\bf{d}}_{i_L} \right\| = R$. We explain each step of the above derivation as follows: (a) follows that the desired link's signal power ($\left|{\bf{h}}_{1, i_1}^* {\bf{v}}_{i_1} \right|^2$) is distributed as the exponential distribution with unit mean and there is independence between the intra-cluster interference and the out-of-cluster interference. (b) comes from \eqref{eq:laplace_int_coop} and $\left| {\bf{h}}_{1,{i_{\ell}}}^* {\bf{v}}_{i_{\ell}} \right|^2 \sim \rm{exp}(1)$ for $\ell \ge L$. We note that (c) indicates the Laplace transform of the intra-cluster interference $I_{\rm In}$ and (d) is the Laplace transform of the out-of-cluster interference. 
Since we assume the fixed intra-cluster BSs' conditions, (c) involves no randomness. 
Focusing on (d), we assume that the furthest BS in the coordination set $\CMcal{C}$ is included in the $k$-th tier, i.e., $\pi(i_L) = k$. Then, we have the following due to the probability generating functional of the PPP. 
\begin{align} \label{eq:out_int_coop}
({d}) 
&= \mathbb{E}_{R}\left[ \exp\left(- \sum_{i=1}^{K} \pi \lambda_i R^2 \left( \frac{P_i S_i}{P_k S_k} \right)^{2/\beta} \CMcal{D}\left(\gamma \delta_{1,L} \left(\frac{S_k}{S_i}\right), \beta \right) \right) \right],
\end{align}
where $\CMcal{D}\left( \cdot, \cdot \right)$ is defined in \eqref{eq:het:dfunc}. 
Now we marginalize \eqref{eq:out_int_coop} with respect to $R$, whose the distribution function is obtained in Lemma \ref{lem:het_L_closest}.
The Laplace transform of the out-of-cluster interference is
\begin{align} \label{eq:laplace_oci_unconditioned}
&\mathbb{E}_{R}\left[ \exp\left(- \sum_{i=1}^{K} \pi \lambda_i R^2 \left( \frac{P_i S_i}{P_k S_k} \right)^{2/\beta} \CMcal{D}\left(\gamma \delta_{1,L} \left(\frac{S_k}{S_i}\right), \beta \right) \right) \right] \nonumber \\
&= \left( \frac{\sum_{i=1}^{K} \lambda_i \left(\frac{P_i S_i}{P_k S_k} \right)^{2/\beta} }{\sum_{i=1}^{K} \lambda_i \left(\frac{P_i S_i}{P_k S_k} \right)^{2/\beta}\left[1+ \CMcal{D}\left(\gamma \delta_{1, L} \left(\frac{S_k}{S_i} \right), \beta \right) \right]} \right)^L,
\end{align}
which completes the proof. 
\hfill $\blacksquare$

\bibliographystyle{IEEEtran}
\bibliography{ref_adaptivefb_jour}

\end{document}